\def\psfancypar#1#2{\begingroup\def\par{\endgraf\endgroup\lineskiplimit=0pt}
               \setbox2=\hbox{\large\sc #2}
               \newdimen\tmpht \tmpht \ht2 \advance\tmpht by \baselineskip
               \font\hhuge=Times-Bold at \tmpht
               \setbox1=\hbox{{\hhuge #1}}
               \count7=\tmpht \count8=\ht1
               \divide\count8 by 1000 \divide\count7 by \count8 
               \tmpht=.001\tmpht\multiply\tmpht by \count7 
               \font\hhuge=Times-Bold at \tmpht
               \setbox1=\hbox{{\hhuge #1}}
               \noindent
                \hangindent1.05\wd1
               \hangafter=-2 {\hskip-\hangindent
               \lower1\ht1\hbox{\raise1.0\ht2\copy1}%
                \kern-0\wd1}\copy2\lineskiplimit=-1000pt}
\newcommand{\Phibf}{\mbox{${\bf \Phi}$}}
\newcommand{\E}{\mbox{{\rm E}}}
\newcommand{\abf}{\mbox{${\bf a}$}}
\def\boxit#1{\vbox{\hrule\hbox{\vrule\kern3pt
        \vbox{\kern3pt#1\kern3pt}\kern3pt\vrule}\hrule}}
\def\reals{ { {\rm  I \kern-0.15em R }  } }
\def\complex{ {\,{{\rm C} \kern-0.50em \raise0.20ex {  |}}\, }}
\def\lambdabf{\hbox{\boldmath$\lambda$\unboldmath}}
\def\mubf{\hbox{\boldmath$\mu$\unboldmath}}
\def\Sigmabf{\hbox{$\bf \Sigma$}}
\def\Pibf{{\bf \Pi}}
\def\abf{{\bf a}}
\def\hbf{{\bf h}}
\def\nbf{{\bf n}}
\def\pbf{{\bf p}}
\def\sbf{{\bf s}}
\def\ubf{{\bf u}}
\def\vbf{{\bf v}}
\def\wbf{{\bf w}}
\def\xbf{{\bf x}}
\def\ybf{{\bf y}}
\def\zbf{{\bf z}}
\def\xbf{{\bf x}}
\def\ybf{{\bf y}}
\def\Abf{{\bf A}}
\def\Bbf{{\bf B}}
\def\Hbf{{\bf H}}
\def\Ibf{{\bf I}}
\def\Qbf{{\bf Q}}
\def\Rbf{{\bf R}}
\def\Ubf{{\bf U}}
\def\Vbf{{\bf V}}
\def\Xbf{{\bf X}}
\def\Cc{{\cal C}}
\def\Fc{{\cal F}}
\def\Hc{{\cal H}}
\def\Rc{{\cal R}}
\def\be{\vskip .3cm \begin{equation}}
\def\ee{\end{equation} \vskip .4cm \noindent}
\newcommand{\R}{\mbox{$\hat {\bf R}_{N}$}}
\def\Rxx{\Rbf_{\ssstyle X\kern-.1em X}}
\let\ssstyle=\scriptscriptstyle
\def\Kout{\setbox1=\hbox{\Huge\bf K}\hbox to
1.05\wd1{\hspace{.05\wd1}
\def\Sout{\setbox1=\hbox{\Huge\bf S}\hbox to 1.05\wd1{\hspace{.05\wd1}

  \ifx\LabelFigloaded\MYundefined\relax
  \else
    \message{ !!! labelfig.tex ALREADY loaded !!!}
   \fi

  \def\LabelFigloaded{\relax}


  \chardef\LabelFigCatAt\the\catcode`\@
  \catcode`\@=11

 \let\LabelFigwlog@ld\wlog
 \def\wlog#1{\relax}

 \ifx\\\MYundefined@
    \let\\\relax
 \fi


  \def\ms@g{\immediate\write16}

 \def\N@wif{\csname newif\endcsname }
 \def\Temp@ {\N@wif\ifIN@}
 \ifx\INN@\MYundefined@
    \else \let\Temp@\relax
 \fi
 \Temp@

  \def\IN@{\expandafter\INN@\expandafter}
  \long\def\INN@0#1@#2@{\long\def\NI@##1#1##2##3\ENDNI@
    {\ifx\m@rker##2\IN@false\else\IN@true\fi}%
     \expandafter\NI@#2@@#1\m@rker\ENDNI@}
  \def\m@rker{\m@@rker}
 
  \newtoks\Initialtoks@  \newtoks\Terminaltoks@
  \def\SPLIT@{\expandafter\SPLITT@\expandafter}
  \def\SPLITT@0#1@#2@{\def\TTILPS@##1#1##2@{%
     \Initialtoks@{##1}\Terminaltoks@{##2}}\expandafter\TTILPS@#2@}

 \def\Shifted@@#1#2#3{\setbox0=\hbox{#3}%
   \raise -\dp0\vbox {\kern-#2%
       \hbox {\kern#1\unhbox0\kern-#1}%
           \kern#2}}

 \newcount\gridcount
 \newbox\auxGridbox@ \newbox\hGridbox@ \newbox\vGridbox@
 \newbox\Labelbox@ \newbox\auxLabelbox@
 \newbox\Coordinatebox@
 \newtoks\Labeltoks@
 \newdimen\Wdd@ \newdimen\Htt@
 \newdimen\Wddd@ \newdimen\Httt@
 
 \def\Wr@{\immediate\write16}

 \newdimen\GL@wd
 \GL@wd=.02pt
 \def\GridLineWidth#1{\GL@wd=#1}

 \def\gobble#1{}
 \def\EdgeErr@{\Wr@{}%
      \Wr@{\string\Edges\space argument
      1, 10, 100 or 1000 please\string!}%
      }

 \newcount\Edgect@

 \def\Sweepup#1\endSweepup{}

 \def\SetEdges@{%
    \edef\Zr@@s{\expandafter\gobble\number\Edgect@\empty}%
        \count255=0\Zr@@s\relax
        \ifnum\count255=\z@\else\EdgeErr@\show\tailtest\fi
        \count255=1\Zr@@s\relax
        \ifnum\count255=\Edgect@\relax\else\EdgeErr@\show\leadtest\fi
    \EdgGl@b\edef\Zr@s{\expandafter\gobble\Zr@@s\empty}
    \ifnum\Edgect@>\@ne\relax\EdgGl@b\let\L@Dc\empty
        \else\EdgGl@b\edef\L@Dc{\string.}\fi
    \ifnum\Edgect@>\@ne\relax
        \EdgGl@b\edef\Edgescale@##1{\divide##1 by \Edgect@}%
        \else\EdgGl@b\edef\Edgescale@##1{}\fi
    }

 \def\Edges#1{\Edgect@=#1\relax
     \let\EdgGl@b\global \SetEdges@}

 \Edges{1}

 \def\hhrule{\hrule height \GL@wd\vskip-.\GL@wd}

 \def\hRule@{%
   \advance\gridcount -2%
   \vfil\hhrule\vfil
   \llap{\smash{\raise -2.5pt
     \hbox{\L@Dc\number\gridcount\Zr@s\kern2pt}}}%
   \hhrule
   }

\def\vvrule{\vrule width \GL@wd \kern-\GL@wd}

 \def\vRule@{\advance\gridcount 2%
   \hfil\vvrule\hfil
   \setbox\auxGridbox@=\vbox to 0pt
      {\vskip \Htt@\vskip 2pt
        \hbox to 0pt{\hss\L@Dc\number\gridcount\Zr@s\hss}\vss}%
      \wd\auxGridbox@=0pt \box\auxGridbox@
   \vvrule
   }

 \def\PlaceGrid@@{\gridcount=10 
  \setbox\hGridbox@=\hbox{%
        \hbox{%
             \hskip-.4pt\vrule
             \vbox to \Htt@{%
               \offinterlineskip\parindent=\z@\relax
               \hbox to \Wdd@{\hfil}
               \hRule@\hRule@\hRule@\hRule@
               \vfil\hhrule\vfil}%
             \vrule\hskip-.4pt}
    }%
  \gridcount=0%
  \setbox\vGridbox@=\hbox{%
      \vbox{\offinterlineskip\parindent=0pt\hsize=0pt
         \vskip-.4pt\hrule%
         \hbox to \Wdd@{%
                 \vtop to \Htt@{\vfil}%
                 \vRule@\vRule@\vRule@\vRule@
                 \hfil\vvrule\hfil}%
         \hrule\vskip-.4pt}}%
  \wd\hGridbox@=0pt\ht\hGridbox@=0pt
  \wd\vGridbox@=0pt\ht\vGridbox@=0pt
  \hbox{\box\hGridbox@\box\vGridbox@}%
  }

 \def\LabelsGlobal{\def\LabGl@b{\global}}
 \def\LabelsLocal{\def\LabGl@b{}}
 \LabelsGlobal 

 \def\SetLabels#1\endSetLabels{%
   \LabGl@b\Labeltoks@={#1()\\}%
   }

 \LabGl@b\Labeltoks@={()\\}

 \def\ShowGrid{\LabGl@b\let\PlaceGrid@\PlaceGrid@@}
 \def\HideGrid{\LabGl@b\let\PlaceGrid@\relax}
 \def\Grids{\ShowGrid\LabGl@b\let\GridSwitch@\ShowGrid}
 \def\noGrids{\HideGrid\LabGl@b\let\GridSwitch@\HideGrid}

 \noGrids

 \def\bAdjust@@{%
     \setbox\auxLabelbox@=\hbox{\raise \dp\auxLabelbox@
            \box\auxLabelbox@}}
 \def\bAdjust@{\let\vAdjust@\bAdjust@@}

 \def\eAdjust@@{\dimen0=-.5\ht\auxLabelbox@
     \advance\dimen0 by .5\dp\auxLabelbox@
     \setbox\auxLabelbox@=
            \hbox{\raise\dimen0\box\auxLabelbox@}}
 \def\eAdjust@{\let\vAdjust@\eAdjust@@}

 \def\tAdjust@@{%
     \setbox\auxLabelbox@=\hbox{\raise-\ht\auxLabelbox@
            \box\auxLabelbox@}}
 \def\tAdjust@{\let\vAdjust@\tAdjust@@}

 \let\vAdjust@\relax

 \def\lAdjust@{\let\hAdjust@\rlap}
 \def\rAdjust@{\let\hAdjust@\llap}

 \let\hAdjust@\relax\let\vAdjust@\relax

 \def\FetchLabel@#1(#2)#3\\{%
     \IN@0#2@@\ifIN@
        \setbox0=\hbox{\ignorespaces#1#3\unskip}%
        \ifdim\wd0>0pt
           \ms@g{}%
           \ms@g{ !!! Bad label(s)? !!!}%
           \message{ #1(#2)#3}%
        \fi
        \def\LabelMole@##1\endFetchLabel@{%
            \IN@0()\\@##1@%
            \ifIN@\def\Temp@{\FetchLabel@##1\endFetchLabel@}%
            \else\def\Temp@{}%
            \fi
            \Temp@
           }%
     \else
       \ignorespaces#1\unskip
       \setbox\auxLabelbox@=%
         \hbox to 0pt{\hss\ignorespaces\hAdjust@
          {\ignorespaces#3\unskip}\hss}%
       \vAdjust@
       \let\hAdjust@\relax\let\vAdjust@\relax
       \AugmentLabelBox@@{#2}%
       \ht\Labelbox@=0pt\dp\Labelbox@=0pt
       \let\LabelMole@\FetchLabel@%
     \fi\LabelMole@}

 \newtoks\XYSep@ 
 \def\SetXYSeparator#1{%
     \IN@0#1@@\ifIN@\XYSep@{*}%
     \else
     \XYSep@{#1}%
     \fi
     }

 \SetXYSeparator*

 \def\AugmentLabelBox@@#1{%
     \IN@0\the\XYSep@ @#1@\ifIN@
       \SPLIT@0\the\XYSep@ @#1@%
       \setbox\Labelbox@=\hbox to 0pt{%
         \unhbox\Labelbox@
         \Shifted@@{\the\Initialtoks@\Wddd@}%
         {\the\Terminaltoks@\Httt@}%
         {\box\auxLabelbox@}}%
     \else
         \ms@g{}%
         \ms@g{ !!! Bad insertion point. !!!}%
         \message{ (#1\ this point was rejected.)}%
     \fi
    }

 \def\FetchOption@#1[#2]#3\endFetchOption@{%
    \def\temp{#1}
    \ifx\temp\empty
       \Edgect@=#2\relax
       \let\EdgGl@b\relax
       \SetEdges@
       \Cleaner@#3%
    \fi}

 \def\Cleaner@#1[@]{\Labeltoks@{#1}}
     
 \def\PlaceLabels@@{\mathsurround=0pt
     \def\Cr@{\\}%
     \let\L\lAdjust@\let\R\rAdjust@
     \let\B\bAdjust@\let\E\eAdjust@\let\T\tAdjust@
     \expandafter\FetchOption@\the\Labeltoks@[@]\endFetchOption@
     \Wddd@=\Wdd@ \Edgescale@\Wddd@ 
     \Httt@=\Htt@ \Edgescale@\Httt@
     \expandafter\FetchLabel@\the\Labeltoks@\endFetchLabel@
     \box\Labelbox@
     }%

 \let \PlaceLabels@\PlaceLabels@@

 \def\AffixLabels#1{\setbox\Coordinatebox@=\hbox{#1}%
      \Wdd@=\wd\Coordinatebox@ \Htt@=\ht\Coordinatebox@
      \advance\Htt@ \dp\Coordinatebox@
      \hbox{\copy\Coordinatebox@\kern-\Wdd@ 
           \Shifted@@{0pt}{-\dp\Coordinatebox@}%
           {\PlaceLabels@\PlaceGrid@}%
           \kern\Wdd@}%
      \GridSwitch@ 
      \LabGl@b\Labeltoks@{()\\}%
      }
 
   \let\wlog\LabelFigwlog@ld   
   \catcode`\@=\LabelFigCatAt  


 
                                By

              Raymond S\'eroul <A18645@FRCCSC21.BITNET>
                                and 
              Laurent Siebenmann <lcs@topo.math.u-psud.fr>
    
              VERSIONS: July 1991, Oct 1991, Jan 1992, July 1992

INTRODUCTION

      This labelling package is intended for TeX users who
rely on non-TeX sources for for their graphics inserts.  It
provides means for adding TeX labels to such inserts with a
minimum of fuss. 

       For most labels, TeX users have in the past found it
reasonably convenient to rely on non-TeX sources. Typical
occasions when an inescapable need for TeX labels seemed to
arise are

 (a) when the graphics program lacks certain exotic or complex
mathematical symbols

 (b) when the very highest typographical quality is wanted for the
labels

 (c) when labels included with the graphics fail to print, 
 and you cannot figure out why (cf. boxedeps.doc).  The labels
 provided by labelfig.tex are 100

       Since this package first appeared, many users, who in the
past scarcely dreamed of using TeX labels, have come to use
nothing but.  So it is now appropriate to add

Intoxication Warning:  TeX labels may be addictive and expensive. 

     If you have a fast preview you may disagree, and even find
that this package provides an agreeable paste-up environment; see
extra applications at end.

     Note to publishers: It is possible and convenient to ultimately
export the TeX labels produced by labelfig.tex to become an integral
part of the EPS file. This is often desired by a publisher who typically
uses an "upmarket" graphics or page layout program, with which the
staff is skilled in perfecting figures.  See Appendix I for
a recipe.

     The authors are grateful to Patrick Ion of Math Reviews for
helpful comments and encouragement.

BASIC INSTRUCTIONS

    After reading in the macro file using

preview or proof your figure with a coordinate grid printed on
top, by typing the following:

    \ShowGrid  
    \AffixLabels{<the graphics insertion>}

Here <the graphics insertion> is what you would type to insert
the graphics object alone without the grid.  This must provide
for the space around it. For example <the graphics insertion>
might well be \BoxedEPSF{MyFigure scaled 700} using the
boxedeps.tex macro package (from same source); this provides a
TeX box containing the encapsulated PostScript insert specified by
the file MyFigure. \AffixLabels{...} provides the grid (supposing
\ShowGrid is present) and later, once you have specified labels
using the grid, it will "tack on" the labels.

     The grid is a sort of (usually elongated) checkerboard of
ten rows and ten columns and its (internal) partitions are by
default numbered  .1, ... ,.9  both horizontally (X-coordinate
running left to right) and vertically (Y-coordinate running bottom
to top).  Thus the points enclosed by the grid correspond to the
points of the unit square in the cartesian "X-Y" plane, the lower
left corner corresponding to the origin (0,0).  By extrapolation,
the full page corresponds to a larger rectangle in the plane.

     These coordinates serve to position labels as follows.
Before the \AffixLabels{...} command type label specifications:

  \SetLabels
   (<X-coordinate>*<Y-coordinate>) <first label> \\
   .
   .
   .
   (<X-coordinate>*<Y-coordinate>)  <last label> \\
  \endSetLabels

Each row specifies one label and is terminated by \\.  In each
row, the position indicator comes first; it is written as a
standard cartesian point except that the X- and Y- coordinates
are separated by * rather than a comma because TeX allows a
comma as decimal point. There are no dimension units to specify
as the unit is the grid itself.

     By default, this cartesian point specifies where the middle
of the baseline of the label will be located.  However if you precede
the point by \L [or \R] the left [or right] edge of the baseline will
be located there. Similarly you may also precede the point by \T, \E,
or \B to vertically align the top equator or bottom of the label box
at the specified point.  This gives nine standard positions of
the label with respect to the insertion point --- corresponding to
the eight principle points of the compas and the center

                     \L\T     \T      \R\T

                     \L\E     \E      \R\E

                     \L\B     \B      \R\B

But this neglects the default "baseline" level of TeX,
giving potentially three more positions

                     \L    <no tag>   \R

For text, the baseline level is often the preferred. Its relation to
the others is variable. It will often coincide with the bottom level,
as happens for "X".  But it is often distinct, as for "g", in which
case you have in all 12 distinct positions rather than 9.

     It is convenient to think of this specification of label
position as attaching the label by a thumb-tack to the coordinate
grid. There are up to twelve positions of the thumb-tack on the
label, while the position of the thumb-tack on the coordinate grid is
arbitrary.  Normally, one choses the position of the thumb-tack on
the label to be the one that is the closest to the item being
labeled.  There are good reasons for this "rule of thumb":

   (a)  It facilitates correct positioning at first try.

   (b)  If the scale of the figure must be altered after labels
have been affixed, the labels have a good chance of remaining well
positioned.

   (c)  The visible grid need not extend beyond the "bounding box"
for the figure, because the best preferred position is always
(at least almost) within the bounding box .

The second reason is particularly important. Indeed it often
happens that scale has to be altered after labelling begins, in
order to either provide space for the labels, or to adjust
proportions between the labels and the figure.  (The size of labels
is unaffected by scaling.)

     Here is an artificial but self-contained test which uses
TeX rules to make a graphics object.

TEST

    Do not skip this!



 \def\FrameIt#1{\hbox{\vrule$\vcenter {\hrule\kern3pt%
             \hbox {\kern3pt #1\kern3pt}%
               \kern3pt\hrule}$\relax\vrule}}

 \def\Caption#1#2{\FrameIt{%
       \vtop {\hsize=#1\relax \parindent=0pt
         \leftskip=0pt \rightskip=0pt plus15pt
         \parfillskip=0pt
         \lineskip=1pt\baselineskip=0pt
         #2}}}

 \def\FirstQuadrant{\hbox to 100pt{\vrule\vbox to 100pt{%
        \hbox to 100pt{\hfil}\vfil\hrule}\hss}}


  \SetLabels
    \R(.5*.2) $\zeta\,\cdot$\\
    (.9*-.10) $\xi$\\
    \R(-.03*.9) $\eta$\\
    \T(.5*.9) \Caption{70pt}{%
          \it The norm of
          $g(\xi+i\eta)$ is indicated on
          contours of this invisible surface.}\\
  \endSetLabels

  \AffixLabels{\FirstQuadrant}

  \end

  Note that the coordinates to use for labels are indicated on the
edges of the grid (when visible) corresponding to the conventional
x- and y- axes of the Cartesian plane. By default the grid is
1-by-1. However, by the command \Edges{100}, you can change this
to 100-by-100 and many users find this alternative most
convenient. Place the command \Edges{...} in your style file (or
header) since its effect is is global. Other possible edge values
are 10 and 1000.

  If you use the command \Edges{...} at all, do so with care.  For
if you accidentally delete an \Edges{...} command your labels will
abruptly be badly misplaced and may logically but mysteriously
generate "dimension too big" errors under TeX and "off page" errors
under your driver.  

  You can dictate the edgescale for an individual figure by giving
the scale in brackets immediately after \SetLabels.  Thus, to
import into an article using say \Edge{100} a figure labelled using
another edgescale, say the original 1-by-1 default, you can use
\SetLabels[1]...\endSetLabels.


GETTING IT DOWN PAT

     Complicated labeling deserves the same respect as
complicated mathematics.  Do not expect it to come out perfect the
first time!  What is needed in either case is a mechanism to
repeatedly typeset troublesome pieces.

     One mechanism is always available.  One does complicated
labelling in a separate "test" file involving just the figure being
labelled;  a texpert will know how to \dump TeX's current state as
a temporary format that restarts rapidly at each retry.  Usually,
one then pastes the completed labelled figure back into the main
TeX file, but, of course, one can also \input it as an auxiliary
file.

     If you do not have a TeXpert at handy, here is a first
approximation to an efficient setup. By deletions reduce a copy
of your article to just a few lines before and after the figure.
Now label the figure, and finally, copy and paste the labelled
figure to the original article. Then copy the next figure to label
into this testbed and repeat. The TeXpert can improve the  speed
at which TeX starts up, by compiling a format specifically for
your article; just one caution: best NOT include in the format
ephemeral details of setup like \Set<mydriver>ArtSpecials (from
boxedeps.tex because this reads  figure dimensions which you may
change during your work session.

     An improved mechanism to repeatedly typeset troublesome
pieces is now available on the Macintosh; it is called LinoTeX;
see the same ftp sources.  It could be set up on many types
of computer.

     Before using labelfig.tex to attach labels to a graphics
object inserted using boxedeps.tex or BoxedArt.tex, make it a
firm rule to carefully adjust the bounding box using the trimming
commands of these packages, and also at least tentatively scale
and position the object. Beware of changing the grid inadvertently
after the labels have been positioned.  For example, correcting
the bounding box of a PostScript graphics object can foul up the
labels by changing the coordinate grid to which the labels are
attached. This is particularly true for the trimming  commands of
boxedeps.tex and BoxedArt.tex. However, as noted already, change
of scale is much less disruptive, and modest adjustments should be
well tolerated.

     Sometimes the labels protrude so far from the bounding box
of a figure that the figure has to be repositioned.  Best do this
by ad hoc spacing, say using \hglue and \vglue; altering the
bounding box would create a vicious circle.

     Remember that you are responsible for preventing labels
from overlapping. You are responsible for all label typography
including size and style. A label is really just about anything
that can be put in a TeX box. Note that spaces at the beginning
and end of labels will normally be suppressed; if you really want
them you must protect them with TeX braces.

     This package temporarily sets the \mathsurround parameter
of TeX to zero  while the labels are being affixed. This is done
because nonzero \mathsurround space would influence the position
of left and right aligned labels; then, when a texpert or printer
modifies mathsurround, diagram labeling might be disastrously
altered. There is a small price to pay involving labels that are
formatted as caption boxes including mathematics: you  may want or
need to specify an explicit mathsurround space within the caption
box; it will not influence anything outside.

     Those hostile to the use of * as separator between
the X and Y coordinates of label insertion points, are free to
impose another using \SetXYSeparator{<the new separator>}.  
Americans may prefer "," to "*" since they never use a 
comma as a decimal point; on the other hand, * may be more visible.

APPENDIX (I)  MERGING labelfig.tex LABELS INTO AN EPSF GRAPHICS OBJECT.

     As promised in the introduction, here is a recipe useful for
publishers. It works at least on Macintosh and at least for vectorized
graphics and Adobe type1 fonts.  (There is surely a similar recipe for
PCs under MSWindows.)

 (a)  Use boxedeps.tex utility to integrate the figure given by the eps
file, "x.eps" say, with a visible frame around it.  See
\ShowDisplacementBoxes command in boxedeps.tex.  To get precise results
automatically it is important to use the \Trim... commands of
boxedeps.tex making the "DisplacementBox" neatly fit the figure.

 (b)  Use the TeX printer driver and LaserWriter (versions >= 8.1.1) to
export to an EPSF the DVI page containing the integrated, labelled
figure. You now have an EPS file  "xx.eps"  that contains too much, and at
the wrong scale, and at wrong position.

 (c)  Convert the EPSF to an Adode Illustrator format EPSF using
the shareware utility called epsConvert by Sam Weiss
1993-- (currently $25).

 (d)  In Illustrator (or a compatible program), group the labels and the
"DisplacementBox"; copy them to the clipboard and paste them into "x.ps".
This step requires that all the label fonts be "visible to the Macintosh.

 (e)  Translate and scale the pasted group consisting of the labels plus
the "DisplacementBox" so as to make the "DisplacementBox" the bounding
box of (labelless) figure represented by "x.eps".  At this point the
labels will be correctly placed on the figure "x.eps".

 (f)  Ungroup and delete the "DisplacementBox".  The result is the
desired single EPS file, "x+.eps" say, It contains the original figure
plus its labels.  

     Using grouping and ungrouping appropriately in "x+.eps", a
publisher's staff can very efficiently improve label positions etc.

APPENDIX II)  SOME EXOTIC APPLICATIONS

     The grid of labelfig.tex is analogous to a light-table in
classical page makeup with wax or latex glue.  In principle, you
can use it to compose any page from its indivisible parts.  This
even has some of the artisanal charm of classical paste-up
provided you have a fast screen preview to make the process
"interactive".

     In practice labelfig.tex is a tool for nonstandard jobs.
Here are a few going beyond the labelling already discussed.

(I)  GRAPHICS INTEGRATION.

     This is accomplished by treating the imported graphics
objects as labels.  The underlying graphics object is then
typically an empty  \vbox to <dimension>{\vfill} in a TeX
\midinsert...\endinsert construction.  A label line
might be of the form

   (.1*.1) \special{... MyFigure ...}\\

The exact form of the special command varies from driver to
driver.  However, in the case of encapsulated PostScript graphics
(EPSF norm), by relying on boxedeps.tex, one can have the
following standard syntax (independant of driver  (see
boxedeps.doc for details.
  
  (.1*.1) \BoxedEPSF{MyFigure scaled <scale in mils>}\\

This may be slow since it requires TeX to read the PostScript
file to read bounding box using many complex macros.  So you
may want to try

  (.1*.1) \EPSFSpecial{MyFigure}{<scale in mils>}\\

which is fast and driver independant, but it squashes the
bounding box, normally to its lower left corner.

     Similarly for graphics of the Macintosh PICT norm ---
using BoxedArt.tex (same sources) in place of boxedeps.tex.

     This approach to integration is to be recommended when
one is assembling a composite graphics object.

 (II)  COMMUTATIVE DIAGRAM ENHANCEMENT

     Commutative diagrams or arrays of mathematical objects
connected by arrows of various sorts are common in mathematics.
The mathematical objects require the use of TeX.  Recently TeX
acquired a good collection of arrows of all slopes --- that of
LamSTeX --- plus pwerful macros to build the diagrams.

     However, even the LamSTeX collection is often
inadequate; it lacks for example double shafted arrows, dotted
arrows and curved arrows. Fortunately it is possible to produce
such arrows on an individual basis using sophisticated graphics
programs such as Illustrator and AldusFreehand (both serving
the EPSF norm) or using Metafont (with its public domain norm).
Since the creation of each new arrow is a work of love, you
probably want to limit the number of arrows by using LamSTeX
for most arrows. The 40K commutative diagram module of LamSTeX
has been adapted to work with AmSTeX and a copy may be posted
with LabelFig and related files. Unfortunately no one has yet
offered a version that works with Plain TeX or LaTeX.

       Suffice it here to say that when the exotic arrow has
been somehow imported into TeX, labelfig.tex treats it as a
label that one affixes to the commutative diagram.  Two other
steps will be treated in separate notes, namely the matter of
extracting the dimension specifications for the arrow and the
construction of the arrow --- for these steps are far from
unique and often depend intimately on your computer environment. 
Notes for the Macintosh-Textures-Illustrator combination are
found in the file ExoticArrows.doc.

 (III) NESTING 

Ingenuity pays off in exploiting labelfig.tex. One can
mix graphics and typography quite freely.  labelfig.tex is good
for freeform or overlapping arrangements, while boxedeps.tex (or
BoxedArt.tex) is best for regimented non-overlapping
arrangements --- and the two can be combined.

     The default behavior of labelfig.tex is not ideal 
for nesting objects, because to prevent trouble for beginners
the register for labels is globally cleared when \AffixLabels
concludes.  But there are switches available

      \LabelsGlobal      \LabelsLocal

which change this.  To understand this, extend the above test 
by something like:


 \LabelsLocal

 \SetLabels
    (.5*.5) AAA\\
 \endSetLabels

 {
 \SetLabels
    (.5*.5) ZZZ\\
 \endSetLabels
   \AffixLabels{\FirstQuadrant}
 }

   \AffixLabels{\FirstQuadrant}


     There are however potential pitfalls.  Neither
labelfig.tex nor boxedeps.tex has been tested under extreme
conditions. Problems may occur if their procedures are
indiscriminately nested. For boxedeps.tex (not labelfig.tex)
there is a precise cause for worry, namely many of its
variables are "global", which means that TeX braces will not
provide the protection one might expect.

COMMAND SUMMARY FOR labelfig.tex

  Here [...] means optional (one or zero)
       [...]* means any number of such constructs

  \SetLabels
    [[<P>](<X><Sep><Y>) <label> \\]*
  \endSetLabels
  \ShowGrid  
  \AffixLabels{<the figure>}

   --- <P> is tack position, one of eleven or empty
              order irrelevant

                   \L\T      \T      \R\T

                   \L\E      \E      \R\E

                     \L               \R

                   \L\B      \B      \R\B

   --- (<X><Sep><Y>) insertion point;
  <Sep> is separator, = * by default;
  \SetXYSeparator{<Sep>} changes it.
   <X> and <Y> are real numbers

  --- <label> a label to attach 

  --- <the figure> the figure to label 

  \GlobalLabels (default)     
  \LocalLabels  setting for nested constructs.

 \Grids makes ALL grids appear; \HideGrid then makes just next disappear.
 \noGrids returns to default.  The commands are always global.

 \GridLineWidth{<dimension>} adjusts width of grid lines. Default is very
small, to give "hairline" effect. If your grid lines are missing try
setting \GridLineWidth{1pt}.

 \Edges#1 globally changes the edge size of all grids to the numerical 
value #1, which must be 1, 10, 100, or 1000.  The default is 1.

VERSION HISTORY.
 --- Jan 1993: \Edges#1 and [??] option after \SetLabels
 --- July 1992: \Grids, \noGrids, \HideGrid;
       Gridlines become hairlines; \GridLineWidth{<dimension>}.
 --- Oct 1991, Jan 1992: \SetXYSeparator{<Sep>},  \LabelsGlobal,
       \LabelsLocal.
 --- July 1991: first release

Address for bugs and other feedback:

        Raymond S\'eroul
        IREM and Lab. de Typographie Informatise
        Univ. Rene Descartes
        Strasbourg

    Tel 33-88-41-63-45
    Email:  A18645@FRCCSC21.BITNET

        Laurent Siebenmann
        Mathematique, Bat. 425,
        Univ de Paris-Sud,
        91405-Orsay,
        France

    Tel 33-1-6941-7949; 
    Email: lcs@topo.math.u-psud.fr

\def\scalefig#1{\epsfxsize #1\textwidth}

\newtheorem{theorem}{Theorem}
\newtheorem{lemma}{Lemma}

\newtheorem{proposition}{Proposition}

\newtheorem{problem}{Problem}

\title{{\huge Coordinated Beamforming with Relaxed Zero Forcing: The Sequential Orthogonal Projection Combining Method and Rate Control}}

\author{
Juho Park, Gilwon Lee, {\em Student~Members, IEEE}, Youngchul
Sung$^*$\thanks{$^*$Corresponding author}, {\em
Senior~Member, IEEE}, and Masahiro Yukawa, {\em Member, IEEE}
\thanks{Juho Park, Gilwon Lee, and Youngchul Sung are with the Dept. of
Electrical Engineering, KAIST, Daejeon, 305-701, South Korea.
E-mail:\{jhp@, gwlee@, and ysung@ee.\}kaist.ac.kr and Masahiro
Yukawa is with the Dept. of Electrical and Electronic Engineering,
Niigata University, Niigata, Japan, E-mail:
yukawa@eng.niigata-u.ac.jp. This research was supported by the
Basic Science Research Program through the National Research
Foundation of Korea (NRF) funded by the Ministry of Education,
Science and Technology (2010-0021269).  Some part
of this paper was presented in  WCSP 2011
\cite{Lee&Park&Sung:11WCSP}.}
}

\markboth{\protect\footnotesize Submitted to {\it IEEE Trans. on
Signal Processing}, \today}{Park, Lee, Sung, and Yukawa}

\begin{document}

\maketitle

\begin{center}
 EDICS: MSP-MULT
\end{center}

\begin{abstract}
In this paper, coordinated beamforming  based on relaxed zero
forcing (RZF) for $K$ transmitter-receiver pair multiple-input
single-output (MISO) and multiple-input multiple-output (MIMO)
interference channels is considered. In the RZF coordinated
beamforming, conventional zero-forcing interference leakage
constraints are relaxed so that some predetermined interference
leakage to undesired receivers is allowed in order to increase the
beam design space for larger rates than those of the zero-forcing (ZF)
scheme {or to make beam design feasible when ZF is impossible.} In the MISO case, it is shown that the {rate-maximizing} beam vector under the RZF framework for a given set of  interference leakage levels can be obtained by sequential orthogonal projection combining (SOPC). Based on this, exact and approximate closed-form solutions are provided in  two-user and three-user cases, respectively, and an efficient beam design algorithm for RZF coordinated beamforming is provided in general cases. Furthermore, the rate control
problem under the RZF framework is considered. A centralized
approach and a distributed heuristic approach are proposed to
control the position of the designed rate-tuple in the achievable rate region. Finally, the RZF framework
is extended to MIMO interference channels by deriving a new lower
bound on the rate of each user.
\end{abstract}

\begin{keywords}
Multi-cell MIMO, inter-cell interference, coordinated beamforming,
Pareto-optimal, relaxed zero forcing, sequential orthogonal
projection combining, rate control
\end{keywords}

\section{Introduction}

In current and future cellular networks, handling interference in
the network is one of the most critical problems. Among the many
ways of handling interference, MIMO antenna techniques and base
station cooperation are considered as the key technologies to the
interference problem. Indeed, the 3GPP Long-Term
Evolution-Advanced  considers the base station cooperation and
MIMO techniques to mitigate inter-cell interference under the name
of Coordinated Multipoint (CoMP) \cite{3GPP:10,
Sawahashietal:10WCM}. Mathematically, when each mobile station has
a single receive antenna and data is not shared among base
stations, the system is modelled as a MISO interference channel
(IC), and extensive research has been conducted on beam design for
this MISO IC, especially under the assumption of practical linear
beamforming treating interference as noise. First, Jorswieck {\em
et al.} investigated the structure of optimal beam vectors
achieving Pareto boundary points of the achievable rate region of
the MISO IC with linear beamforming
\cite{Jorswieck&Larsson&Danev:08SP} and showed that any
Pareto-optimal beam vector at each transmitter is a normalized
convex combination of the ZF beam vector and  matched-filtering
(MF) (i.e., maximal ratio transmission) beam vector in the case of
two users and a linear combination of the channel vectors from the
transmitter to all receivers in the general case of an arbitrary
number of users. The result is extended in
\cite{Mochaourab&Jorswieck:11SP} to general MISO interference
networks with arbitrary utility functions having monotonic
property. Moreover, the parameterization for the Pareto-optimal
beam vector is compressed from $K(K-1)$ complex numbers
\cite{Jorswieck&Larsson&Danev:08SP} to $K(K-1)$ real numbers. In
addition to these  results,  other interesting works for MISO ICs
include the consideration of imperfect CSI
\cite{Lindblom&Larsson&Jorswieck:10WCOM}, shared data
\cite{Bjornson&etal:10SP}, second-order cone programming
\cite{Qiu&Zhang&Luo&Cui:11SP}, etc. Although these works provide
significant theoretical insights into the optimal beam structure
and  parameterization of  Pareto-optimal beam vectors, it is not
easy to use these results to design an optimal beam vector in the
real-world systems, and the beam design problem in the general
case  still remains as a non-trivial  problem practically.

 With a sufficient number of transmit antennas, the simplest beam design method for base station coordination is ZF, which perfectly eliminates interference leakage to undesired receivers. However, it is well known that the ZF method is not optimal in the sense of sum data rate or Pareto-boundary achievability, and there have been several ideas to enhance the ZF beam design method. In the case of multi-user MISO/MIMO broadcast channels, the regularized channel inversion (RCI) \cite{Peel&Hochwald&Swindlehurst:05Com} and the signal-to-leakage-plus-noise (SLNR) method \cite{Sadek&Tarighat&Sayed:07TW} were proposed for this purpose. In particular, the SLNR method maximizes the ratio of signal power (to the desired receiver) to leakage (to undesired receivers) plus noise power, and its solution is given by solving a generalized
eigenvalue problem. The SLNR method can easily be adapted to the
MISO/MIMO IC. Recently, Zakhour and Gesbert rediscovered this method in the context of MISO IC under the name of the virtual signal-to-interference-plus-noise (SINR) method, and have further (and more importantly) shown that this method can achieve any point on the Pareto boundary theoretically, but practically can achieve one uncontrolled point
on the Pareto boundary of the achievable rate region in the case
of two\footnote{It can be shown that the virtual SINR (or SLNR)
method can theoretically achieve any Pareto-optimal point in the
general MISO IC case, too. See the appendix of \cite{Park&Lee&Sung&Yukawa:12Arxiv}.} users
\cite{Zakhour&Gesbert:09WSA},\cite{Zakhour&Gesbert:10WC}.

Another way of generalizing ZF in MISO IC was proposed by relaxing
the ZF leakage constraints to undesired users in
\cite{Shang&Chen&Poor:11IT}, \cite{Zhang&Cui:10SP},
\cite{Lee&Park&Sung:11WCSP}. First, Shang {\em et al.} showed that
all boundary points of the achievable rate region of MISO IC with
single-user decoding can be obtained by linear beamforming
\cite{Shang&Chen&Poor:11IT}, by converting the non-convex weighted
sum rate maximizing precoder design problem into a set of separate
convex problems by taking a lower bound on the achievable rate of
each user under the relaxed ZF (RZF) framework. This method was
further investigated  by Zhang and Cui \cite{Zhang&Cui:10SP}, who
showed that separate rate optimization under the RZF framework
with a set of well-chosen interference leakage levels to undesired
users is Pareto-optimal for MISO ICs in addition to being sum-rate
optimal. In \cite{Lee&Park&Sung:11WCSP}, Lee {\em et al.} extended
the RZF framework to the case of MIMO IC. In this RZF beamforming
framework, each transmitter maximizes its own rate under
interference leakage constraints to undesired receivers. The idea
is based on the simple observation that the ZF beam design method
overreacts to inter-cell interference by completely nulling
out the interference. Most receivers (i.e., mobile stations) that
are affected by inter-cell interference are cell-edge users, and
thus, thermal noise remains even if the inter-cell interference is
completely removed. Thus, it is unnecessary to completely
eliminate the inter-cell interference and it is sufficient to
limit the inter-cell interference to a certain level comparable to
that of the thermal noise. By relaxing ZF interference
constraints, we do not need the condition that the number of
transmit antenna is larger than or equal to that of receivers and
have a larger feasible set yielding a larger rate than that of the
ZF scheme. In this paper, we explore and develop this RZF idea
fully in several aspects to provide a useful design paradigm for
coordinated beamforming (CB) for current and future cellular
networks. The contributions of the paper is summarized as follows:

\noindent $\bullet$ In the MISO IC case, a new structural representation of optimal beam vector for RZF coordinate beamforming is derived.

\noindent $\bullet$ In the MISO IC case, based on the new structural
representation,  the {\it sequential
orthogonal projection combining (SOPC) method} for the RZF beam
design is proposed. In the case of $K=3$, an approximate
closed-form solution is provided.

\noindent $\bullet$  In the RZF framework, the allowed interference leakage
levels to undesired receivers at each transmitter are design
parameters, and the rate-tuple is controlled by controlling these
interference leakage levels. A centralized algorithm and a fully distributed heuristic algorithm are provided to control the location of the designed
rate-tuple (roughly) along the Pareto boundary of the achievable rate
region.
The controllability of rate is a desirable feature in network
operation since the required data rate of each
transmitter-receiver pair may be different from those of others in
practice, as in an example that one user is a voice user and the
others are high rate data users.

\noindent $\bullet$ Finally, the RZF CB (RZFCB) is extended to the MIMO IC case.
In the MIMO case, a new lower bound on
each user's rate is derived to decompose the beam design
problem into separate problems at different transmitters, and the
projected gradient method \cite{goldstein64} is adopted to solve
the MIMO RZFCB problem.

\noindent \textbf{Notations and Organization}
In this paper, we will make use of standard notational
conventions. Vectors and matrices are written in boldface with
matrices in capitals. All vectors are column vectors. For a matrix
$\Abf$, $\Abf^H$, $\|\Abf\|$, $\|\Abf\|_F$, $\mbox{tr}(\Abf)$, and
$|\Abf|$ indicate the Hermitian transpose, 2-norm,  Frobenius
norm, trace, and determinant of $\Abf$, respectively, and
$\Cc(\Abf)$ denotes the column space of
$\Abf$. $\Ibf_n$ stands for the identity matrix of size $n$
(the subscript is omitted when unnecessary).
$\Pibf_{\Abf}=\Abf(\Abf^H\Abf)^{-1}\Abf^H$ represents the
orthogonal projection onto $\Cc(\Abf)$ and
$\Pibf_{\Abf}^\perp=\Ibf - \Pibf_{\Abf}$.  For matrices $\Abf$ and
$\Bbf$, $\Abf \ge \Bbf$ means that
$\Abf-\Bbf$ is positive semi-definite. $[\abf_1,\cdots,\abf_L]$ or
$[\abf_i]_{i=1}^L$ denotes the matrix composed of vectors
$\abf_1,\cdots,\abf_L$. $\xbf\sim\mathcal{CN}(\mubf,\Sigmabf)$ means that
$\xbf$ is circular-symmetric complex Gaussian-distributed with mean vector $\mubf$ and covariance matrix $\Sigmabf$. ${\mathbb{R}}$,
${\mathbb{R}}_+$, and ${\mathbb{C}}$ denote the sets of real
numbers, non-negative real numbers, and complex numbers,
respectively. For a set $A$, $|A|$ represents the cardinality
of the set.

The remainder of this paper is organized as follows.
The system model and the preliminaries are provided in Section \ref{sec:systemmodel}. In
Section \ref{sec:MISO_IC_RZF}, the RZFCB in MISO ICs is formulated,
and its solution structure and a fast algorithm for RZFCB are
provided. In Section \ref{sec:rate_control}, the rate-tuple control
problem under the RZFCB framework is considered and two approaches
are proposed to control the designed rate-tuple. The RZFCB problem
in MIMO ICs is considered in Section \ref{sec:MIMO_IC}, followed by
conclusions in Section \ref{sec:conclusion}.

\section{System Model and Preliminaries}
\label{sec:systemmodel}

In this paper, we consider a multi-user interference channel with
$K$ transmitter-receiver pairs. In the first part of the paper, we restrict
ourselves to the case that the transmitters are equipped with $N$
antennas and each receiver is equipped with one receive antenna
only.  In this case, the received signal at receiver $i$ is given by
\begin{equation} \label{eq:rec_signal}
 y_{i} = \hbf_{ii}^H\vbf_{i} s_i + \sum\limits_{j=1, j \neq i}^K
\hbf_{ij}^H\vbf_j s_j + n_i,
\end{equation}
where $\hbf_{ij}$ denotes the $N\times 1$ (conjugated) channel
vector from transmitter $j$ to receiver $i$, and $\vbf_j$ and
$s_j$ are the $N \times 1$ beamforming vector and the scalar
transmit symbol at transmitter $j$, respectively.
We assume that the transmit symbols are from a Gaussian code book with unit
variance,  the additive noise $n_i$ is from ${\mathcal{CN}}(0, \sigma_i^2)$, and  each transmitter has a transmit power constraint, $\|\vbf_i\|^2 \leq P_i$, $i = 1, \cdots, K$.

The first term on the right-hand side (RHS) of
\eqref{eq:rec_signal} is the desired signal and the second term
represents the sum of interference from  $K-1$ undesired
transmitters. Under single-user decoding at each receiver treating interference as noise, for a given set of
beamforming vectors $\{\vbf_1,\cdots,\vbf_K\}$ and  a channel
realization $\{\hbf_{ij}\}$, the rate of receiver $i$  is given by
\begin{equation} \label{eq:R_iOneUser}
R_i(\vbf_1,\cdots,\vbf_K)
=
 \log\left(1+\frac{|\hbf_{ii}^H\vbf_i|^2}
{\sigma_i^2+ \sum_{j\neq i} |\hbf_{ij}^H\vbf_j|^2} \right).
\end{equation}
Then, for the given channel realization, the achievable rate
region of the MISO IC with transmit beamforming and single-user decoding  is defined as
the union of the rate-tuples that can be achieved by all possible
combinations of beamforming vectors under the power constraints:
\begin{equation} \label{eq:rate_region}
{\mathcal{R}} :=
\hspace{-1.4em}
\bigcup_{\left\{\substack{ \vbf_i:\vbf_i\in{\mathbb{C}}^N, \\
\|\vbf_i\|^2\leq P_i,\  1\leq i\leq K }\right\}}
\hspace{-1.4em}
(R_1(\vbf_1, \cdots, \vbf_K),\ \cdots,\ R_K(\vbf_1, \cdots, \vbf_K)).
\end{equation}
The outer boundary of the rate region ${\mathcal{R}}$ is called the
 $\textit{Pareto boundary}$ of $\Rc$ and
it consists of the rate-tuples for which the rate of any one user cannot
be increased without decreasing the rate of at least one other user \cite{Jorswieck&Larsson&Danev:08SP}.


At each transmitter, the interference to undesired receivers can be
eliminated completely by ZF CB (ZFCB).
Due to its simplicity and fully distributed nature, there has been extensive research on ZFCB, e.g., \cite{Spencer&Swindlehurst&Haardt:04SP,
Shim&Kwak&Heath&Andrews:08WC,
Somekh&Simeone&BarNess&Haimovich&Shamai:09IT}. The best ZF
beamforming vector at transmitter $i$ can be obtained by solving the
following optimization problem:
\begin{align}
\vbf_i^* =& \mathop{\arg\max}_{\vbf_i \in\ {\mathbb{C}}^{N} }\ \
    \log \left( 1+\frac{|\hbf_{ii}^H\vbf_{i}|^2}{\sigma_i^2} \right)
     \label{eq:ZFCB}  \\
 &\mbox{subject to\ \ \ } |\hbf_{ji}^H \vbf_{i}| =
    0,  ~\forall~ j\ne i ~~~~\mbox{and} \quad \|{\vbf}_{i}\|^2 \le
    P_i.\nonumber
\end{align}
Here, $|\hbf_{ji}^H \vbf_{i}| = 0$ is the ZF leakage constraint at
transmitter $i$ for receiver $j$. If $N \ge K$, the problem
\eqref{eq:ZFCB} has a non-trivial solution and the solution is
given by $\vbf_i^{ZF}=c\Pibf_{[\hbf_{1i},\cdots,\hbf_{i-1,i},\hbf_{i+1,i},\cdots,\hbf_{Ki}]}^\perp
\hbf_{ii}$ for some scalar $c$ satisfying the transmit power
constraint. In this paper, however, we do not assume that $N
\ge K$ necessarily as in the ZF beamforming, but assume that

 {\em (A.1)} In the case of $N \ge K$, $\{\hbf_{ji}, j=1,\cdots,K\}$  are linearly independent for each $i$.
In the case of $N < K$, the element vectors of any subset of $\{\hbf_{ji}, j=1,\cdots,K\}$ with cardinality $N$ are linearly independent for each $i$.

\noindent Assumption  {\em (A.1)}   is almost surely satisfied for randomly realized channel vectors.

\section{RZF Coordinated Beamforming in MISO Interference Channels}
\label{sec:MISO_IC_RZF}

\subsection{Formulation}

Although the ZFCB provides an effective way to handling inter-cell
interference, the ZFCB is not optimal from the perspective of Pareto optimality, i.e., the rate tuples achieved by ZFCB are in the interior of the achievable rate region \cite{Larsson&Jorswieck:08JSAC}.  and requires the condition $N \ge K$. As mentioned before, even with such complete interference nulling, there exists thermal noise at each receiver, and thus, a certain level of interference leakage comparable to the power of thermal noise can be allowed for better performance. In the MISO IC case, the RZF leakage constraint at transmitter $i$ for receiver $j$ is formulated as follows:
\begin{equation}\label{eq:RZF}
|\hbf_{ji}^H \vbf_{i} |^2 \le \alpha_{ji}\sigma_j^2, \quad \forall
i, j \neq i,
\end{equation}
where  $\alpha_{ji} \ge 0$ is a constant\footnote{In the RZF
scheme, $\{\alpha_{ji}, j,i=1,\cdots,K, j\neq i\}$ are system design
parameters that should be designed properly for optimal
performance. The practical significance of the parameterization in terms of the interference leakage levels will be clear in Section \ref{subsec:distributedControl}.} that controls the allowed level of interference
leakage from transmitter $i$ to receiver $j$ relative to the
thermal noise level $\sigma_j^2$  at receiver $j$. When $\alpha_{ji}=0$ for all $j\neq i$, the RZF constraints reduce to the conventional ZF constraints. When $\alpha_{ji} >0$, on the other hand, the ZF constraints are relaxed to yield a larger feasible set for $\vbf_i$ than that associated with the ZF constraints and due to this relaxation the condition $N \ge K$ is not necessary anymore.

Under the RZF framework, the power of interference from undesired transmitters at receiver $i$ is upper bounded as
\begin{equation}\label{eq:interf_power}
\textstyle
\sum_{j=1,j\neq i}^K |\hbf_{ij}^H\vbf_{j}|^2  \le \sum_{j\neq i}
\alpha_{ij}\sigma_i^2 =: \epsilon_i\sigma_i^2.
\end{equation}
Therefore, a lower bound on the rate of user $i$
under RZF is obtained by using \eqref{eq:interf_power} as
\begin{equation}\label{eq:lower_sum_rate}
\log\bigg( 1+\frac{|\hbf_{ii}^H\vbf_i|^2}{\sigma_i^2+\sum_{j\neq i}
|\hbf_{ij}^H\vbf_j|^2} \bigg) \ge \log\bigg( 1+\frac{|
\hbf_{ii}^H\vbf_i |^2}{(1+\epsilon_i)\sigma_i^2} \bigg).
\end{equation}
The lower bound on the rate at each receiver does not depend on the
beamforming vectors of undesired transmitters and thus,
exploiting the RZF constraints, we can convert the intertwined coordinated beam design problem into a set of separate problems for different users based on the lower bound \cite{Shang&Chen&Poor:11IT}. The separate problem for each transmitter based on RZF is  given as follows \cite{Shang&Chen&Poor:11IT,Zhang&Cui:10SP}:

\begin{problem}\label{prob:MISO_RZF_formulation1}
For each transmitter $i \in \{1,\cdots,K\}$,
\begin{eqnarray}
&\underset{\vbf_i}{\mbox{maximize}} ~ &
\log\bigg(1+\frac{|\hbf_{ii}^H\vbf_i|^2}{(1+\epsilon_i)\sigma_i^2}\bigg) \\
&\mbox{subject to} & |\hbf_{ji}^H\vbf_i|^2 \leq
\alpha_{ji}\sigma_j^2,
\qquad \forall j\neq i,  \\
&                 & \|\vbf_i\|^2 \leq P_i.
\end{eqnarray}
\end{problem}

\vspace{0.5em} \noindent Then, due to the monotonicity of the
logarithm, Problem \ref{prob:MISO_RZF_formulation1} is equivalent
to the following problem:

\begin{problem}[The MISO RZFCB problem]
\label{prob:MISO_RZF_formulation2} For each transmitter $i \in
\{1,\cdots,K\}$,
\begin{eqnarray}
&\underset{\vbf_i}{\mbox{maximize}}~ &|\hbf_{ii}^H\vbf_i|^2 \\
&\mbox{subject to} & |\hbf_{ji}^H\vbf_i|^2 \leq \alpha_{ji}\sigma_j^2,  \label{eq:Problem3RZFconst}
\qquad \forall j\neq i,  \\
&            & \|\vbf_i\|^2 \leq P_i.
\end{eqnarray}
\end{problem}

\noindent  From now on, we will consider Problem
\ref{prob:MISO_RZF_formulation2} (the RZFCB problem) and refer to
the solution to Problem \ref{prob:MISO_RZF_formulation2} as the RZF
beamforming vector.

\subsection{The Optimality and Solution Structure of RZFCB in MISO Interference Channels}
\label{subsec:solution_structure}

In this subsection, we will investigate the optimality and
structure of the solution to Problem
\ref{prob:MISO_RZF_formulation2}.  We start with the optimality of
the RZFCB scheme. Without inter-cell interference, it is optimal
for the transmitter to use the MF beam vector with
full transmit power. However, with inter-cell interference,
such a selfish strategy leads to poor performance due to large mutual interference \cite{Larsson&Jorswieck:08JSAC}. Thus, to enhance
the overall rate performance in the network, the beamforming
vector should be designed to be as close as possible to the MF
beam vector without giving too much interference to undesired
receivers, and this strategy is the RZFCB in Problem
\ref{prob:MISO_RZF_formulation2} (or Problem
\ref{prob:MISO_RZF_formulation1} equivalently). The optimality of
the RZFCB is given in the following theorem of Shang \textit{et al.} \cite{Shang&Chen&Poor:11IT} or Zhang and Cui \cite{Zhang&Cui:10SP}.

\begin{theorem}\cite{Zhang&Cui:10SP}\label{theo:pareto_achievability}
Any rate-tuple $(R_1,\cdots, R_K)$ on the Pareto boundary of the
achievable rate region defined in \eqref{eq:rate_region} can be
achieved by the RZFCB if the levels $\{\alpha_{ij}\sigma_i^2,\
\forall i,j\neq i\}$ of interference leakage
 are properly
chosen.
\end{theorem}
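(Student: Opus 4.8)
\medskip
\noindent\emph{Proof outline.}
The plan is to exhibit one explicit choice of the leakage levels under which RZFCB reproduces any prescribed Pareto-boundary point of $\Rc$. Fix a rate-tuple $(R_1,\cdots,R_K)$ on the Pareto boundary and let $(\vbf_1^\star,\cdots,\vbf_K^\star)$ be beamforming vectors attaining it, so that $\|\vbf_i^\star\|^2\le P_i$ and
\[
R_i=\log\bigg(1+\frac{|\hbf_{ii}^H\vbf_i^\star|^2}{\sigma_i^2+\sum_{j\neq i}|\hbf_{ij}^H\vbf_j^\star|^2}\bigg),\qquad i=1,\cdots,K.
\]
(That the boundary is actually attained by some $\{\vbf_i^\star\}$ follows from compactness of the power-constrained beam sets, which I would note at the outset.) I would then \emph{set each leakage level equal to the interference leakage actually produced by $\{\vbf_i^\star\}$}, i.e.\ $\alpha_{ij}\sigma_i^2:=|\hbf_{ij}^H\vbf_j^\star|^2$ for all $i$ and $j\neq i$. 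With this choice, the quantity $\epsilon_i\sigma_i^2=\sum_{j\neq i}\alpha_{ij}\sigma_i^2$ in \eqref{eq:interf_power} equals precisely the interference power seen at receiver $i$ under $\{\vbf_j^\star\}$, so the per-user lower bound \eqref{eq:lower_sum_rate} becomes an equality at $\vbf_i^\star$.

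Next I would compare the decoupled RZFCB design with $\{\vbf_i^\star\}$. On one hand, $\vbf_i^\star$ is feasible for Problem~\ref{prob:MISO_RZF_formulation2} with the above $\alpha_{ji}$, since it meets \eqref{eq:Problem3RZFconst} with equality and satisfies the power constraint; hence any RZFCB solution $\tilde\vbf_i$ obeys $|\hbf_{ii}^H\tilde\vbf_i|^2\ge|\hbf_{ii}^H\vbf_i^\star|^2$. On the other hand, \eqref{eq:Problem3RZFconst} forces $\sum_{j\neq i}|\hbf_{ij}^H\tilde\vbf_j|^2\le\sum_{j\neq i}\alpha_{ij}\sigma_i^2=\sum_{j\neq i}|\hbf_{ij}^H\vbf_j^\star|^2$. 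Combining the two,
\[
\tilde R_i=\log\bigg(1+\frac{|\hbf_{ii}^H\tilde\vbf_i|^2}{\sigma_i^2+\sum_{j\neq i}|\hbf_{ij}^H\tilde\vbf_j|^2}\bigg)\ \ge\ R_i\qquad\text{for every }i .
\]

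Finally I would invoke Pareto-optimality to upgrade this to equality: the tuple $(\tilde R_1,\cdots,\tilde R_K)$ is achievable, hence lies in $\Rc$, and since it componentwise dominates the Pareto-boundary point $(R_1,\cdots,R_K)$ it must coincide with it. Thus RZFCB with the stated leakage levels attains the given Pareto point.

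The argument is essentially a sandwich, so I do not expect a genuine technical obstacle; the content is concentrated in the single observation that choosing $\alpha_{ij}\sigma_i^2$ to be the \emph{realized} leakage $|\hbf_{ij}^H\vbf_j^\star|^2$ simultaneously (i) keeps $\vbf_i^\star$ feasible for the relaxed per-transmitter problem and (ii) makes the rate lower bound \eqref{eq:lower_sum_rate} tight, so that replacing the true rate by its bound and decoupling the joint beam design across transmitters loses nothing. The only routine points needing care are the attainment of the Pareto boundary by some $\{\vbf_i^\star\}$ and the feasibility/solvability of Problem~\ref{prob:MISO_RZF_formulation2} under the constructed $\alpha$'s.
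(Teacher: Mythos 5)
Your proof is correct and is essentially the standard argument: the paper itself only cites Proposition~3.2 of \cite{Zhang&Cui:10SP}, and that proof proceeds exactly as you do — set $\alpha_{ij}\sigma_i^2$ equal to the leakage realized at the Pareto-optimal beams, observe that each $\vbf_i^\star$ remains feasible so the decoupled maximization can only increase $|\hbf_{ii}^H\vbf_i|^2$ while the constraints can only decrease the interference, and then use Pareto-optimality to force equality. The only points needing care (attainment of the boundary via compactness, nonemptiness of the feasible set) are ones you already flag.
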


\begin{proof}
See Proposition 3.2 in \cite{Zhang&Cui:10SP}.
\end{proof}

\noindent Surprisingly, the separate beam design based on the rate lower bound in Problem \ref{prob:MISO_RZF_formulation2} can
achieve any Pareto-optimal point of the achievable rate region if the interference relaxation parameters are well chosen.\footnote{The beamforming vectors from Problem \ref{prob:MISO_RZF_formulation2} are necessary to achieve any point on the Pareto boundary but not sufficient. Not any choice of parameters $\{\alpha_{ij}\}$ leads to a point on the Pareto boundary. }
It was also shown that Problem 2 and the approach in \cite{Mochaourab&Jorswieck:11SP} are two different approaches to the same multi-objective optimization problem \cite{Vazquez&Neira&Lagunas:12WSA}. 
Due to Theorem \ref{theo:pareto_achievability}, in the MISO IC
case, the remaining problems for the RZFCB are {\it i)} {\em  to
construct an efficient algorithm to solve the RZFCB problem for given
$\{\alpha_{ij}\sigma_i^2,\ \forall i,j\neq i\}$} and {\it ii)} {\em  to
devise a method to design $\{\alpha_{ij}\sigma_i^2,\ \forall
i,j\neq i\}$ for controlling the location of the rate-tuple along
the Pareto boundary of the achievable rate region.}  We will
consider Problem \ref{prob:MISO_RZF_formulation2} for given
$\{\alpha_{ij}\sigma_i^2,\ \forall i,j\neq i\}$ here and will
consider the rate control problem in the next section.

First, we will derive an
efficient algorithm for obtaining a good approximate solution to
Problem \ref{prob:MISO_RZF_formulation2} for given
$\{\alpha_{ij}\sigma_i^2,\ \forall i,j\neq i\}$. To do this, we
need to investigate the solution structure of the RZFCB problem.
Instead of solving Problem \ref{prob:MISO_RZF_formulation1} as in
\cite{Zhang&Cui:10SP} (this becomes complicated due to the
logarithm), we here solve Problem
\ref{prob:MISO_RZF_formulation2}, which is equivalent to Problem
\ref{prob:MISO_RZF_formulation1}.
Note that Problem \ref{prob:MISO_RZF_formulation2} is not a
convex optimization problem since it maximizes a convex cost function
under convex constraint sets instead of minimizing the cost. However,
Problem \ref{prob:MISO_RZF_formulation2} can be made an equivalent
convex problem by exploiting the phase ambiguity of the solution to
Problem \ref{prob:MISO_RZF_formulation2} and making $\hbf_{ii}^H\vbf_i$
real and nonnegative without affecting the value of
$|\hbf_{ii}^H\vbf_i|$ as follows \cite{Bengtsson&Ottersten:99}:
\begin{problem}\label{prob:MISO_RZF_formulation3}
For each transmitter $i \in \{1,\cdots,K\}$,
\begin{eqnarray}
&\underset{\vbf_i}{\mbox{maximize}} ~ &
\hbf_{ii}^H\vbf_i     \\
&\mbox{subject to} & |\hbf_{ji}^H\vbf_i|^2 \leq
\alpha_{ji}\sigma_j^2,
\qquad \forall j\neq i,  \\
&                 & \|\vbf_i\|^2 \leq P_i , \\
&                 & \hbf_{ii}^H\vbf_i \geq 0.  \label{eq:Prob3imag}
\end{eqnarray}
\end{problem}
Here, the constraint \eqref{eq:Prob3imag} implies $\mbox{imag}(\hbf_{ii}^H\vbf_i)=0$  and due to this constraint, maximizing $|\hbf_{ii}^H\vbf_i|^2$ is equivalent to maximizing $\hbf_{ii}^H\vbf_i$.

\begin{lemma} \label{lemma:combination}
Let $\vbf_i^{opt}$ be a solution of the RZFCB problem
(i.e., Problem \ref{prob:MISO_RZF_formulation2}) for transmitter $i$.
Then, $\vbf_i^{opt}$ is represented as follows:
\begin{equation}  \label{eq:JorswieckParam}
\vbf_i^{opt} = c_{ii}\hbf_{ii}+\sum_{j \in \Gamma_i}c_{ji}\hbf_{ji}
\end{equation}
for some $\{c_{ji}\in{\mathbb{C}}: j\in\Gamma_i\cup\{i\}\}$, where
$\Gamma_i :=\{j: |\hbf_{ji}^H\vbf_i^{opt}|^2 =
\alpha_{ji}\sigma_j^2\}$, $\|\vbf_i^{opt}\|^2 = P_i$ for $N\ge K$, and $\|\vbf_i^{opt}\|^2\le P_i$ for $N < K$.
\end{lemma}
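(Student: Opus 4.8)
The plan is to pass to the equivalent convex reformulation (Problem~\ref{prob:MISO_RZF_formulation3}) and to read off the structure of $\vbf_i^{opt}$ from its KKT conditions. In Problem~\ref{prob:MISO_RZF_formulation3} the objective $\hbf_{ii}^H\vbf_i$ is affine in $\vbf_i$, the leakage constraints $|\hbf_{ji}^H\vbf_i|^2-\alpha_{ji}\sigma_j^2\le 0$ and the power constraint $\|\vbf_i\|^2-P_i\le 0$ are convex quadratics, and $\hbf_{ii}^H\vbf_i\ge 0$ is affine; since $\vbf_i=0$ strictly satisfies the power constraint and every leakage constraint with $\alpha_{ji}>0$, while the leakage constraints with $\alpha_{ji}=0$ and the phase constraint are affine, Slater's condition holds (strict feasibility being needed only for the non-affine constraints), so the KKT conditions are necessary and sufficient for optimality. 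I would therefore work with a KKT point, which is exactly $\vbf_i^{opt}$.

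Next I would write the stationarity condition using complex (Wirtinger) derivatives. Each active leakage constraint contributes to stationarity a term that is a complex multiple of the corresponding $\hbf_{ji}$ (the gradient is $\propto\hbf_{ji}$ in both the active-quadratic case $\alpha_{ji}>0$ and the affine case $\alpha_{ji}=0$), the power constraint contributes $\lambda\vbf_i^{opt}$ with $\lambda\ge 0$, and the objective together with the phase/sign constraint contributes $c\,\hbf_{ii}$ with $c\in{\mathbb{C}}$, $\mbox{Re}(c)>0$ (so $c\ne 0$). By complementary slackness the leakage terms survive only for $j\in\Gamma_i$, so stationarity reads
\[
c\,\hbf_{ii} \;=\; \lambda\,\vbf_i^{opt} \;+\; \sum_{j\in\Gamma_i} d_{ji}\,\hbf_{ji}
\]
for some $d_{ji}\in{\mathbb{C}}$.

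It then remains to split on whether the power constraint is active. If $\lambda>0$, solving for $\vbf_i^{opt}$ gives $\vbf_i^{opt} = (c/\lambda)\hbf_{ii} - \sum_{j\in\Gamma_i}(d_{ji}/\lambda)\hbf_{ji}$, which is precisely the asserted form~\eqref{eq:JorswieckParam}, and $\|\vbf_i^{opt}\|^2=P_i$. If $\lambda=0$, then $c\,\hbf_{ii}=\sum_{j\in\Gamma_i}d_{ji}\hbf_{ji}$ with $c\ne 0$, hence $\hbf_{ii}\in\mbox{span}\{\hbf_{ji}:j\in\Gamma_i\}$; by Assumption~\emph{(A.1)} this cannot happen when $N\ge K$ (the $\hbf_{ji}$, $j=1,\dots,K$, are then linearly independent, while $|\Gamma_i|\le K-1$ and $i\notin\Gamma_i$), so $\lambda=0$ forces $N<K$, and then \emph{(A.1)} forces $|\Gamma_i|\ge N$ (otherwise $\{\hbf_{ji}:j\in\Gamma_i\cup\{i\}\}$ would be a linearly independent set), whence $\{\hbf_{ji}:j\in\Gamma_i\}$ already spans ${\mathbb{C}}^N$ and $\vbf_i^{opt}$ is trivially a linear combination of these vectors; in this $N<K$ case only $\|\vbf_i^{opt}\|^2\le P_i$ is claimed, consistent with the statement.

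The step I expect to be the main obstacle — or at least the one needing care — is the $\lambda=0$ branch: one has to prove that the power constraint can be slack only in the regime $N<K$ and that whenever it is slack the active-leakage channel vectors $\{\hbf_{ji}:j\in\Gamma_i\}$ already span the whole transmit space, so that the representation produced by the KKT stationarity equation does not lose a degree of freedom. The remaining ingredients — verifying Slater/KKT for the complex-variable program, separating the affine ($\alpha_{ji}=0$) from the quadratic ($\alpha_{ji}>0$) leakage constraints when forming gradients, and the (immediate) degenerate case in which the feasible set is $\{0\}$ — are routine.
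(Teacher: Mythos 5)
Your proposal is correct and follows essentially the same route as the paper's proof: pass to the convex reformulation in Problem~\ref{prob:MISO_RZF_formulation3}, apply KKT stationarity with complementary slackness, and split on whether the power-constraint multiplier is positive. The only deviations are local refinements rather than a different approach --- you read the representation directly off the stationarity equation instead of inverting $\mu^\star\Ibf+\sum_{j\in\Gamma_i}\lambda_j^\star\hbf_{ji}\hbf_{ji}^H$ via the matrix inversion lemma, you establish full power for $N\ge K$ by a linear-independence contradiction in the stationarity equation rather than the paper's feasibility-of-ZF argument, and you explicitly check Slater and the $\alpha_{ji}=0$ (affine) case, all of which are valid and if anything slightly tighter than the paper's treatment.
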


\begin{proof} Proof is based on the equivalent formulation in Problem \ref{prob:MISO_RZF_formulation3}.
Since Problem \ref{prob:MISO_RZF_formulation3} is a convex optimization problem,  the optimal solution can be obtained by the Karush-Kuhn-Tucker (KKT) conditions. The Lagrangian of Problem \ref{prob:MISO_RZF_formulation3} for transmitter $i$ is given by
\begin{eqnarray}\label{eq:Lagrangian}
& &\hspace{-1.8em} {\mathcal{L}}(\vbf_i, \lambdabf, \mu, \nu)
= - \hbf_{ii}^H \vbf_i \\
& &\hspace{-1.7em}
+\sum_{j=1, j\neq i}^{K}
  \lambda_j(|\hbf_{ji}^H\vbf_i|^2-\alpha_{ji}\sigma_j^2)
+\mu(\|\vbf_i\|^2-P_i)
-\nu\hbf_{ii}^H\vbf_i, \nonumber
\end{eqnarray}
where $\lambdabf:=\{\lambda_j\ge0: j=1,\cdots,i-1,i+1,\cdots, K\}$ and $\mu, \nu\ge 0$ are real dual variables. With optimal dual variables $\lambdabf^\star$, $\mu^\star$, and $\nu^\star$, the (complex)  gradient of the Lagrangian should be zero at  $\vbf_i^{opt}$, i.e.,
\begin{align}\label{eq:grad}
\mathbf{0} & = \nabla_{\vbf_i^*}{\mathcal{L}}(\vbf_i, \lambdabf^\star,
\mu^\star, \nu^\star)\big|_{\vbf_i=\vbf_i^{opt}}   \\
& =
- \hbf_{ii} + \sum_{j=1,j\neq i}^{K}
\lambda_j^\star\hbf_{ji}\hbf_{ji}^H\vbf_i^{opt}
+ \mu^\star\vbf_i^{opt} - \nu^\star\hbf_{ii}  \nonumber \\
& = - \hbf_{ii} + \sum_{j\in\Gamma_i}
\lambda_j^\star\hbf_{ji}\hbf_{ji}^H\vbf_i^{opt} +
\mu^\star\vbf_i^{opt} - \nu^\star\hbf_{ii}, \nonumber
\end{align}
where $\Gamma_i := \{j: \lambda_j^\star>0 \}$ and $\nabla_{\vbf_i^*}$ is the conjugate Wirtinger gradient. From the
complementary slackness condition, $\lambda_j^\star>0$ only when
$|\hbf_{ji}^H\vbf_i|^2=\alpha_{ji}\sigma_i^2$. Also, from the complementary slackness, we have
$\nu^\star=0$. Otherwise, $\hbf_{ii}^H\vbf_i^{opt}=0$ and thus no rate is provided to user $i$.
Thus, the gradient of the Lagrangian becomes zero if and only if
\begin{equation} \label{eq:Lagrangian_deriv}
\textstyle
\hbf_{ii}
=\left( \mu^\star\Ibf
+\sum_{j\in\Gamma_i}\lambda_j^\star\hbf_{ji}\hbf_{ji}^H\right)
\vbf_i^{opt}.
\end{equation}
If $\Qbf:=(\mu^\star\Ibf
+\sum_{j\in\Gamma_i}\lambda_j^\star\hbf_{ji}\hbf_{ji}^H)$ is singular, then $\vbf_i^{opt}$ exists if and only if  $\hbf_{ii}\in{\mathcal{C}}(\Qbf)$. However, the condition  $\hbf_{ii}\in{\mathcal{C}}(\Qbf)$ does not
occur almost surely for randomly realized channel vectors, which is assumed here.
Therefore, $\Qbf$ should have full rank for the existence of $\vbf_i^{opt}$ and the corresponding
 $\vbf_i^{opt}$ has two different forms according to the optimal dual variable $\mu^\star$.

$i)\ \mu^\star > 0$: This corresponds to the case in
which the transmitter uses full power, i.e.,
$\|\vbf_i^{opt}\|^2=P_i$. In this case, the optimal solution is given by
\begin{eqnarray} \label{eq:lagrangian2}
\textstyle
\vbf_i^{opt} =
\left(
\mu^\star\Ibf+\sum_{j\in\Gamma_i}\lambda_j^\star\hbf_{ji}\hbf_{ji}^H
\right)^{-1}\hbf_{ii}.
\end{eqnarray}
By applying the matrix inversion lemma recursively, it can be shown that
$\vbf_i^{opt}$ is a linear combination of $\{\hbf_{ji}:
~j\in\Gamma_i^\prime:=\Gamma_i\cup\{i\}\}$. Thus, the solution is represented as  \eqref{eq:JorswieckParam}.

$ii)\ \mu^\star = 0$: This case corresponds to the case in
which full power is not used at transmitter $i$. In this case,
$\Qbf= \sum_{j\in{\Gamma}_i}\lambda_j^\star\hbf_{ji}\hbf_{ji}^H$. The
 matrix $\Qbf$ in this case is non-singular if and only if
 $|\Gamma_i|\ge N$ (i.e., $K > N$) under the assumption {\em (A.1)}, and
 the corresponding solution is
given by
\begin{eqnarray} \label{eq:lagrangian3}
\textstyle
\vbf_i^{opt} =
\left(
\sum_{j\in\Gamma_i}\lambda_j^\star\hbf_{ji}\hbf_{ji}^H
\right)^{-1}\hbf_{ii}.
\end{eqnarray}
In this case, $\{\hbf_{ij}, j \in \Gamma_i\}$ alone span
 ${\mathbb{C}}^N$ fully and 
it is therefore clear that the solution is represented as
 \eqref{eq:JorswieckParam}.
Indeed, any subset of
$\{\hbf_{ji},j=1,\cdots,K\}$ with cardinality $N$ forms a full basis
for ${\mathbb{C}}^N$ under the assumption {\em (A.1)} in this case.

Furthermore, when $N \ge K$, $\vbf_i^{ZF}$ is feasible and thus, we
 can always increase power and rate without causing interference to the
 undesired receivers. Therefore,  the optimal solution uses full power,
 i.e., $||\vbf_i^{opt}||^2 =P_i$ when $N \ge K$. On the other hand,
when $N < K$, we can have either $\mu^\star > 0$ ($||\vbf_i^{opt}||^2 =
 P_i$) or $\mu^\star = 0$ ($||\vbf_i^{opt}||^2 < P_i$).
\end{proof}

The solution to RZFCB for a given set of interference
relaxation levels is a linear combination of the desired channel
and a subset of interference channels for which the RZF constraint
\eqref{eq:Problem3RZFconst} is satisfied with equality.
Furthermore,
 it was shown that the interference leakage levels should be designed to make the RZF interference leakage constraints be satisfied tightly  in order to achieve a point on the Pareto boundary \cite{Zhang&Cui:10SP}. In this case, $\Gamma_i = \{1,\cdots,K\}\backslash \{i\}$ and thus, the RZF beam structure  in Lemma \ref{lemma:combination} coincides with  the Pareto-optimal beam structure derived by Jorswieck {\em et al.} in \cite{Jorswieck&Larsson&Danev:08SP}. Now, based on Lemma \ref{lemma:combination}, we present a new useful representation of $\vbf_i^{opt}$ that provides a clear insight into the RZFCB solution and a basis for fast algorithm construction.

\begin{theorem} \label{theo:SuccessiveZeroForcing}
For transmitter $i$, the RZFCB solution can also be expressed as
\begin{equation}\label{eq:solution_structure}
\vbf_i^{opt} = c_0
\frac{\hbf_{ii}}{\|\hbf_{ii}\|} + c_1
\frac{\Pibf_{\Abf_1}^\perp\hbf_{ii}}{\|\Pibf_{\Abf_1}^\perp\hbf_{ii}\|}
+ \cdots +c_{|\widetilde{\Gamma}_i|}
\frac{\Pibf_{\Abf_{|\widetilde{\Gamma}_i|}}^\perp\hbf_{ii}}{\|\Pibf_{\Abf_{|\widetilde{\Gamma}_i|}}^\perp\hbf_{ii}\|},
\end{equation}
where 
$c_j\in{\mathbb{C}}, ~  j = 0, 1,\cdots , |\widetilde{\Gamma}_i|$ and
$\Abf_j$ is constructed recursively as
\begin{equation} \label{theo:SuccessiveZeroForcing_ABFj}
\Abf_j:=[\Abf_{j-1},\ \hbf_{\widetilde{\Gamma}_i(j), i}], ~~~j=1,
\cdots, |\widetilde{\Gamma}_i|.
\end{equation}
Here for convenience we let $\Abf_0$
be an $N \times 0$ 'matrix'.
$\widetilde{\Gamma}_i$ is a
set made by permuting the elements of $\Gamma_i$ according to an arbitrary order, and  $\widetilde{\Gamma}_i(j)$ denotes the $j$-th element of $\widetilde{\Gamma}_i$.
\end{theorem}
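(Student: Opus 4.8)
The plan is to deduce \eqref{eq:solution_structure} from Lemma~\ref{lemma:combination} by showing that the $|\widetilde{\Gamma}_i|+1$ sequentially orthogonalized unit vectors appearing in \eqref{eq:solution_structure} span the same subspace of ${\mathbb{C}}^N$ as $\{\hbf_{ii}\}\cup\{\hbf_{ji}:j\in\Gamma_i\}$. Write $\mathbf{u}_0:=\hbf_{ii}/\|\hbf_{ii}\|$ and $\mathbf{u}_j:=\Pibf_{\Abf_j}^\perp\hbf_{ii}/\|\Pibf_{\Abf_j}^\perp\hbf_{ii}\|$ for $j=1,\dots,|\widetilde{\Gamma}_i|$; under {\em (A.1)} one has $\hbf_{ii}\notin{\mathcal{C}}(\Abf_j)$ whenever $j<N$, so each $\mathbf{u}_j$ is well defined almost surely (the borderline case $N<K$ with $|\Gamma_i|\ge N$, in which the later projections vanish, is handled at the end). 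By Lemma~\ref{lemma:combination}, $\vbf_i^{opt}\in{\mathcal{V}}_i:=\mbox{span}\{\hbf_{ii}\}+{\mathcal{C}}(\Abf_{|\widetilde{\Gamma}_i|})=\mbox{span}(\{\hbf_{ii}\}\cup\{\hbf_{ji}:j\in\Gamma_i\})$, a subspace that does not depend on the ordering chosen in \eqref{theo:SuccessiveZeroForcing_ABFj}. Hence, once $\mbox{span}\{\mathbf{u}_0,\dots,\mathbf{u}_{|\widetilde{\Gamma}_i|}\}={\mathcal{V}}_i$ is established, $\vbf_i^{opt}$ is automatically a linear combination of $\mathbf{u}_0,\dots,\mathbf{u}_{|\widetilde{\Gamma}_i|}$, which is precisely the asserted form \eqref{eq:solution_structure}.

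The heart of the argument is the claim
\[
\mbox{span}\{\mathbf{u}_0,\dots,\mathbf{u}_j\}=\mbox{span}\{\hbf_{ii},\hbf_{\widetilde{\Gamma}_i(1),i},\dots,\hbf_{\widetilde{\Gamma}_i(j),i}\},\qquad j=0,1,\dots,|\widetilde{\Gamma}_i|,
\]
which I would prove by induction on $j$, the case $j=0$ being trivial. For the inductive step, let $\mathbf{e}_j:=\Pibf_{\Abf_{j-1}}^\perp\hbf_{\widetilde{\Gamma}_i(j),i}$ be the component of the newly appended interference channel orthogonal to ${\mathcal{C}}(\Abf_{j-1})$; since $\mathbf{e}_j\perp{\mathcal{C}}(\Abf_{j-1})$, the rank-one update $\Pibf_{\Abf_j}^\perp=\Pibf_{\Abf_{j-1}}^\perp-\mathbf{e}_j\mathbf{e}_j^H/\|\mathbf{e}_j\|^2$ holds, and applying it to $\hbf_{ii}$ gives
\[
\|\Pibf_{\Abf_{j-1}}^\perp\hbf_{ii}\|\,\mathbf{u}_{j-1}=\|\Pibf_{\Abf_j}^\perp\hbf_{ii}\|\,\mathbf{u}_j+\frac{\mathbf{e}_j^H\Pibf_{\Abf_{j-1}}^\perp\hbf_{ii}}{\|\mathbf{e}_j\|^2}\,\mathbf{e}_j .
\]
The forward inclusion ($\subseteq$) is immediate, since $\mathbf{u}_j\propto\hbf_{ii}-\Pibf_{\Abf_j}\hbf_{ii}$ with $\Pibf_{\Abf_j}\hbf_{ii}\in{\mathcal{C}}(\Abf_j)=\mbox{span}\{\hbf_{\widetilde{\Gamma}_i(1),i},\dots,\hbf_{\widetilde{\Gamma}_i(j),i}\}$, while $\mathbf{u}_0,\dots,\mathbf{u}_{j-1}$ are covered by the induction hypothesis. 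For the reverse inclusion ($\supseteq$), under {\em (A.1)} both $\mathbf{e}_j\ne\mathbf{0}$ and $\mathbf{e}_j^H\Pibf_{\Abf_{j-1}}^\perp\hbf_{ii}\ne0$ almost surely, so the displayed identity expresses $\mathbf{e}_j$ as a linear combination of $\mathbf{u}_{j-1}$ and $\mathbf{u}_j$; combining this with $\hbf_{\widetilde{\Gamma}_i(j),i}=\mathbf{e}_j+\Pibf_{\Abf_{j-1}}\hbf_{\widetilde{\Gamma}_i(j),i}$ and the hypothesis ${\mathcal{C}}(\Abf_{j-1})\subseteq\mbox{span}\{\mathbf{u}_0,\dots,\mathbf{u}_{j-1}\}$ shows $\hbf_{\widetilde{\Gamma}_i(j),i}\in\mbox{span}\{\mathbf{u}_0,\dots,\mathbf{u}_j\}$, which closes the step.

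Taking $j=|\widetilde{\Gamma}_i|$ in the claim, if $N\ge K$ then $|\widetilde{\Gamma}_i|\le K-1\le N-1$, so $\mathbf{u}_0,\dots,\mathbf{u}_{|\widetilde{\Gamma}_i|}$ span ${\mathcal{V}}_i\ni\vbf_i^{opt}$ and \eqref{eq:solution_structure} follows; the reasoning is identical when $N<K$ and $|\Gamma_i|\le N-1$. In the remaining case $N<K$ with $|\Gamma_i|\ge N$, assumption {\em (A.1)} makes $\{\hbf_{ii},\hbf_{\widetilde{\Gamma}_i(1),i},\dots,\hbf_{\widetilde{\Gamma}_i(N-1),i}\}$ a basis of ${\mathbb{C}}^N$, so by the claim $\mbox{span}\{\mathbf{u}_0,\dots,\mathbf{u}_{N-1}\}={\mathbb{C}}^N$, the vectors $\Pibf_{\Abf_j}^\perp\hbf_{ii}$ vanish for $j\ge N$, and \eqref{eq:solution_structure} still holds with $c_N=\cdots=c_{|\widetilde{\Gamma}_i|}=0$. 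I expect the main obstacle to be the reverse inclusion in the inductive step --- recovering each successive interference channel $\hbf_{\widetilde{\Gamma}_i(j),i}$ from the orthogonalized vectors $\mathbf{u}_{j-1},\mathbf{u}_j$ --- since this is where the rank-one projection update must be inverted and where the genericity afforded by {\em (A.1)} ($\mathbf{e}_j\ne\mathbf{0}$ and $\mathbf{e}_j^H\Pibf_{\Abf_{j-1}}^\perp\hbf_{ii}\ne0$) is indispensable, together with the bookkeeping caused by vanishing projections when $N<K$.
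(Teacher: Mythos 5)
Your proof is correct and takes essentially the same route as the paper: both reduce the theorem to Lemma~\ref{lemma:combination} plus the subspace identity $\mbox{span}\{\mathbf{u}_0,\dots,\mathbf{u}_{|\widetilde{\Gamma}_i|}\}=\Cc([\hbf_{ji}]_{j\in\Gamma_i^\prime})$, and both split into the cases $|\Gamma_i|\le N-1$ and $|\Gamma_i|\ge N$. The only difference is mechanical --- you verify the identity by induction using the rank-one projection update (the same recursion as Lemma~\ref{lemma:sequential_orthogonal}), whereas the paper uses a dimension count plus annihilation by $\Pibf_{[\hbf_{ji}]_{j\in\Gamma_i^\prime}}^\perp$ --- and the genericity condition $\mathbf{e}_j^H\Pibf_{\Abf_{j-1}}^\perp\hbf_{ii}\neq 0$ that you state explicitly is implicitly required by the paper's linear-independence claim as well.
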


\begin{proof}
From Lemma \ref{lemma:combination}, we know that $\vbf_i^{opt}\in
\Cc([\hbf_{ji}]_{j\in\Gamma_i^\prime})$. Proof of the theorem is
given by showing the equivalence of the two subspaces
$\Cc([\hbf_{ji}]_{j\in\Gamma_i^\prime})$ and
$\Cc([\hbf_{ii},\Pibf_{\Abf_{1}}^\perp\hbf_{ii}, \cdots,$
$\Pibf_{\Abf_{|\widetilde{\Gamma}_i|}}^\perp\hbf_{ii}])$.

 {\em Case (i). $|\Gamma_i|(=|\tilde{\Gamma}_i|) \le N-1$: } In this case,
$\{\hbf_{ii},\Pibf_{\Abf_{1}}^\perp\hbf_{ii}, \cdots,
\Pibf_{\Abf_{|\widetilde{\Gamma}_i|}}^\perp\hbf_{ii}\}$ are linearly
independent. This is easily shown  by replacing
$\Pibf_{\Abf_j}^\bot$ with $\Ibf-\Pibf_{\Abf_j}$ and by using the
linear independence of $\{\hbf_{ji}\}_{j\in\Gamma_i^\prime}$. Thus,
the dimension of $\Cc([\hbf_{ii}, \Pibf_{\Abf_{1}}^\perp\hbf_{ii},
\cdots,$ $\Pibf_{\Abf_{|\widetilde{\Gamma}_i|}}^\perp\hbf_{ii}])$ is
$(|\widetilde{\Gamma}_i|+1)$, which is the same as that of
$\Cc([\hbf_{ji}]_{j\in\Gamma_i^\prime})$. Now, consider the
projection of any vector in $\Cc([\hbf_{ii},
\Pibf_{\Abf_1}^\perp\hbf_{ii}, \cdots,
\Pibf_{\Abf_{|\widetilde{\Gamma}_i|}}^\perp\hbf_{ii}])$ onto the
orthogonal complement of $\Cc([\hbf_{ji}]_{j\in\Gamma_i^\prime})$: {
\begin{eqnarray}
& &   \Pibf_{[\hbf_{ji}]_{j\in\Gamma_i^\prime}}^\perp\big( c_0\hbf_{ii}
 +c_1 \Pibf_{\Abf_1}^\perp\hbf_{ii} + \cdots
 +c_{|\widetilde{\Gamma}_i|}
\Pibf_{\Abf_{|\widetilde{\Gamma}_i|}}^\perp\hbf_{ii}\big)
\nonumber \\
&=&   \Pibf_{[\hbf_{ji}]_{j\in\Gamma_i^\prime}}^\perp\big( c_0\hbf_{ii}
 +c_1 (\Ibf-\Pibf_{\Abf_1})\hbf_{ii} + \cdots
 +c_{|\widetilde{\Gamma}_i|}
(\Ibf-\Pibf_{\Abf_{|\widetilde{\Gamma}_i|}})\hbf_{ii}\big)
\nonumber \\
&=&   \Pibf_{[\hbf_{ji}]_{j\in\Gamma_i^\prime}}^\perp\big( c_0\hbf_{ii}
 +c_1 (\Ibf-\Pibf_{\hbf_{\widetilde{\Gamma}_i(1),i}})\hbf_{ii} + \cdots
 +c_{|\widetilde{\Gamma}_i|}
(\Ibf-\Pibf_{[\hbf_{\widetilde{\Gamma}_i(j),i}]_{j=1}^{|\widetilde{\Gamma}_i|}}
)\hbf_{ii}\big) \nonumber \\
&=& \textstyle \Pibf_{[\hbf_{ji}]_{j\in\Gamma_i^\prime}}^\perp\Big(
    \sum_{j=0}^{|\widetilde{\Gamma}_i|} c_j\hbf_{ii}
 - c_1 \Pibf_{\hbf_{\widetilde{\Gamma}_i(1),i}}\hbf_{ii}
 - \cdots
 - c_{|\widetilde{\Gamma}_i|}
\Pibf_{[\hbf_{\widetilde{\Gamma}_i(j),i}]_{j=1}^{|\widetilde{\Gamma}_i|}}
\hbf_{ii}\Big) \nonumber \\
&=& 0. \label{eq:proj_onto_ortho_comple1}
\end{eqnarray}}
By \eqref{eq:proj_onto_ortho_comple1} the orthogonal complement of
$\Cc(\big[\hbf_{ii}, \Pibf_{\Abf_1}^\perp\hbf_{ii}, \cdots,$
$\Pibf_{\Abf_{|\widetilde{\Gamma}_i|}}^\bot\hbf_{ii}\big])$ is
included in that of $\Cc([\hbf_{ji}]_{j\in\Gamma_i^\prime})$, but
$\Cc(\big[\hbf_{ii}, \Pibf_{\Abf_1}^\perp\hbf_{ii}, \cdots,$
$\Pibf_{\Abf_{|\widetilde{\Gamma}_i|}}^\bot\hbf_{ii}\big])$ and
$\Cc([\hbf_{ji}]_{j\in\Gamma_i^\prime})$ have the same dimensions.
Thus, the two orthogonal complements are the same, and hence, the
two subspaces themselves are the same. Consequently, for any
$c_{ii}\hbf_{ii}+\sum_{j \in \Gamma_i} c_{ji}\hbf_{ji}$ with
arbitrary $\{c_{ji} \in {\mathbb{C}}: j\in\Gamma_i^\prime\}$,
there exists some $\{c_j\in{\mathbb{C}}: 0\le j\le
|\widetilde{\Gamma}_i|\}$ s.t.
\[
\vbf_i^{opt}
= c_{ii} \hbf_{ii} + \sum_{j \in{\Gamma_i}} c_{ji} \hbf_{ji}
= c_0 \frac{\hbf_{ii}}{\|\hbf_{ii}\|}
    + \sum_{j=1}^{|\widetilde{\Gamma}_i|}
      c_j\frac{\Pibf_{\Abf_j}^\bot\hbf_{ii}}{\|\Pibf_{\Abf_j}^\bot\hbf_{ii}\|}.
\]

 {\em Case (ii). $|\Gamma_i|\ge  N$:} In this case, both $\{\hbf_{ji}, j\in\Gamma_i^\prime\}$ and
$\{\hbf_{ii},\Pibf_{\Abf_{1}}^\perp\hbf_{ii}, \cdots, \Pibf_{\Abf_{|\widetilde{\Gamma}_i|}}^\perp\hbf_{ii}\}$ span the whole ${\mathbb{C}}^N$. Thus, the claim is trivially satisfied.
\end{proof}

\noindent Theorem \ref{theo:SuccessiveZeroForcing} states that the RZF
solution  is a linear combination of vectors that are obtained by
projecting the desired channel vector onto the orthogonal
complements of a series of subspaces spanned by the channels from
the transmitter to the undesired receivers. Furthermore, the
series of subspaces are obtained by sequentially including one
additional interference channel vector at a time, as shown in
\eqref{theo:SuccessiveZeroForcing_ABFj}. Soon, it will be shown
that, to obtain the RZF solution to Problem
\ref{prob:MISO_RZF_formulation2}, the order of interference
channel inclusion for constructing  $\Abf_j$s in Theorem
\ref{theo:SuccessiveZeroForcing} is determined by the set of
allowed interference levels and the channel realization.

\subsection{The Sequential Orthogonal Projection Combining Method
and Closed-Form Solutions}

In this subsection, we propose an efficient beam design method for
RZFCB that successively allocates the transmit power to certain
vectors obtained by sequential orthogonal projection of the
desired channel vector onto monotonically decreasing subspaces.
Furthermore, we provide the closed-form solution to the RZFCB
problem in the two-user case and an approximate closed-form
solution in the three-user case.

To obtain the RZF beamforming
vector under given interference relaxation constraints for a given
channel realization, Problem \ref{prob:MISO_RZF_formulation2}
should be solved. One can use a numerical method \cite{Bland&Goldfarb&Todd:81OR}, as in \cite{Zhang&Cui:10SP}. However, such a method
requires a numerical search  for determining the Lagrange dual
variables satisfying the RZF constraints and the transmit power
constraint. To circumvent such difficulty and to increase the
practicality  of the RZFCB, we exploit Theorem
\ref{theo:SuccessiveZeroForcing} to construct an efficient method
to find the RZFCB solution. Theorem
\ref{theo:SuccessiveZeroForcing} provides us with a very
convenient way of obtaining the RZFCB solution for given
interference leakage levels for a given channel realization; we
only need to find $\widetilde{\Gamma}_i$ and complex coefficients
$\{c_i\}$ in \eqref{eq:solution_structure} for each transmitter.
The idea is based on the fact that  the RZF beamforming vector
should be designed to be as close as possible to the MF beam
vector under the interference leakage constraints for the maximum
rate under RZF, as described in Problem
\ref{prob:MISO_RZF_formulation2}. Hereafter, we will explain how
the coefficients $\{c_i\}$ and the matrices $\{\Abf_i\}$ in
Theorem \ref{theo:SuccessiveZeroForcing} can be obtained to
maximize the rate under the RZF interference and power
constraints. Consider transmitter $i$ without loss of generality.
For the given transmit power constraint $\|\vbf_i\|_2^2 \le P_i$,
it may not be possible to allocate all of the transmit power to the
MF direction $\hbf_{ii}$ because this allocation may violate the
RZF constraints. The rate greedy approach under the
 RZF constraints for a given channel realization is explained as follows.
First, we should  start to allocate
 the transmit power to the direction of $\hbf_{ii}$ by increasing
 $c_0$ with some phase until this allocation hits  one of the RZF constraints with equality, i.e.,  the interference level to one of the undesired receivers reaches
  the allowed maximum exactly. (In the case that the allowed interference
  levels to all undesired receivers are the same, this
 receiver is the receiver whose channel vector has the maximum inner product
 with $\hbf_{ii}$.) The index of this receiver is
 $\widetilde{\Gamma}_i(1)$.  At this point, transmitter $i$ cannot allocate
 the transmit power to the direction $\hbf_{ii}$ anymore since this
 would violate the RZF constraint for receiver $\widetilde{\Gamma}_i(1)$. Since the RZF constraints for other
 undesired receivers are still met with strict inequality,
 transmitter $i$ can still cause interference to the remaining
 receivers. Thus, for the maximum rate under the RZF constraints,
 transmitter $i$ should now start to allocate the remaining power to the
 direction of
 $\Pibf_{\Abf_1}^\perp
\hbf_{ii}$, where $\Abf_1=[\hbf_{\widetilde{\Gamma}_i(1),i}]$,  until
this allocation hits another RZF constraint with equality. The
index of this  receiver is $\widetilde{\Gamma}_i(2)$. (Note that
$\Pibf_{\Abf_1}^\perp \hbf_{ii}$ is the direction of maximizing
the data rate without causing additional interference to receiver
$\widetilde{\Gamma}_i(1)$.) Now, transmitter $i$ cannot cause
interference to receiver $\widetilde{\Gamma}_i(2)$ in addition to
receiver $\widetilde{\Gamma}_i(1)$ anymore. Therefore, at this point,
transmitter $i$ should start to allocate its remaining power to
the next greedy direction $\Pibf_{\Abf_2}^\perp \hbf_{ii}$, where
$\Abf_2=[\hbf_{\widetilde{\Gamma}_i(1),i},
\hbf_{\widetilde{\Gamma}_i(2),i} ]$.
This greedy power allocation
without violating the RZF constraints should be done until either all the
transmit power is used up ($\mu^\star > 0$ in Lemma \ref{lemma:combination}) or we cannot find a new direction that does not cause interference to the users that are already in the set $\tilde{\Gamma}_i$ ($\mu^\star = 0$ in Lemma \ref{lemma:combination}). When $N \ge K$ and transmit power still remains even after hitting all the $K-1$ interference leakage constraints with equality, from then on, all the remaining power should be
allocated to the ZF direction. This coincides with our intuition that  ZF is optimal at a high signal-to-noise ratio (SNR) in the case of $N \ge K$. On the other hand,
when all the transmit power is used up before reaching the remaining interference constraints
with equality, the corresponding remaining interference channel vectors do not
appear in the solution. The final RZF solution is the sum of these
component vectors and has the form in
\eqref{eq:solution_structure}. In this way, the RZFCB solution can
be obtained by combining the sequential projections of the desired
channel vector $\hbf_{ii}$ onto the orthogonal complements of the
subspaces $\Cc(\Abf_1) \subset \cdots \subset
\Cc(\Abf_{|\widetilde{\Gamma}_i|})$. Thus, we refer to this beam
design method as the {\em sequential orthogonal projection combining (SOPC)
 method.\footnote{The rate optimality of the SOPC strategy under the RZF constraints is straightforward to see. Suppose that we are given any beam vector that is a linear combination of $\{\hbf_{ji}\}$, satisfies the RZF interference and power constraints but is not  the SOPC solution. Then, the vector can still be represented in terms of the SOPC basis in Theorem \ref{theo:SuccessiveZeroForcing} and some of the basis component vectors with larger inner product with the MF direction do not satisfy the RZF constraints with equality. Thus,
  the rate can be increased by allocating power from the basis component vector with smaller inner product with the MF direction to the basis component vector with larger inner product with the MF direction until the RZF constraints are satisfied with equality.} } By Theorem  \ref{theo:pareto_achievability}, {\em the SOPC strategy with a well chosen set of interference relaxation levels is Pareto-optimal for MISO $K$-pair interference channels  with single-user decoding.}

 An interesting interpretation of the SOPC strategy is in an analogy with the water-filling strategy. The water-filling strategy distributes power to resource bins according to the effectiveness of each bin, and the power fills into the bin with the lowest noise level (or the most effective bin) first.
Similarly, the SOPC strategy allocates power to the most effective
direction first and then the next most effective direction when
the first direction cannot accommodate  power anymore. This procedure
continues until either the procedure uses up the power or it cannot find a new feasible direction.
  So, the SOPC strategy can be viewed graphically as pouring water on top of a multi-tiered fountain, as
illustrated in Fig. \ref{fig:SOPC_strategy}.
The relationship of the RZFCB/SOPC design and the two-user result by Jorswieck {\em et al.} \cite{Jorswieck&Larsson&Danev:08SP} is explained in Fig.
\ref{fig:SOPC_interpret}.
In the two user case, Jorswieck {\em et al.} have shown that a Pareto-optimal beam vector is a convex combination of the MF beam $\vbf_i^{MF}$ and the ZF beam $\vbf_i^{ZF}$
satisfying the power constraint, i.e., $\vbf_i=\sqrt{P_i}\frac{\lambda_i\vbf_i^{MF}+(1-\lambda_i)\vbf_i^{ZF}}{\|\lambda_i\vbf_i^{MF}+(1-\lambda_i)\vbf_i^{ZF}\|}$,
where $0\le\lambda_i\le 1$.
 Thus, the feasible set of optimal
beam vectors is the arc denoted by $\Fc$ in Fig. \ref{fig:SOPC_interpret}. All the points on this arc can be represented by the sum of
the two vectors in red, and the size of the component vector in
the MF direction is determined by its projection onto
$\Cc(\hbf_{21})$, i.e., the allowed interference level to the
other receiver in the RZF context. Thus, the two-user result by
Jorswieck {\em et al.} can be viewed as a special case of the SOPC
strategy when the number of users is two. The key difference is the parameterization; $\alpha_{12}$ and $\alpha_{21}$ are the parameters in the RZF framework whereas  the linear combining coefficients $\lambda_1$ and $\lambda_2$ are the parameters in \cite{Jorswieck&Larsson&Danev:08SP}.

\begin{figure}[t]
\centerline{
\begin{psfrags}
    \psfrag{a}[c]{$\hbf_{ii}$} %
    \psfrag{b}[c]{$\ \Pibf_{\Abf_1}^\bot\hbf_{ii}$} %
    \psfrag{c}[c]{$\ \dots$} %
    \psfrag{d}[c]{$\ \ \Pibf_{\Abf_K}^\bot \hbf_{ii}$} %
    \psfrag{t1}[l]{$P_i^{(0)}$} %
    \psfrag{t2}[l]{$P_i^{(1)}$} %
    \psfrag{t3}[l]{$P_i^{(K-1)}$} %
    \psfrag{bd}[c]{{\small ZF}} %
    \psfrag{mf}[c]{{\small MF}} %
    \scalefig{0.4}\epsfbox{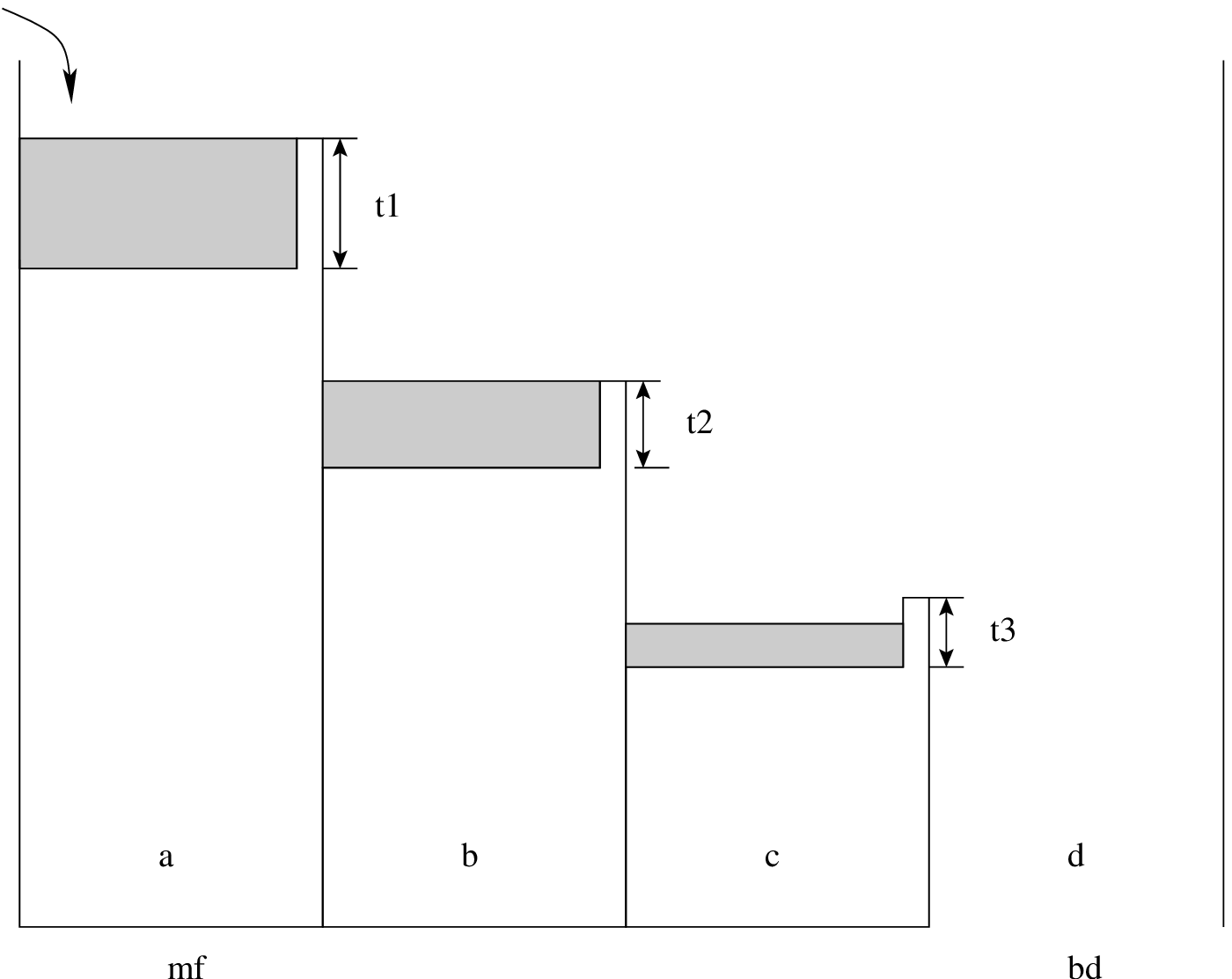}
    \end{psfrags}} \caption{The SOPC strategy in the case of $N \ge K$:
    Water-pouring on a
multi-tiered fountain.} \label{fig:SOPC_strategy}
\end{figure}

\begin{figure}[t]
\centerline{
 \begin{psfrags}
    \psfrag{fc}[c]{${\mathcal{F}}$} %
    \psfrag{coh21}[c]{{\small $\Cc(\Pibf_{\hbf_{21}}^\bot\hbf_{11} )$}} %
    \psfrag{ch11}[c]{{\small $\Cc(\hbf_{11})$}} %
    \psfrag{r}[c]{{\small $\sqrt{P_1}$}} %
    \psfrag{as2}[c]{{\small $\sqrt{\alpha_{21}\sigma_2^2}$}} %
    \psfrag{ch21}[l]{\!\!\!{\small $\Cc(\hbf_{21})$}} %
    \psfrag{opt}[c]{$\vbf_1^{opt}$} %
    \scalefig{0.40}\epsfbox{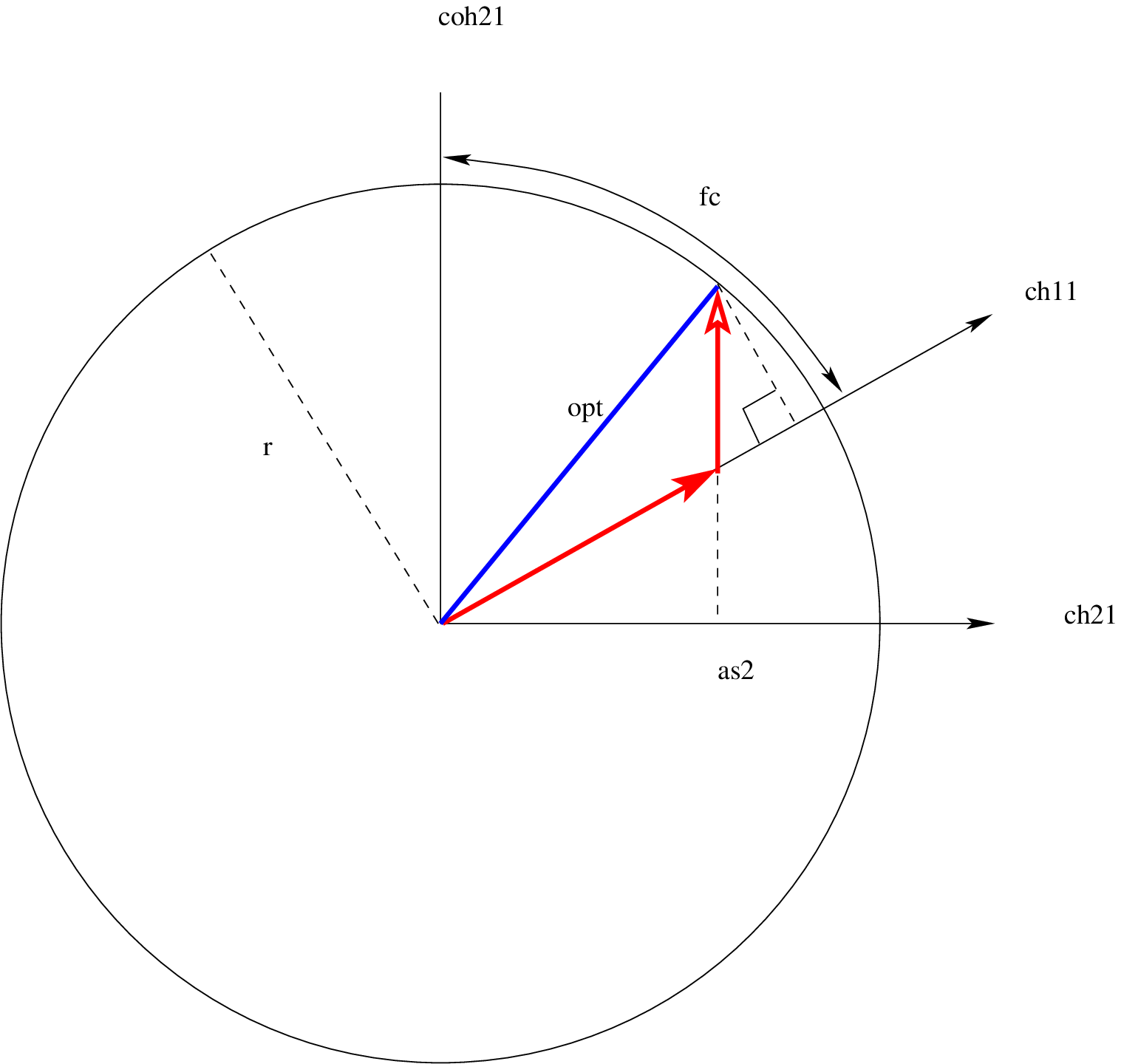}
    \end{psfrags}} \caption{SOPC interpretation of the two-user result.} \label{fig:SOPC_interpret}
\end{figure}

Now consider the detailed implementation of the SOPC method.
Before considering the general case of an arbitrary number $K$ of
users, we consider simple two-user and three-user cases.  Here, we
restrict the combining coefficients $\{c_i\}$ to the set of real
numbers. It will shortly be  shown that the performance loss
caused by restricting  $\{c_i\}$ to real numbers is negligible.
Furthermore, it is the optimal solution of the RZFCB when $K=2$.
For simplicity, we only provide the solution for transmitter 1.
The solutions for other transmitters can be obtained in a similar
way.

\begin{proposition} \label{prop:sopck2}
The closed-form SOPC solution  in the two-pair MISO IC case  is
given by
\begin{align} \label{eq:sopc_K2}
\vbf_1 =
\left\{
\begin{array}{ll}
\sqrt{P_1}\vbf_1^{MF}, ~~ & \text{if} ~~ P_1
\le \frac{\alpha_{21}\sigma_2^2}{|\hbf_{21}^H\vbf_1^{MF}|^2}, \\
\xi_0\vbf_1^{MF}+\xi_1\vbf_1^{ZF},   &  \text{otherwise},
\end{array}
\right. \nonumber
\end{align}
where $\vbf_1^{MF}=\frac{\hbf_{11}}{\|\hbf_{11}\|}$,
$\vbf_1^{ZF}=\frac{\Pibf_{\hbf_{21}}^{\perp}\hbf_{11}}{\|\Pibf_{\hbf_{21}}^{\perp}
\hbf_{11}\|}$, $\xi_0 =
\sqrt{\frac{\alpha_{21}\sigma_2^2}{|\hbf_{21}^H\vbf_1^{MF}|^2}}$,
and $\xi_1 = -\rho\xi_0+\sqrt{P_1-\xi_0^2(1-\rho^2)}$. Here, $\rho
=
(\vbf_1^{MF})^H\vbf_1^{ZF}=\|\Pibf_{\hbf_{21}}^\bot\hbf_{11}\|/\|\hbf_{11}\|
\in {\mathbb{R}}_+$.
\end{proposition}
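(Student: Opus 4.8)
The plan is to prove directly that the $\vbf_1$ in the statement is the maximizer of the RZFCB problem (Problem~\ref{prob:MISO_RZF_formulation2}) for transmitter~$1$; since the SOPC construction produces exactly this vector by design, this simultaneously shows that the displayed formula is the SOPC solution. First I would reduce the ambient dimension: because $|\hbf_{11}^H\vbf_1|^2$ and the leakage $|\hbf_{21}^H\vbf_1|^2$ depend only on the projection of $\vbf_1$ onto $\Cc([\hbf_{11},\hbf_{21}])$ while $\|\vbf_1\|^2$ can only decrease under that projection (equivalently, by Theorem~\ref{theo:SuccessiveZeroForcing} with $K=2$), it suffices to optimize over $\vbf_1\in\Cc([\hbf_{11},\hbf_{21}])=\mathrm{span}\{\vbf_1^{MF},\vbf_1^{ZF}\}$. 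The degenerate geometries are dispatched at the outset: $\rho=1$ is excluded by (A.1) when $N\ge K$, and when $\rho=0$ one has $\hbf_{11}\perp\hbf_{21}$, so $\vbf_1^{MF}$ produces zero leakage and the first branch applies trivially (the threshold $\alpha_{21}\sigma_2^2/|\hbf_{21}^H\vbf_1^{MF}|^2$ being $+\infty$).

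Next I would pass to a convenient orthonormal basis of the two-dimensional space, namely $\mathbf{u}:=\vbf_1^{ZF}$ and $\mathbf{w}:=\Pibf_{\hbf_{21}}\hbf_{11}/\|\Pibf_{\hbf_{21}}\hbf_{11}\|$ (a unit vector along $\hbf_{21}$), which gives the decomposition $\hbf_{11}=\|\hbf_{11}\|\,(\rho\mathbf{u}+\sqrt{1-\rho^2}\,\mathbf{w})$. Writing $\vbf_1=q\mathbf{u}+p\mathbf{w}$ with $p,q\in{\mathbb{C}}$, a short computation using orthonormality, $\mathbf{u}\perp\hbf_{21}$, $\mathbf{w}\parallel\hbf_{21}$, and $|\hbf_{21}^H\vbf_1^{MF}|^2=(1-\rho^2)\|\hbf_{21}\|^2$ turns Problem~\ref{prob:MISO_RZF_formulation2} into the equivalent form: maximize $|\rho q+\sqrt{1-\rho^2}\,p|$ subject to $|p|^2\le \xi_0^2(1-\rho^2)$ and $|p|^2+|q|^2\le P_1$. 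The phases are then stripped: by the triangle inequality $|\rho q+\sqrt{1-\rho^2}\,p|\le \rho|q|+\sqrt{1-\rho^2}\,|p|$ with equality when $p,q$ share a phase, and since a common phase affects neither constraint nor $|\hbf_{11}^H\vbf_1|$, the problem becomes: maximize $g(x,y):=\sqrt{1-\rho^2}\,x+\rho y$ over $x=|p|\ge0$, $y=|q|\ge0$ with $x\le\xi_0\sqrt{1-\rho^2}$ and $x^2+y^2\le P_1$. Because $g$ is strictly increasing in both variables the power constraint is tight, leaving the scalar maximization of the concave function $x\mapsto\sqrt{1-\rho^2}\,x+\rho\sqrt{P_1-x^2}$ on $[0,\xi_0\sqrt{1-\rho^2}]$, whose unconstrained maximizer is $x^\star=\sqrt{P_1}\sqrt{1-\rho^2}$.

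The two branches now follow. If $x^\star\le\xi_0\sqrt{1-\rho^2}$, i.e.\ $P_1\le\alpha_{21}\sigma_2^2/|\hbf_{21}^H\vbf_1^{MF}|^2$, then $x=x^\star$, $y=\rho\sqrt{P_1}$, which reassembles (via $\mathbf{w}=(\vbf_1^{MF}-\rho\vbf_1^{ZF})/\sqrt{1-\rho^2}$) to $\vbf_1=\sqrt{P_1}\,\vbf_1^{MF}$. Otherwise $\xi_0\sqrt{1-\rho^2}<x^\star$, so $g$ is increasing on the whole feasible interval and the leakage constraint is tight: $x=\xi_0\sqrt{1-\rho^2}$, $y=\sqrt{P_1-\xi_0^2(1-\rho^2)}$; the same change of variables gives $\vbf_1=\xi_0\vbf_1^{MF}+\big(\sqrt{P_1-\xi_0^2(1-\rho^2)}-\rho\xi_0\big)\vbf_1^{ZF}=\xi_0\vbf_1^{MF}+\xi_1\vbf_1^{ZF}$, and one checks routinely that $\xi_1>0$ (since $P_1>\xi_0^2$) and that this vector meets both the leakage and power constraints with equality, so it is indeed the SOPC output. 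The main obstacle is not any single estimate but the careful bookkeeping of the reduction to the real scalar problem: choosing the basis $\{\mathbf{u},\mathbf{w}\}$ so that the leakage sees only $|p|$ and the objective only $|\rho q+\sqrt{1-\rho^2}\,p|$, and then correctly identifying which constraints are active in each branch (power always; leakage only when the full-power MF point is infeasible). Once the problem is in the form ``maximize a concave univariate function over an interval,'' the rest is elementary, with only the degenerate cases $\rho\in\{0,1\}$ and $\alpha_{21}=0$ requiring a separate word.
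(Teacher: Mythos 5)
Your argument is correct in its main line and reaches the stated formulas by a clean, self-contained route: reduce to $\mathrm{span}\{\hbf_{11},\hbf_{21}\}$, choose the orthonormal basis $\{\vbf_1^{ZF},\mathbf{w}\}$ with $\mathbf{w}\parallel\hbf_{21}$, strip phases, and solve a one-dimensional concave maximization whose unconstrained optimizer $x^\star=\sqrt{P_1(1-\rho^2)}$ cleanly separates the two branches. I verified the key identities ($|\hbf_{21}^H\vbf_1^{MF}|^2=(1-\rho^2)\|\hbf_{21}\|^2$, the leakage constraint becoming $|p|^2\le\xi_0^2(1-\rho^2)$, and the reassembly giving $\xi_1=-\rho\xi_0+\sqrt{P_1-\xi_0^2(1-\rho^2)}$), and they all check out. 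The paper itself defers this proof to its arXiv companion, and its surrounding machinery is the KKT/Lagrangian analysis of Lemma~\ref{lemma:combination} plus the greedy SOPC power-pouring description; your direct optimization over the two-dimensional subspace proves not merely that the formula is what the SOPC procedure outputs but that it is the global optimum of Problem~\ref{prob:MISO_RZF_formulation2}, which is the stronger and more useful statement, at the cost of being specific to $K=2$.

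One concrete error to fix: you have the two degenerate geometries swapped. Since $\rho=\|\Pibf_{\hbf_{21}}^{\perp}\hbf_{11}\|/\|\hbf_{11}\|$, orthogonality $\hbf_{11}\perp\hbf_{21}$ gives $\rho=1$ (not $\rho=0$), and it is \emph{not} excluded by (A.1); it is precisely the case where $\vbf_1^{MF}=\vbf_1^{ZF}$, the leakage vanishes, the threshold is $+\infty$, and the first branch applies trivially. Conversely $\rho=0$ means $\hbf_{11}\parallel\hbf_{21}$, which \emph{is} the case excluded by the linear-independence assumption (A.1). As written, your text dismisses $\rho=1$ on invalid grounds and applies the ``zero leakage'' argument to the parallel case where it fails, so the case $\rho=1$ is formally uncovered; exchanging the two labels repairs this without touching the generic $0<\rho<1$ derivation, which also implicitly needs $\rho>0$ (so that the power constraint is provably tight) and $\rho<1$ (so that $\mathbf{w}$ is defined) — exactly the two endpoints you meant to set aside.
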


\vspace{0.3em}
\textit{Proof:}
Proof of Proposition \ref{prop:sopck2} can be found in \cite{Park&Lee&Sung&Yukawa:12Arxiv}.
$\hfill\blacksquare$


\noindent Now, we consider the case of $K=3$. This case is particularly
important when the hexagonal cell structure is used and three
cells are coordinating their beam vectors.  In the case of $K=3$,
the solution  can have six different forms depending on the
transmit power and channel realization. We will provide the
closed-form solution under the real coefficient restriction for transmitter 1 in the case that  the
interference leakage to receiver $3$ reaches the allowed level
before the interference leakage to receiver $2$ reaches the
allowed level. (For this, we should first take inner products
$\langle\hbf_{21}, \hbf_{11}\rangle$ and $\langle\hbf_{31},\hbf_{11}\rangle$
and compare the ratio of their magnitudes with some threshold. The solutions
of the other case and of other users can be derived in the same
manner.)

\begin{proposition}  \label{prop:sopck3}
For $K=3$ and
$\frac{|\hbf_{31}^H\vbf_1^{MF}|^2}{|\hbf_{21}^H\vbf_1^{MF}|^2} \geq
\frac{\alpha_{31}\sigma_3^2}{\alpha_{21}\sigma_2^2 }$, the
closed-form SOPC solution with the restriction to real coefficients at transmitter $1$ is given in \eqref{eq:sopc_k3}.
\begin{figure*}
\hspace{-12em}
\begin{equation} \label{eq:sopc_k3}
\vbf_1 = 
\left\{
\begin{array}{ll}
\sqrt{P_1}\vbf_1^{MF}, \hspace{-4.5em} &
\text{if} ~~ P_1\in\Psi_1:=\{P_1 \in {\mathbb{R}}^+:\sqrt{P_1} \le \beta_0\}, \\
\beta_0\vbf_1^{MF} + \beta_1 \frac{\Pibf_{\hbf_{31}}^\perp\hbf_{11}}
{\|\Pibf_{\hbf_{31}}^\perp \hbf_{11}\|},  &
\mbox{if} ~~ P_1 \in \Psi_2:=  
\Big\{P_1 \in {\mathbb{R}}^+: \sqrt{P_1} > \beta_0, \\
& \hspace{7.5em}
\left|\hbf_{21}^H\Big(\beta_0\vbf_1^{MF} +
\beta_1\frac{\Pibf_{\hbf_{31}}^\perp \hbf_{11}}{\|\Pibf_{\hbf_{31}}^\perp \hbf_{11}\|}\Big)\right|^2 \le \alpha_{21}\sigma_2^2\Big\}, \\
\beta_0\vbf_1^{MF} +
\beta_1^\prime\frac{\Pibf_{\hbf_{31}}^\perp\hbf_{11}}
{\|\Pibf_{\hbf_{31}}^\perp \hbf_{11}\|} + \beta_2 \vbf_1^{ZF}, &
\mbox{if} ~~
P_1 \in \Psi_3:=\Big\{P_1 \in {\mathbb{R}}^+:  \\
& \hspace{7.5em} 
\Big|\hbf_{21}^H\Big(\beta_0\vbf_1^{MF} +
\beta_1\frac{\Pibf_{\hbf_{31}}^\perp
\hbf_{11}}{\|\Pibf_{\hbf_{31}}^\perp \hbf_{11}\|}\Big)\Big|^2 >
\alpha_{21}\sigma_2^2 \Big\}.
\end{array}
\right.
\end{equation}
\end{figure*}
In \eqref{eq:sopc_k3}, $\beta_0$, $\beta_1$, $\beta_1^\prime$ and $\beta_2$ are given by
$\beta_0 = \frac{\sqrt{\alpha_{31}\sigma_3^2}}{|\hbf_{31}^H\vbf_1^{MF}|}$,
$\beta_1 = -a\beta_0 + \sqrt{P_1 - (1-a^2)\beta_0^2}$,
$\beta_1^\prime =
\frac{1}{c}
{(-d \beta_0+\sqrt{d^2\beta_0^2-c(b\beta_0^2-\alpha_{21}\sigma_2^2)})}$,
and
$\beta_2 = -(f\beta_0+e\beta_1^\prime) +
\sqrt{(f\beta_0+e\beta_1^\prime)^2 -
(2a\beta_0\beta_1^\prime+\beta_0^2+\beta_1^{\prime 2}-P_1)}$,
where $a:=\mbox{Re}\{\langle\vbf_1^{MF},\Pibf_{\hbf_{31}}^\perp
\hbf_{11}/||\Pibf_{\hbf_{31}}^\perp
\hbf_{11}||\rangle\}=||\Pibf_{\hbf_{31}}^\perp \vbf_1^{MF}||$,
$b =|\hbf_{21}^H \vbf_1^{MF}|^2$,
$c = \Big|\hbf_{21}^H
\frac{\Pibf_{\hbf_{31}}^\perp \hbf_{11}}{\|\Pibf_{\hbf_{31}}^\perp
\hbf_{11}\|}\Big|^2$,
$d=$ $\mbox{Re}\Big\{(\hbf_{21}^H\vbf_1^{MF})^* \Big(\hbf_{21}^H
\frac{\Pibf_{\hbf_{31}}^\perp\hbf_{11}}{\|\Pibf_{\hbf_{31}}^\perp
\hbf_{11}\|}\Big)\Big\}$,
$e=\frac{|\hbf_{21}^H\hbf_{11}|^2}{\|\hbf_{21}\|^2}$, and $f =$
$(\vbf_1^{MF})^H\vbf_1^{ZF}$.\end{proposition}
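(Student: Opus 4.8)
The plan is to execute explicitly the SOPC procedure described above the proposition for transmitter $1$, with the order of interference-channel inclusion fixed by the hypothesis, and then to read off each combining coefficient either from the RZF constraint \eqref{eq:Problem3RZFconst} that has just become active or from the transmit-power constraint. First I would check that the assumption $\frac{|\hbf_{31}^H\vbf_1^{MF}|^2}{|\hbf_{21}^H\vbf_1^{MF}|^2}\ge\frac{\alpha_{31}\sigma_3^2}{\alpha_{21}\sigma_2^2}$ is exactly the condition under which, as power is poured onto $\vbf_1^{MF}$, receiver $3$'s leakage constraint binds first: loading $\sqrt{P}\,\vbf_1^{MF}$ produces leakage $P|\hbf_{j1}^H\vbf_1^{MF}|^2$ to receiver $j$, so receiver $3$'s constraint binds at $P=\beta_0^2=\alpha_{31}\sigma_3^2/|\hbf_{31}^H\vbf_1^{MF}|^2$, and the same inequality gives $\beta_0^2|\hbf_{21}^H\vbf_1^{MF}|^2\le\alpha_{21}\sigma_2^2$, i.e.\ receiver $2$'s constraint is still slack there. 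This fixes $\widetilde\Gamma_1=(3,2)$, hence $\Abf_1=[\hbf_{31}]$ and $\Abf_2=[\hbf_{31},\hbf_{21}]$ in Theorem \ref{theo:SuccessiveZeroForcing}, so the second and third SOPC basis vectors are $\mathbf{u}:=\Pibf_{\hbf_{31}}^\perp\hbf_{11}/\|\Pibf_{\hbf_{31}}^\perp\hbf_{11}\|$ and $\vbf_1^{ZF}=\Pibf_{[\hbf_{21},\hbf_{31}]}^\perp\hbf_{11}/\|\Pibf_{[\hbf_{21},\hbf_{31}]}^\perp\hbf_{11}\|$. Throughout, the stipulated restriction to real coefficients is compatible with the greedy procedure because $\hbf_{11}^H\vbf_1^{MF}=\|\hbf_{11}\|>0$, $\hbf_{11}^H\mathbf{u}=\|\Pibf_{\hbf_{31}}^\perp\hbf_{11}\|>0$ and $\hbf_{11}^H\vbf_1^{ZF}>0$, so the objective $\hbf_{11}^H\vbf_1$ of the equivalent Problem \ref{prob:MISO_RZF_formulation3} is increasing in each coefficient and we always take the positive root.

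Next I would dispatch the three power regimes. In $\Psi_1$ ($\sqrt{P_1}\le\beta_0$) the vector $\sqrt{P_1}\,\vbf_1^{MF}$ is feasible, and being the unconstrained maximizer of $|\hbf_{11}^H\vbf_1|$ under $\|\vbf_1\|^2\le P_1$ it is optimal. Otherwise one fixes $c_0=\beta_0$ so that receiver $3$'s constraint holds with equality, and by Theorem \ref{theo:SuccessiveZeroForcing} loads the remaining power along $\mathbf{u}$, which is orthogonal to $\hbf_{31}$ (so that constraint stays tight) and, by the greedy-optimality remark in the footnote of Theorem \ref{theo:SuccessiveZeroForcing}, is the rate-maximizing direction to add. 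Using full power (optimal here since $N\ge K=3$, so the ZF beam is feasible and power/rate can always be raised), $\beta_1$ is the positive root of $\|\beta_0\vbf_1^{MF}+\beta_1\mathbf{u}\|^2=P_1$, i.e.\ of $\beta_1^2+2a\beta_0\beta_1+(\beta_0^2-P_1)=0$ with $a=\mbox{Re}\langle\vbf_1^{MF},\mathbf{u}\rangle=\|\Pibf_{\hbf_{31}}^\perp\vbf_1^{MF}\|\ge0$ (the last equality from idempotency and Hermitianness of $\Pibf_{\hbf_{31}}^\perp$); this is the stated $\beta_1=-a\beta_0+\sqrt{P_1-(1-a^2)\beta_0^2}$, and $\beta_0\vbf_1^{MF}+\beta_1\mathbf{u}$ is the solution precisely when it does not violate receiver $2$'s constraint, which is the definition of $\Psi_2$.

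For $\Psi_3$, one instead loads $\mathbf{u}$ only until receiver $2$'s constraint binds: expanding $|\hbf_{21}^H(\beta_0\vbf_1^{MF}+\beta_1\mathbf{u})|^2=\alpha_{21}\sigma_2^2$ gives $c\beta_1^2+2d\beta_0\beta_1+(b\beta_0^2-\alpha_{21}\sigma_2^2)=0$ with $b=|\hbf_{21}^H\vbf_1^{MF}|^2$, $c=|\hbf_{21}^H\mathbf{u}|^2$ and $d=\mbox{Re}\{(\hbf_{21}^H\vbf_1^{MF})^*(\hbf_{21}^H\mathbf{u})\}$; since $b\beta_0^2<\alpha_{21}\sigma_2^2$ (again the hypothesis) this quadratic has a unique positive root, the stated $\beta_1'$. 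By Theorem \ref{theo:SuccessiveZeroForcing} the residual power then goes to $\vbf_1^{ZF}$, which is orthogonal to both $\hbf_{21}$ and $\hbf_{31}$, so both constraints remain tight; imposing $\|\beta_0\vbf_1^{MF}+\beta_1'\mathbf{u}+\beta_2\vbf_1^{ZF}\|^2=P_1$ gives $\beta_2^2+2(f\beta_0+e\beta_1')\beta_2+(2a\beta_0\beta_1'+\beta_0^2+\beta_1'^2-P_1)=0$, where $e=\mbox{Re}\langle\mathbf{u},\vbf_1^{ZF}\rangle$ and $f=\mbox{Re}\langle\vbf_1^{MF},\vbf_1^{ZF}\rangle$ are the remaining cross-correlations, which reduce to the stated closed forms via the nesting identity $\Pibf_{\hbf_{31}}^\perp\Pibf_{[\hbf_{21},\hbf_{31}]}^\perp=\Pibf_{[\hbf_{21},\hbf_{31}]}^\perp$ (valid since $\Cc([\hbf_{21},\hbf_{31}])^\perp\subseteq\Cc(\hbf_{31})^\perp$); because $\beta_1'<\beta_1$ forces $\|\beta_0\vbf_1^{MF}+\beta_1'\mathbf{u}\|^2<P_1$ in $\Psi_3$, this quadratic again has a unique positive root, the stated $\beta_2$. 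Finally I would note that $\Psi_1,\Psi_2,\Psi_3$ are mutually exclusive and exhaust ${\mathbb{R}}^+$ (using $N\ge K=3$, so $\vbf_1^{ZF}\ne\mathbf{0}$ and the procedure always terminates at full power), and that optimality in every regime is inherited from the SOPC-optimality argument following Theorem \ref{theo:SuccessiveZeroForcing}: any RZF-feasible linear combination of $\{\hbf_{j1}\}$ can be re-expressed in the Theorem \ref{theo:SuccessiveZeroForcing} basis and strictly improved toward the SOPC vector unless it already equals it.

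The conceptual content is entirely in Theorem \ref{theo:SuccessiveZeroForcing} and the greedy-optimality remark; what is specific to this proposition is (i) the bookkeeping that pins the three power thresholds to ``receiver $3$'s constraint just became active'' and ``loading $\mathbf{u}$ just made receiver $2$'s constraint active,'' and (ii) extracting the $\beta$'s as the appropriate roots of the three quadratics and identifying $a,b,c,d,e,f$ as inner products via the projection identities. I expect the main obstacle to be care rather than depth: keeping the cross-terms straight in the three-vector norm expansion for $\Psi_3$ (the SOPC basis is not orthogonal), and verifying in each case that the discriminant is positive and the selected root lies in the claimed region — both of which hinge on the standing hypothesis $|\hbf_{31}^H\vbf_1^{MF}|^2/|\hbf_{21}^H\vbf_1^{MF}|^2\ge\alpha_{31}\sigma_3^2/(\alpha_{21}\sigma_2^2)$. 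A minor caveat worth recording is that the third ($\vbf_1^{ZF}$) term, and hence the region $\Psi_3$, is relevant only when $N\ge K=3$, consistent with Fig.~\ref{fig:SOPC_strategy}.
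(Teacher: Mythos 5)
Your plan is the paper's intended argument: the paper itself only cites its arXiv companion for this proof, but the construction it has in mind is exactly the greedy SOPC procedure described in Section III-C, and you execute it faithfully — the standing hypothesis correctly pins $\widetilde{\Gamma}_1=(3,2)$, the three power regimes are identified correctly, and your quadratics for $\beta_0$, $\beta_1$, $\beta_1^\prime$ and $\beta_2$ (with the cross-terms of the non-orthogonal basis handled properly) reproduce the stated closed forms for $\beta_0,\beta_1,\beta_1^\prime$ and for $a,b,c,d,f$.

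The one step that does not go through as written is your claim that the coefficient $e$ "reduces to the stated closed form via the nesting identity." From your own (correct) setup, the coefficient multiplying $2\beta_1^\prime\beta_2$ must be $\mbox{Re}\langle\mathbf{u},\vbf_1^{ZF}\rangle$ with $\mathbf{u}=\Pibf_{\hbf_{31}}^\perp\hbf_{11}/\|\Pibf_{\hbf_{31}}^\perp\hbf_{11}\|$ and $\vbf_1^{ZF}=\Pibf_{[\hbf_{21},\hbf_{31}]}^\perp\hbf_{11}/\|\Pibf_{[\hbf_{21},\hbf_{31}]}^\perp\hbf_{11}\|$, and the nesting identity $\Pibf_{\hbf_{31}}^\perp\Pibf_{[\hbf_{21},\hbf_{31}]}^\perp=\Pibf_{[\hbf_{21},\hbf_{31}]}^\perp$ gives
$\mbox{Re}\langle\mathbf{u},\vbf_1^{ZF}\rangle=\|\Pibf_{[\hbf_{21},\hbf_{31}]}^\perp\hbf_{11}\|/\|\Pibf_{\hbf_{31}}^\perp\hbf_{11}\|$,
a dimensionless quantity in $[0,1]$. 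The proposition's $e=|\hbf_{21}^H\hbf_{11}|^2/\|\hbf_{21}\|^2=\|\Pibf_{\hbf_{21}}\hbf_{11}\|^2$ is a different object (it scales with $\|\hbf_{11}\|^2$, and e.g.\ for $\hbf_{11}\perp\hbf_{21},\hbf_{31}$ your expression equals $1$ while the stated $e$ equals $0$), so no projection identity can reconcile the two. You should either carry your own expression for $e$ through and flag the stated formula as a likely typo, or identify the (un)normalization convention under which the stated $e$ is correct; asserting the reduction without checking it is the only genuine gap in an otherwise complete and correct execution.
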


\vspace{0.3em}
\textit{Proof:}
Proof of Proposition \ref{prop:sopck3} can be found in \cite{Park&Lee&Sung&Yukawa:12Arxiv}.
$\hfill\blacksquare$

\noindent In the case of $K > 3$, it is cumbersome to
distinguish all possible scenarios for deriving an explicit SOPC
solution. Thus, we propose  an algorithm implementing the SOPC
strategy with real combining coefficients in Table \ref{table:SOPC}. In the general case of $K > 3$,
the implementation of the SOPC algorithm can  be simplified  by the known result in the Kalman filtering theory, provided in the following lemma.

\begin{lemma}[Sequential orthogonal projection \cite{Lototsky:book}]
\label{lemma:sequential_orthogonal} Let $\Hc$ be a Hilbert space
with norm $\| \cdot \|$ and inner product $\langle\cdot,\cdot\rangle$.
Consider $\xbf\in \Hc$ and a closed linear subspace $\Abf_j$ of
$\Hc$. For some $\ybf\in \Hc$ but $\ybf\not\in\Abf_j$,  the
following equality holds
\begin{align}
\Pibf_{\Abf_{j+1}}\xbf
&= \Pibf_{[\Abf_j, \ybf]}\xbf   \\
&=
\Pibf_{\Abf_j}\xbf + \frac{\langle\xbf-\Pibf_{\Abf_j}\xbf,\
\ybf-\Pibf_{\Abf_j} \ybf\rangle}
{\|\ybf-\Pibf_{\Abf_j}\ybf\|^2}(\ybf-\Pibf_{\Abf_j}\ybf). \nonumber
\end{align}
\end{lemma}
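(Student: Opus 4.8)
The plan is to exploit the orthogonal decomposition of the enlarged subspace $\Cc([\Abf_j,\ybf])$; this is exactly the Gram--Schmidt / ``innovations'' step familiar from recursive least squares. First I would set $\ybf^\perp := \ybf - \Pibf_{\Abf_j}\ybf$, the component of $\ybf$ orthogonal to the subspace $\Abf_j$. Since $\ybf \notin \Abf_j$ by hypothesis, $\ybf^\perp \neq \mathbf{0}$, so the rank-one orthogonal projector onto $\mathrm{span}\{\ybf^\perp\}$, namely $\xbf \mapsto \frac{\langle\xbf,\ybf^\perp\rangle}{\|\ybf^\perp\|^2}\ybf^\perp$, is well defined. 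The key structural observation is that $\Abf_{j+1} = \Cc([\Abf_j,\ybf])$ is the \emph{orthogonal} direct sum $\Abf_j \oplus \mathrm{span}\{\ybf^\perp\}$: writing $\ybf = \Pibf_{\Abf_j}\ybf + \ybf^\perp$ shows that $\Cc([\Abf_j,\ybf]) = \Abf_j + \mathrm{span}\{\ybf^\perp\}$, while $\ybf^\perp \perp \Abf_j$ by construction and the two summands meet only in $\mathbf{0}$ because $\ybf^\perp\neq\mathbf{0}$.

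From this orthogonal direct sum I would invoke the elementary additivity of projectors onto orthogonal subspaces: if $U \perp V$ then $\Pibf_{U\oplus V} = \Pibf_U + \Pibf_V$, which holds because for any $\xbf$ the vector $\Pibf_U\xbf + \Pibf_V\xbf$ lies in $U\oplus V$ and $\xbf - \Pibf_U\xbf - \Pibf_V\xbf$ is orthogonal to both $U$ and $V$, hence to $U\oplus V$, so it must be the orthogonal projection. Applying $\Pibf_{\Abf_{j+1}} = \Pibf_{\Abf_j} + \Pibf_{\mathrm{span}\{\ybf^\perp\}}$ to $\xbf$ gives
\[
\Pibf_{\Abf_{j+1}}\xbf = \Pibf_{\Abf_j}\xbf + \frac{\langle\xbf,\ybf^\perp\rangle}{\|\ybf^\perp\|^2}\,\ybf^\perp .
\]

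The last step is to rewrite the numerator in the form stated in the lemma. Because $\Pibf_{\Abf_j}\xbf \in \Abf_j$ and $\ybf^\perp \perp \Abf_j$, we have $\langle\Pibf_{\Abf_j}\xbf,\ybf^\perp\rangle = 0$, so $\langle\xbf,\ybf^\perp\rangle = \langle\xbf - \Pibf_{\Abf_j}\xbf,\ybf^\perp\rangle$; substituting this and recalling $\ybf^\perp = \ybf - \Pibf_{\Abf_j}\ybf$ yields precisely the claimed identity. There is no genuine obstacle here --- the result is a textbook fact --- and the only points that need a line of care are the justification of the additivity $\Pibf_{\Abf_{j+1}} = \Pibf_{\Abf_j} + \Pibf_{\mathrm{span}\{\ybf^\perp\}}$ from orthogonality of the summand subspaces, and the observation that $\ybf^\perp \neq \mathbf{0}$ so the denominator $\|\ybf^\perp\|^2$ does not vanish, which is exactly why the hypothesis $\ybf \notin \Abf_j$ is imposed.
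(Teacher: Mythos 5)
Your proof is correct and complete: the decomposition $\Cc([\Abf_j,\ybf]) = \Abf_j \oplus \mathrm{span}\{\ybf - \Pibf_{\Abf_j}\ybf\}$, the additivity of projectors onto orthogonal closed subspaces, and the final rewriting of the numerator using $\Pibf_{\Abf_j}\xbf \perp \ybf^\perp$ are exactly the standard ``innovations'' argument for this identity. The paper itself gives no proof, citing the result to an external reference, so there is nothing to compare against; your argument is the canonical one and correctly identifies the role of the hypothesis $\ybf\notin\Abf_j$ in keeping the denominator nonzero.
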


\vspace{0.5em} \noindent Since we need to compute
$\Pibf_{\Abf_j}^\perp \hbf_{ii} = (\Ibf -\Pibf_{\Abf_j}) \hbf_{ii}$ in
the SOPC algorithm, Lemma \ref{lemma:sequential_orthogonal} can be
applied recursively by exploiting the fact $\Abf_{j} = [\Abf_{j-1},
\hbf_{\widetilde{\Gamma}(j),i}]$.  Thus, we only need to compute
$\Pibf_{\Abf_{j-1}}\hbf_{\tilde{\Gamma}(j),i}$ for each $j \in
\{1,2,\cdots, K\}$.
The proposed algorithm in Table \ref{table:SOPC} computes the direction and size of the component vector for SOPC directly in each step.

\begin{table}[t]
\centering\caption{The sequential orthogonal projection combining
algorithm.} \label{table:SOPC}
\begin{tabular}{p{330pt}}

\hline

Given channel realization $\{\hbf_{ji}, i, j=1,\cdots, K\}$, pre-determined interference levels $\{\alpha_{ji}\sigma_j^2:$ $i, j=1,\cdots, K, j\neq i\}$, and maximum transmit power $\{P_i:i=1,\cdots,K\}$, perform the following procedure at each transmitter $i \in \{1,\cdots,K\}$.

\textbf{Initialization:} $\vbf_i={\mathbf{0}}$,
$\Abf = \emptyset$,
$\Phi_i= \{1, \cdots, i-1, i+1, \cdots, K\}$, and $k=1$.

\textbf{While} $\ k \le \min(N, K)$,

~~ 1. Let $\ubf:=\frac{\Pibf_A^\bot \hbf_{ii}}{\|\Pibf_A^\bot\hbf_{ii}\|}$.

~~ 2. $\mu_p$ is a positive solution of
$\| \vbf_i + \mu_p \ubf \|_2^2 = P_i$, i.e.,
\[
\mu_p := -\rho_p + \sqrt{\rho_p^2 - (\|\vbf_i\|_2^2-P_i)}
\]
~~~~~ where $\rho_p = Re(\ubf^H\vbf_i)$.

~~ 3. $\mu_j$ is a positive solution of $|\hbf_{ji}^H(\vbf_i+\mu_j \ubf)|^2 = \alpha_{ji}\sigma_j^2$

~~~~~ for each $j\in \Phi_i$, i.e.,
\[
\mu_j := \tfrac{-\rho_j+\sqrt{\rho_j^2-|\hbf_{ji}^H\ubf|^2\cdot
(|\hbf_{ji}^H\vbf_i|^2-\alpha_{ji}\sigma_j^2) }}{|\hbf_{ji}^H\ubf|^2}
\]
~~~~~ where $\rho_j=Re(\vbf_i^H\hbf_{ji}\hbf_{ji}^H\ubf)$.

~~ 4. Obtain $\mu_j^* = \min\limits_{j\in\Phi_i} \{\mu_j\}$ and
$j^* = \mathop{\arg\min}\limits_{j\in\Phi_i} \{\mu_j\}$.

~~ 5. If $\mu_p > \mu_j^*$,
$\vbf_i = \vbf_i + \mu_j^* \ubf$, $\Abf = [\Abf, \hbf_{j^* i}]$,
$\Phi_i=\Phi_i \backslash\{j^*\}$,

~~~~~ $k=k+1$, and go to step 1.

~~~~~ If $\mu_p \le \mu_j^*$,
$\vbf_i = \vbf_i + \mu_p \ubf$. Terminate iteration.

\textbf{end}
\\
\hline
\end{tabular}
\end{table}

\begin{figure}[t]
\centerline{ \scalefig{0.5}
\epsfbox{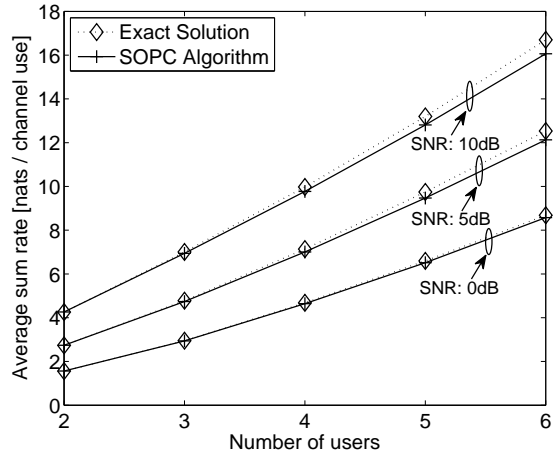} } \vspace{-1.5em} \caption{Average sum
rates of the exact RZFCB solution and the proposed SOPC algorithm with real coefficients.
 (Here, $N=K$ and the average sum rate is obtained over $50$ i.i.d.
channel realizations.)} \label{fig:apsm_wf}
\end{figure}

The proposed SOPC solution based on real coefficients is a sub-optimal solution to the RZFCB problem in the case of  $K\ge 3$. However, the performance loss between the optimal RZFCB (or exact SOPC) beamforming vector and the proposed SOPC solution based on real coefficients is insignificant for a wide range of meaningful SNR values, as seen in Fig. \ref{fig:apsm_wf}. Thus, practically,  the proposed SOPC solution can be used with negligible performance loss. Note  that the
necessary computations for the proposed SOPC solution are a few inner product
and square root  operations and the complexity of the SOPC method is
simply $O(N)$, where $N$ is the number of transmit antennas at the
transmitter. The proposed SOPC method reduces computational complexity to obtain an RZF solution by order of  hundreds
when compared to
 the ellipsoid method for the RZFCB solution used in \cite{Zhang&Cui:10SP}, as shown in Fig. \ref{fig:complexity}, and the solution procedure can easily be  programmed in a real hardware.

\begin{figure}[htbp]
\centering
\scalefig{0.5}\epsfbox{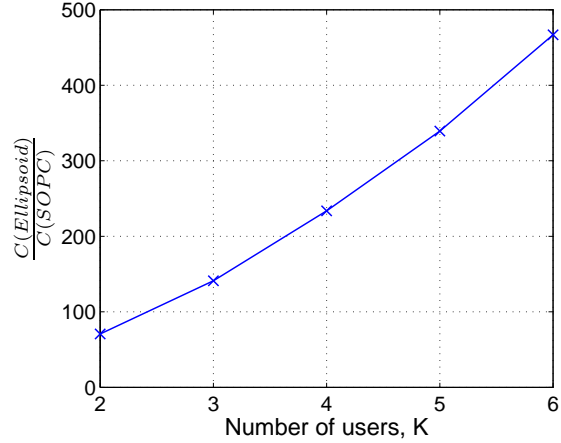}
\caption{Computational complexity for RZFCB beam design: Ellipsoid method \cite{Bland&Goldfarb&Todd:81OR,Zhang&Cui:10SP} versus SOPC ($N=K$ and SNR=5 dB)}.
\label{fig:complexity}
\end{figure}

\section{Rate-tuple Control}
\label{sec:rate_control}

In the previous section, we  provided an $O(N)$-complexity algorithm
to solve the RZFCB problem for a given set $\{\alpha_{ji}\}$ of
interference relaxation parameters. Now, we  consider how to
design these parameters. We first provide a centralized approach
to determine $\{\alpha_{ji}\}$ with the aim of controlling the
rate-tuple along the Pareto boundary of the achievable rate region
and then a fully-distributed heuristic approach that exploits the
parameterization in terms of interference relaxation levels in
RZFCB and is able to control the rate-tuple location roughly along
the Pareto boundary of the achievable rate region.

\subsection{A Centralized Approach}

By Theorem \ref{theo:pareto_achievability}, with a set of well
chosen allowed interference leakage levels, the RZFCB can achieve
any Pareto-optimal point of the rate region. However, the problem
of designing the interference leakage levels $\{\alpha_{ji}\}$ in
the network remains.  Under the RZFCB framework, in
\cite{Zhang&Cui:10SP}, a necessary condition for the interference
relaxation parameters at each receiver to achieve a Pareto-optimal point
was derived. Based on the necessary condition, the authors proposed an
iterative algorithm that updates the interference relaxation
parameters. Although the algorithm in \cite{Zhang&Cui:10SP} is
applicable to general $K$-user MISO interference channels, it
cannot control the rate-tuple location on the Pareto boundary
to which the algorithm converges. To control the rate-tuple to an arbitrary point along the Pareto-boundary of the achievable rate region, we here apply the utility function based approach in \cite{Jorswieck&Larsson:08ICASSP} to the RZF parameterization in terms of interference leakage levels.
Exploiting the fact that the RZFCB can achieve any Pareto-boundary point by adjusting $\{\alpha_{ji}\}$,
we convert the problem of finding a desired point on the
Pareto boundary of the achievable rate region into that of finding
an optimal point of the following optimization problem: {\small 
\begin{eqnarray}
&\max\limits_{\{\alpha_{ji}\}}\  ~~& u \big(R_1(\{\vbf_i^{RZF}(\{\alpha_{ji}\})\}), \cdots, R_K(\{\vbf_i^{RZF}(\{\alpha_{ji}\})\}) \big), \nonumber \\
&\mbox{subject to}~~& |\hbf_{ji}^H\vbf_i|^2\le \alpha_{ji}\sigma_j^2, \quad \forall i, ~j\neq i, \label{eq:newOptProbCentral} \\
&                   & \|\vbf_i\|^2\le P_i, \quad\quad\quad~\forall i, \nonumber
\end{eqnarray}}
where $u(R_1, \cdots, R_K)$ is the desired utility function and several
examples include the weighted sum rate $u(R_1,\cdots,R_K)$ $= \sum w_i
R_i$,
where $w_i\ge 0$ and $\sum w_i = 1$,  the Nash bargaining point  $u(R_1,\cdots,R_K)$
$=\prod_{i=1}^{K} (R_i-R_i^{NE})$, where
$R_i^{NE}=\log_2\left(1+\frac{|\hbf_{ii}^H \vbf_i^{MF}|^2}{\sigma_i^2+\sum_{j\neq
i}|\hbf_{ij}^H \vbf_j^{MF} |^2} \right)$, and the egalitarian point
$u(R_1,\cdots,R_K) = \min(R_1,\cdots, R_K)$
\cite{Jorswieck&Larsson:08ICASSP}.
 The optimization \eqref{eq:newOptProbCentral} can be solved
by an alternating optimization technique. That is, we fix all other
$\alpha_{ji}$'s except one interference relaxation parameter and
update the unfixed parameter so that the utility function is
maximized.  After this update, the next $\alpha_{ji}$ is picked for update. This procedure continues until converges. The
proposed algorithm is described in detail in Table
\ref{table:alpha_algorithm}. For a given utility function $u(R_1,\cdots,R_K)$, the RZF beam vectors $\{\vbf_i^{RZF}\}$ can be obtained as functions of $\{\alpha_{ji}\}$ by the SOPC method, the rate-tuple can be computed as a function of $\{\vbf_i^{RZF}\}$ by \eqref{eq:R_iOneUser}, and finally the utility function value can be computed as a function of $(R_1,\cdots,R_K)$. Thus, the utility value as a function of $\{\alpha_{ji}\}$ can be computed very efficiently by the SOPC method for the proposed centralized algorithm, and this fact makes it easy to apply a numerical optimization method such as the interior point method to the per-iteration optimization in Table \ref{table:alpha_algorithm}.

\begin{figure}[ht]
\centering
\scalefig{0.5}\epsfbox{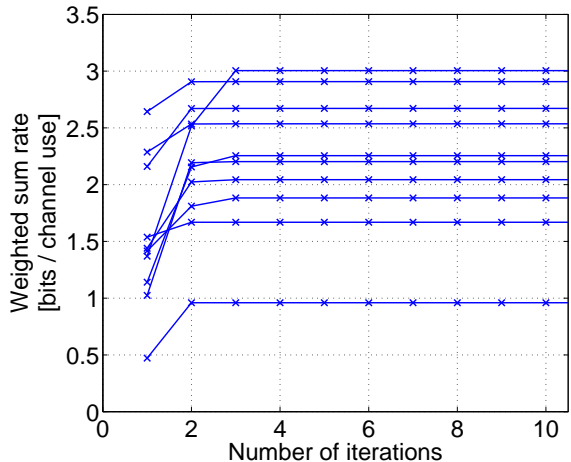}
\caption{ Convergence of the proposed centralized approach ($K=N=2$, $P_i=\sigma_i=1$ for $i=1,2$).}
\label{fig:10_convergence}
\end{figure}

Due to the non-convexity of utility functions w.r.t.
$\{\alpha_{ji}\}$, the convergence of the proposed algorithm to
the global optimum is not guaranteed, but the proposed algorithm
converges to a locally optimal point by the monotone convergence
theorem since  the utility function is upper bounded and the
proposed algorithm yields a monotonically increasing sequence of
utility function values. Furthermore, the proposed algorithm is
also stable by the monotone convergence theorem.
 Fig. \ref{fig:10_convergence} shows the convergence behavior of the proposed utility function based algorithm for 10 different channel realizations when $K=N=2$, $P_i=\sigma_i=1$ ($i=1,2$), and $u(R_1,R_2)=2R_1 + R_2$. It is seen in the figure that the algorithm converges in a few iterations in most cases. Fig. \ref{fig:learning_curves} shows the convergence behavior of several known rate control algorithms for the same setting as in Fig. \ref{fig:10_convergence} for one channel realization.  The considered three algorithms converge to the same value eventually in this case.
It is also seen in Figs. \ref{fig:Pareto1} (a) and (b) that the proposed centralized algorithm yields desired points on the Pareto boundary although it is not theoretically guaranteed.

\begin{table}[!tp]
\caption{A centralized algorithm for determining
$\{\alpha_{ji}\}$.} \label{table:alpha_algorithm} \centering
\begin{tabular}{p{330pt}}
\hline \vspace{-0.8em}

 For given channel realization $\{\hbf_{ji}, i, j=1,\cdots, K\}$,
noise power $\{\sigma_i^2, i=1,\cdots,K\}$, and a utility function
 $u(\{R_i\})$, perform the following procedure to determine interference
 leakage levels $\{\alpha_{ji}\}$.

\vspace{0.5em}
\textbf{Initialization:}
$\{\alpha_{ji}^1=0, ~ i,j = 1,\cdots,K, j\neq i\}$,
$\{R_i^0 = 0, ~i=1,\cdots, K\}$, $\epsilon>0$, and $l=1$.

\vspace{0.5em}
\textbf{while}
$\left|
u(\{\alpha_{ji}^l\} ) - u(\{\alpha_{ji}^{l-1}\})\right| > \epsilon $

\vspace{0.5em}

    \begin{itemize}
                \item[] $l = l+1;$

        \item[] \textbf{for}~ $i=1,\cdots, K$,

        \item[] \hspace{1em} \textbf{for}
        ~ $j=1,\cdots,i-1, i+1, \cdots, K$,

        \item[] \hspace{2.6em}
         $\alpha_{ji}^l ~~=
         \underset{0\le \alpha_{ji} \le P_i|\hbf_{ji}^H\vbf_i^{MF}|^2, }
         {\arg\max}\
         u\Big(\big\{R_k^l(\{\vbf_i^{RZF}(\{\alpha_{ji}^l\})\})\big\}\Big)$

        \item[] \hspace{1em} \textbf{end}

        \item[] \textbf{end}
   \end{itemize}
 \textbf{end}
 \\ \hline
\end{tabular}
\end{table}

\begin{figure}[H]
\centering
\scalefig{0.5}\epsfbox{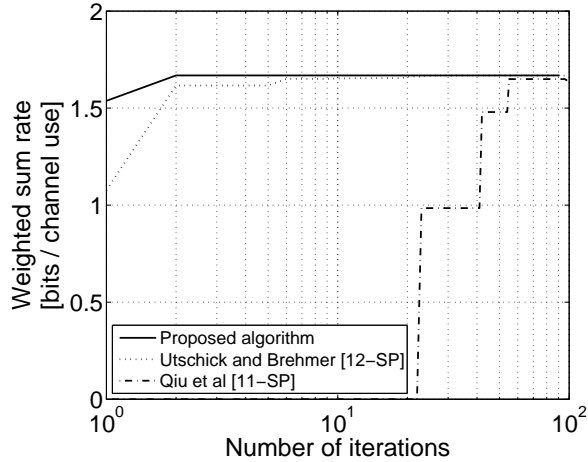}
\caption{Convergence of several known  algorithms ($K=N=2$,
 $P_i=\sigma_i=1$ for $i=1,2$)}.
\label{fig:learning_curves}
\end{figure}

\subsection{A Distributed Heuristic Approach and Practical Considerations}
\label{subsec:distributedControl}

The proposed centralized algorithm in the previous subsection  requires central processing with the knowledge of all
$\{\hbf_{ji}: i,j=1,2,\cdots,K\}$ and
$\{\sigma_{j}^2: j=1,2,\cdots,K\}$. This reduces the practicality
of the centralized approach when communication among the base stations is limited or experiences large delay as in real systems. Note that the RZFCB framework in Problem
\ref{prob:MISO_RZF_formulation2} itself is distributed.
Transmitter $i$ only needs to know $\{\hbf_{ji}, j=1,2,\cdots,K\}$
and $\{\sigma_j^2, j=1,2,\cdots,K\}$ and needs to control
$\{\alpha_{1i},\cdots,\alpha_{i-1,i},\alpha_{i+1,i},
\alpha_{Ki}\}$. In the RZF framework, heuristic rate control is possible with the
knowledge of $\{\hbf_{ji}, j=1,2,\cdots,K\}$ and $\{\sigma_j^2,
j=1,2,\cdots,K\}$ at transmitter $i$.
For fully distributed CB operation with limited inter-base station communication, instantaneous information such as the channel vectors should not be exchanged since inter-base station communication delay is typically larger than the channel coherence time for mobile users. One possible way to roughly control the rate-tuple in the network is to design a table composed of sets of interference relaxation parameters, as in the right side of Fig. \ref{fig:Pareto1}, based on the channel statistics.
When the transmitters form a
coordinating cluster, they can negotiate their rates based on
the requests from their receivers for a communication session. In this phase, one set of interference relaxation levels from the table is picked,  shared among the base stations, and used during the communication session.

Heuristic guidelines to design the parameter table are based on the RZF parameterization itself.
Note that $\epsilon_i =
\sum_{j\ne i} \alpha_{ij}$ in \eqref{eq:interf_power} is the
additional interference power relative to thermal noise power
$\sigma_i^2$ at receiver $i$ and $\epsilon_i =1$ means
that the SINR of receiver $i$ is lower than the SNR of the same
receiver by 3dB. Thus, the designed interference level should not be too high compared to the thermal noise level.
 Furthermore, to (roughly) obtain corner points
of the Pareto boundary of the rate region, another heuristic idea
works. One transmitter should use a nearly MF beam vector, and the
rest of the transmitters should use nearly ZF beam vectors.
More systematic ways based on vast computer simulation can be
considered to design the parameter table. One possible way is as
follows. We first generate a set of channel vectors randomly
according to the channel statistics. For this realized channel
set, we obtain graphs of interference relaxation parameters on the
Pareto boundary. The process is repeated over many different
channel realizations and the best fitting graphs are obtained from
the graphs of interference relaxation parameters of different
channel realizations by some regression model. Finally, the table
is constructed by selecting some points in the best fitting
graphs.  The parameter table in the right side of Fig.
\ref{fig:Pareto1} is obtained in this manner for $K=N=2$
when each element of channel vector is i.i.d. zero-mean
complex Gaussian distributed with unit variance and the SNR is 0
dB. Figures \ref{fig:Pareto1} (a) and (b) show the rate control
performance of the parameter table designed in this manner for two
different channel realizations. It is seen that the heuristic
method performs well; the five rate points are all near the Pareto
boundary for each figure.

\begin{figure*}[t]
\begin{minipage}{0.8\textwidth}
\centerline{
\SetLabels
\L(0.25*-0.1) (a) \\
\L(0.76*-0.1) (b) \\
\endSetLabels
\leavevmode
\strut\AffixLabels{
\scalefig{0.5}\epsfbox{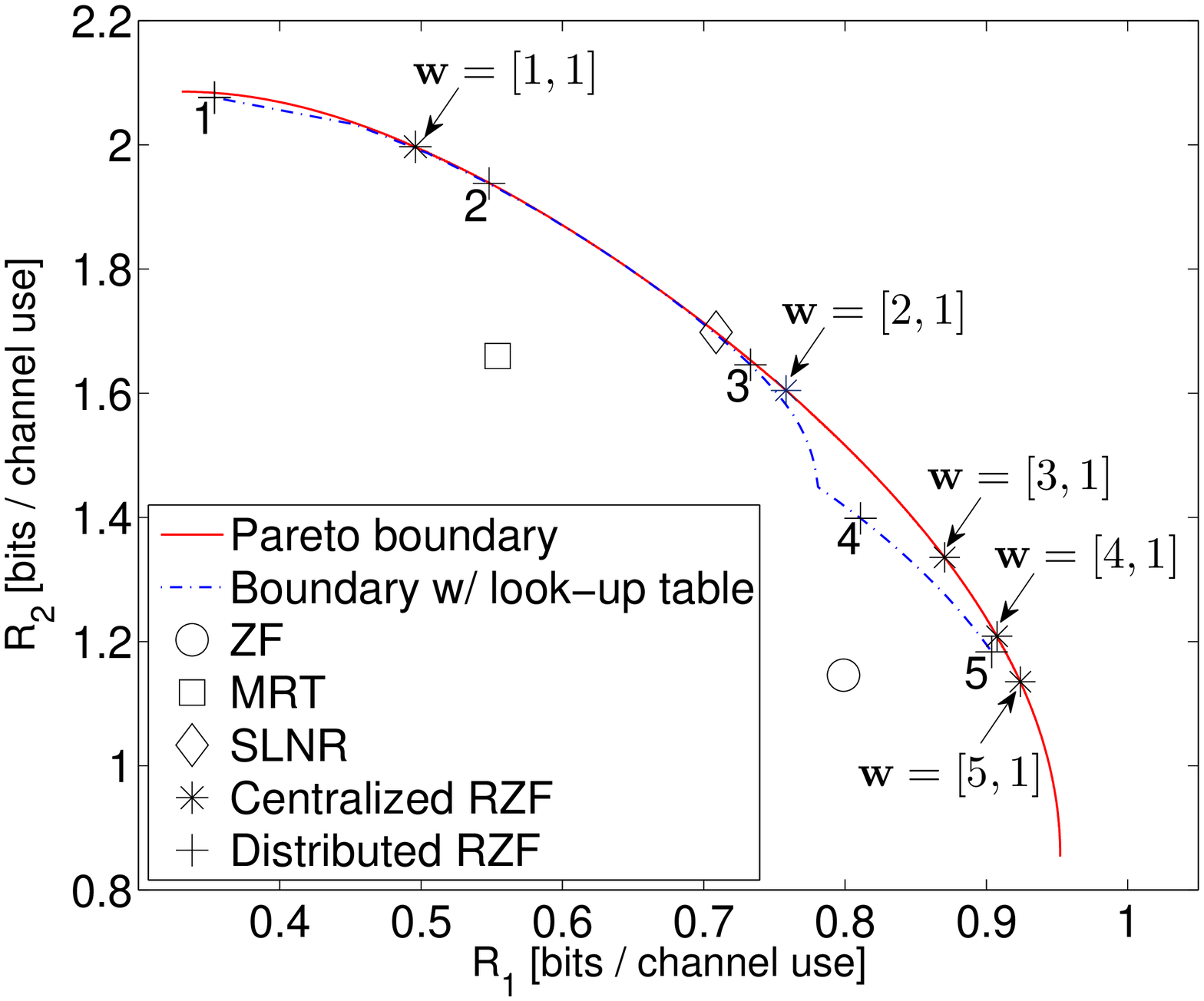}
\hspace{-2em}
\scalefig{0.5}\epsfbox{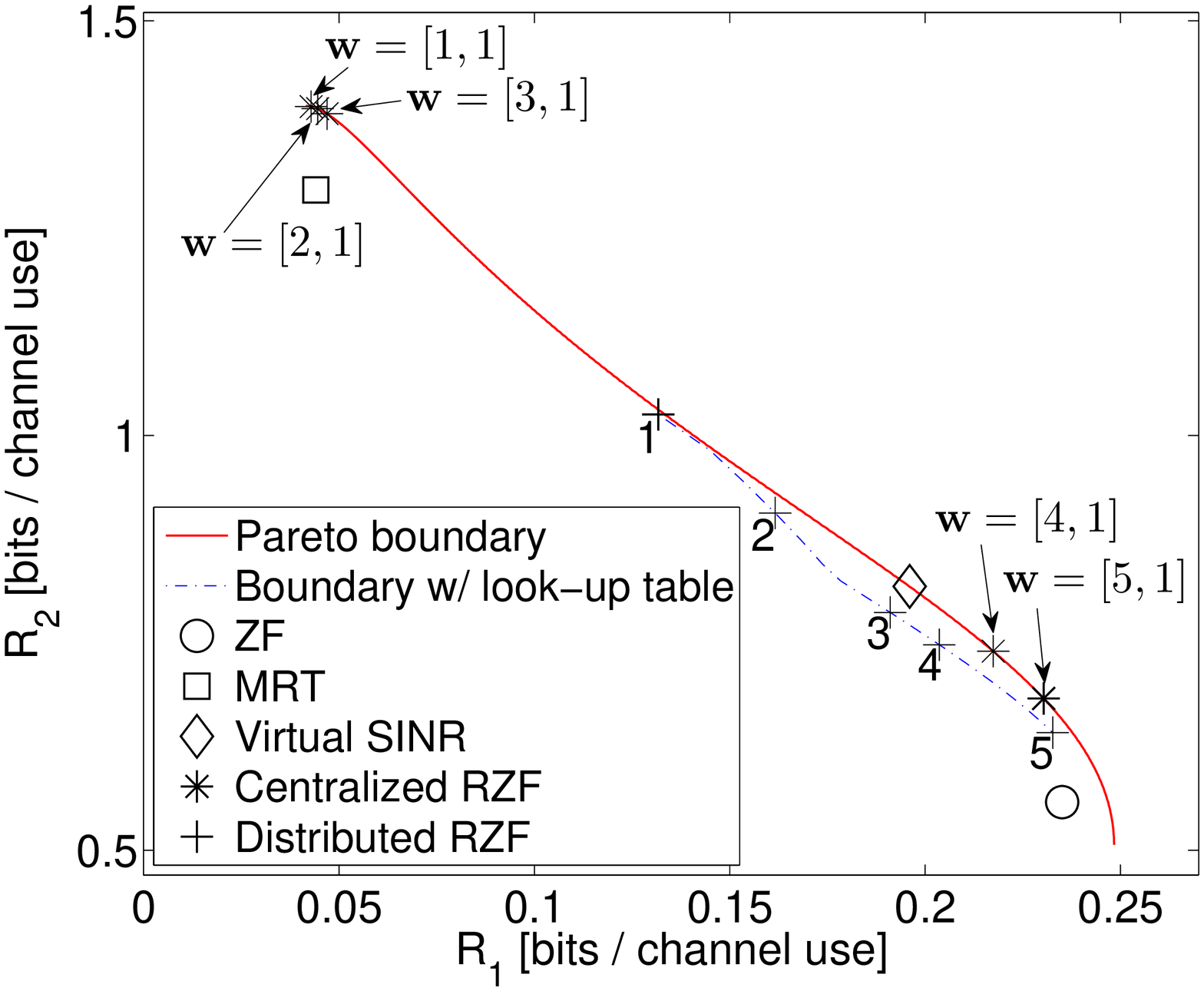}
} }
\vspace{2em}
\end{minipage}
\hspace{-2em}
\begin{minipage}{0.15\textwidth}
{\vspace{-2em}
\begin{tabular}{| c | c | c | }
\hline
 & $\alpha_{21}$ & $\alpha_{12}$  \\ \hline
1 & $ 0 $ & $0.8511$  \\ \hline
2 & $0.0667$ & $0.5780$  \\ \hline
3 & $0.2444$ & $0.3268$  \\ \hline
4 & $1.0889$ & $0.2393$  \\ \hline
5 & $2.2$ & $0.0735$  \\ \hline
\end{tabular}
}
\end{minipage}
\caption{Performance of RZFCB with the proposed rate control
algorithms: The centralized rate control, marked with $*$,
searches for the weighted sum rate maximizing point. (The weight
vector $\wbf$ is shown in the figure.) The distributed rate
control scheme, marked with $+$, sets the interference leakage
levels as shown in the table. In Figs. \ref{fig:Pareto1} (a) and
 (b), 'virtual SINR' denotes the rate-tuple obtained by the virtual SINR
 (or SLNR)  beamforming method in \cite{Zakhour&Gesbert:09WSA}. }
\label{fig:Pareto1}
\end{figure*}

Several advantages in the RZFCB  are
summarized below.

\noindent $\bullet$ Real-time fully distributed operation is possible based on the proposed heuristic control approach. Transmitter $i$ only needs to know
$\{\hbf_{ji}: j=1,2,\cdots,K\}$ and $\{\sigma_j^2:
j=1,2,\cdots,K\}$.

\noindent $\bullet$ Once transmitter $i$ knows
$\{\alpha_{1i},\cdots,\alpha_{i-1,i},\alpha_{i+1,i},
\alpha_{Ki}\}$, there exists a very fast algorithm, the SOPC
algorithm, to design the RZFCB beam vector. Furthermore, in the
case of $K=3$, there is an approximate closed-form solution.

\noindent $\bullet$ Transmitter $i$ knows its SINR and achievable rate exactly,
and its achievable rate is given by
$R_i  = \log\bigg( 1+\frac{| \hbf_{ii}^H\vbf_i
|^2}{(1+\epsilon_i)\sigma_i^2} \bigg)$.
So, transmission based on this rate will be successful with high
probability.  This is true even when $\{\alpha_{ji}\}$ are designed
suboptimally, i.e., away from the Pareto boundary of the rate
region. Thus, the RZFCB scheme is robust.

\noindent $\bullet$ On the contrary to the ZF scheme, RZFCB does not require $N \ge K$.

\section{RZFCB for MIMO Interference Channels}
\label{sec:MIMO_IC}

In this section, we  consider the case that both transmitters and
receivers are equipped with multiple antennas i.e., MIMO
interference channels. In the MIMO case, we consider the weighted sum rate maximization under the RZF framework and then propose a
solution to the MIMO RZFCB based on the projected gradient method
\cite{goldstein64}. The rate control idea in the MISO case can be
applied to the MIMO case too.

\subsection{Problem Formulation} \label{subsec:MIMOformulation}

We assume that each receiver has $M$ receive antennas and each
transmitter has $N$ transmit antennas. In this case, the received
signal at receiver $i$ is given by
\begin{equation}
\ybf_i = \Hbf_{ii}\Vbf_i\sbf_i + \sum_{j\neq i}\Hbf_{ij}\Vbf_j\sbf_j + \nbf_i,
\end{equation}
where $\Hbf_{ij}$ is the $M \times N$ channel matrix from
transmitter $j$ to receiver $i$, $\Vbf_i$ is the $N \times d_i$
beamforming matrix,  $\sbf_i$ is the $d_i \times 1$ transmit
symbol vector at transmitter $i$ from a  Gaussian codebook with
$\sbf_i \sim {\mathcal{CN}}(0,\Ibf_{d_i})$, and
$\nbf_i\sim{\mathcal{CN}}(0,\sigma_i^2\Ibf)$ is the additive noise.
As in the MISO case, we have a transmit power constraint,
$\|\Vbf_j\|_F^2 \le P_j$, for transmitter $j$. The proposed RZF constraint
in the MIMO case is given by an inequality with the Frobenius norm
as
\begin{equation}\label{eq:MIMO_RZF}
 \|\Hbf_{ji} \Vbf_{i} \|_F^2 \le \alpha_{ji}\sigma_j^2,
\quad \forall i, \ j \neq i
\end{equation}
for some constant $\alpha_{ji} \ge 0$. As in the MISO case, the RZF
constraints reduce to ZF constraints when $\alpha_{ji}=0$ for all
$i, j\neq i$. With the MIMO RZF constraints, a cooperative beam
design problem that maximizes the weighted sum rate is formulated as follows:
\vspace{0.5em}
\begin{problem}[RZF  cooperative beamforming problem]
\label{prob:co_MIMO}
\begin{eqnarray}
&\underset{\{\Vbf_i\}}{\mbox{max}} & ~\sum_{i=1}^{K} w_i \log\bigg|
\Ibf_M+(\sigma_i^2\Ibf+\Bbf_i)^{-1}\Hbf_{ii}\Vbf_{i}\Vbf_{i}^H\Hbf_{ii}^H\bigg|,
\label{eq:MIMO_Problem1} \nonumber\\
&\mbox{subject to}&
 {\mathrm{(C. 1)}} \quad \| {\Hbf}_{ji} {\Vbf}_i \|_F^2 \le
\alpha_{ji}\sigma_j^2,
\quad \forall i, j \neq i,  \label{eq:MIMO_RZFprobConstr1} \nonumber \\
&               &
{\mathrm{(C. 2)}} \quad \|\Vbf_{i}\|_F^2\le P_i, \quad \forall i,
\label{eq:MIMO_RZFprobConstr2}
\end{eqnarray}
where $w_i \ge 0$, $\sum_i w_i =1$, and  $\Bbf_i = \sum_{j\neq i}
\Hbf_{ij}\Vbf_{j}\Vbf_{j}^H\Hbf_{ij}^H$ is the interference
covariance matrix at receiver $i$.
\end{problem}

\noindent Note that, in Problem \ref{prob:co_MIMO}, the interference
from other transmitters is incorporated in the rate formula through
the interference covariance matrix $\Bbf_i$ capturing the residual
inter-cell interference under the RZF constraints. As in the MISO
case, we will derive a lower bound on the  rate of each user by exploiting the
RZF constraints to convert the joint design problem into a set of separate design problems. Note that, under the RZF constraints, the total
power of interference from  undesired transmitters is upper bounded
as
\begin{equation}\label{eq:MIMO_interf_power}
\mbox{tr}(\Bbf_i)
= \sum_{j\neq i} \|\Hbf_{ij}\Vbf_{j}\|_F^2
\le \sigma_i^2\sum_{j\neq i} \alpha_{ij}
=: \epsilon_i\sigma_i^2,
\end{equation}
which implies $\Bbf_i \le \epsilon_i\sigma_i^2 \Ibf$.

Hassibi and Hochwald derived a lower bound on the ergodic rate of
a MIMO channel with interference \cite{Hassibi&Hochwald:03IT}.
However, their result is not directly applicable
 here since the rate here is for an instantaneous channel
realization. Thus, we present a new lower bound under the RZF
interference constraints in the following Lemma.

\begin{lemma} \label{lemma:MIMO_LB}
A lower bound on the rate of receiver $i$ under the RZF constraints
is given by
\begin{eqnarray}\label{eq:MIMO_lower_sum_rate}
& &\hspace{-3em} \log\bigg| \Ibf_M+(\sigma_i^2\Ibf+\Bbf_i)^{-1}
\Hbf_{ii}\Vbf_i\Vbf_i^H\Hbf_{ii}^H \bigg| \nonumber \\
& &\hspace{1em} \ge
\log\bigg|\Ibf_M+\frac{1}{\sigma_i^2(1+\epsilon_i)}
\Hbf_{ii}\Vbf_i\Vbf_i^H\Hbf_{ii}^H \bigg|,
\end{eqnarray}
where $\mbox{tr}(\Bbf_i) = \sum_{j\neq i}\|\Hbf_{ij}\Vbf_j\|_F^2 \le \epsilon_i\sigma_i^2$ for all $i$.
\end{lemma}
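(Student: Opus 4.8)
The plan is to convert the hypothesis on $\mbox{tr}(\Bbf_i)$ into a positive-semidefinite ordering of the interference-plus-noise covariance, push that ordering through the determinant after symmetrizing with Sylvester's identity, and conclude with the monotonicity of $\log\det$.

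First I would record the operator inequality already noted right after \eqref{eq:MIMO_interf_power}: since $\Bbf_i=\sum_{j\neq i}\Hbf_{ij}\Vbf_j\Vbf_j^H\Hbf_{ij}^H$ is positive semidefinite, its largest eigenvalue is at most $\mbox{tr}(\Bbf_i)\le\epsilon_i\sigma_i^2$, hence $\Bbf_i\le\epsilon_i\sigma_i^2\Ibf$. Adding $\sigma_i^2\Ibf$ gives $\sigma_i^2\Ibf+\Bbf_i\le\sigma_i^2(1+\epsilon_i)\Ibf$, and since both sides are positive definite, inversion reverses the ordering, so $(\sigma_i^2\Ibf+\Bbf_i)^{-1}\ge\frac{1}{\sigma_i^2(1+\epsilon_i)}\Ibf$.

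The matrix $(\sigma_i^2\Ibf+\Bbf_i)^{-1}\Hbf_{ii}\Vbf_i\Vbf_i^H\Hbf_{ii}^H$ appearing inside the determinant on the left of \eqref{eq:MIMO_lower_sum_rate} is not Hermitian, so determinant monotonicity cannot be applied to it directly; this is the one step that needs care. To get around it I would invoke Sylvester's determinant identity $\det(\Ibf+\mathbf{M}\mathbf{N})=\det(\Ibf+\mathbf{N}\mathbf{M})$, moving $(\sigma_i^2\Ibf+\Bbf_i)^{-1}\Hbf_{ii}\Vbf_i$ past $\Vbf_i^H\Hbf_{ii}^H$ (and similarly on the right of \eqref{eq:MIMO_lower_sum_rate}), which yields
\begin{align*}
\log\bigl|\Ibf_M+(\sigma_i^2\Ibf+\Bbf_i)^{-1}\Hbf_{ii}\Vbf_i\Vbf_i^H\Hbf_{ii}^H\bigr|
&=\log\bigl|\Ibf_{d_i}+\Vbf_i^H\Hbf_{ii}^H(\sigma_i^2\Ibf+\Bbf_i)^{-1}\Hbf_{ii}\Vbf_i\bigr|, \\
\log\Bigl|\Ibf_M+\tfrac{1}{\sigma_i^2(1+\epsilon_i)}\Hbf_{ii}\Vbf_i\Vbf_i^H\Hbf_{ii}^H\Bigr|
&=\log\Bigl|\Ibf_{d_i}+\tfrac{1}{\sigma_i^2(1+\epsilon_i)}\Vbf_i^H\Hbf_{ii}^H\Hbf_{ii}\Vbf_i\Bigr|.
\end{align*}
Both $d_i\times d_i$ matrices on the right are now Hermitian, and congruence by $\Hbf_{ii}\Vbf_i$ applied to the inequality of the previous paragraph gives $\Vbf_i^H\Hbf_{ii}^H(\sigma_i^2\Ibf+\Bbf_i)^{-1}\Hbf_{ii}\Vbf_i\ge\frac{1}{\sigma_i^2(1+\epsilon_i)}\Vbf_i^H\Hbf_{ii}^H\Hbf_{ii}\Vbf_i$.

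Finally I would add $\Ibf_{d_i}$ to both sides (which preserves the ordering and makes both sides positive definite) and use monotonicity of the determinant on the positive-definite cone: if $\Abf\ge\Bbf$ with $\Bbf$ positive definite, then every eigenvalue of $\Bbf^{-1/2}\Abf\Bbf^{-1/2}$ is at least $1$, so $|\Bbf^{-1/2}\Abf\Bbf^{-1/2}|\ge1$, i.e.\ $|\Abf|\ge|\Bbf|$ and $\log|\Abf|\ge\log|\Bbf|$. Applying this with $\Abf=\Ibf_{d_i}+\Vbf_i^H\Hbf_{ii}^H(\sigma_i^2\Ibf+\Bbf_i)^{-1}\Hbf_{ii}\Vbf_i$ and $\Bbf=\Ibf_{d_i}+\frac{1}{\sigma_i^2(1+\epsilon_i)}\Vbf_i^H\Hbf_{ii}^H\Hbf_{ii}\Vbf_i$, and then reading the two displayed equalities backwards, gives exactly \eqref{eq:MIMO_lower_sum_rate}. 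The only genuine obstacle is the non-Hermiticity of the matrix in the rate expression; once that is removed by the Sylvester-identity rewriting, everything reduces to a routine chain of matrix-order manipulations.
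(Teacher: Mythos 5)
Your proof is correct, and it takes a genuinely different route from the paper's. The paper works eigenvalue by eigenvalue: writing $\Phibf_i=\sigma_i^2\Ibf+\Bbf_i$ and $\Abf_i=\Hbf_{ii}\Vbf_i\Vbf_i^H\Hbf_{ii}^H$, it expands $\log|\Ibf+\Phibf_i^{-1}\Abf_i|=\sum_k\log(1+\lambda_k(\Phibf_i^{-1}\Abf_i))$ and then uses the Rayleigh--Ritz and Courant--Fischer variational characterizations (applied to the Hermitian matrix $\Phibf_i^{-1/2}\Abf_i\Phibf_i^{-H/2}$, restricting the feasible set of test vectors) to establish the per-eigenvalue bound $\lambda_k(\Phibf_i^{-1}\Abf_i)\ge\lambda_k(\Abf_i)/\lambda_1(\Phibf_i)$, finishing with $\lambda_1(\Phibf_i)\le\sigma_i^2+\mbox{tr}(\Bbf_i)\le\sigma_i^2(1+\epsilon_i)$. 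You instead stay entirely at the level of the Loewner order: $\Bbf_i\le\epsilon_i\sigma_i^2\Ibf$ gives $(\sigma_i^2\Ibf+\Bbf_i)^{-1}\ge\frac{1}{\sigma_i^2(1+\epsilon_i)}\Ibf$, Sylvester's identity symmetrizes both determinants into $d_i\times d_i$ Hermitian forms, congruence by $\Hbf_{ii}\Vbf_i$ transports the ordering, and $\log\det$ monotonicity on the positive-definite cone closes the argument. Each step of yours is sound (in particular, the non-emptiness and order-preservation issues are all handled), and you correctly identify and dispose of the only delicate point, the non-Hermiticity of $\Phibf_i^{-1}\Abf_i$; the paper handles the same issue by passing to the symmetrized matrix $\Phibf_i^{-1/2}\Abf_i\Phibf_i^{-H/2}$ inside the variational argument. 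Your route is shorter and arguably cleaner for the stated lemma; the paper's route yields the stronger intermediate fact that \emph{every} generalized eigenvalue satisfies $\lambda_k(\Phibf_i^{-1}\Abf_i)\ge\lambda_k(\Abf_i)/\lambda_1(\Phibf_i)$, which is more information than the determinant inequality alone, though that extra strength is not used elsewhere in the paper.
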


\begin{proof}
The rate at receiver $i$ is given by
\begin{equation}
\log|\Ibf+\Phibf_i^{-1}\Abf_i| = \log\prod_{k=1}^M
(1+\lambda_k(\Phibf_i^{-1}\Abf_i))
\end{equation}
where $\Phibf_i=\sigma_i^2\Ibf+\Bbf_i$,
$\Abf_i=\Hbf_{ii}\Vbf_i\Vbf_i^H\Hbf_{ii}^H$, and $\lambda_k(\Xbf)$
denotes the $k$-th largest eigenvalue of $\Xbf$.  By the
Rayleigh-Ritz theorem \cite[p.176]{Horn&Johnson:book}, we have
\begin{equation}
\lambda_M(\Phibf_i^{-1}\Abf_i) \le
\frac{\xbf^H\Abf_i\xbf}{\xbf^H\Phibf_i\xbf} =
\frac{\pbf^H\Phibf_i^{-\frac{1}{2}}\Abf_i\Phibf_i^{-\frac{H}{2}}\pbf}{\pbf^H\pbf}
\le \lambda_1(\Phibf_i^{-1}\Abf_i)
\end{equation}
for any non-zero vector $\xbf\in{\mathbb{C}}^M$ and
$\pbf:=\Phibf_i^{\frac{H}{2}}\xbf$. From the Courant-Fischer
theorem \cite[p.179]{Horn&Johnson:book}, the $k$-th largest
generalized eigenvalue of $\Phibf_i^{-1}\Abf_i$, $k=1,\cdots, M$
is given by
\begin{eqnarray}
\lambda_k(\Phibf_i^{-1}\Abf_i) ~~~~ =
\underset{\substack{\pbf\neq 0,~ \pbf\in{\mathbb{C}}^M,\\
\pbf\bot\pbf_1,\cdots,\pbf_{k-1}}}{\max}
\frac{\pbf^H\Phibf_i^{-\frac{1}{2}}\Abf_i\Phibf_i^{-\frac{H}{2}}\pbf}{\pbf^H\pbf}
\end{eqnarray}
where $\pbf_i$ is the eigenvector associated with the $i$-th
largest eigenvalue of
$\Phibf_i^{-\frac{1}{2}}\Abf_i\Phibf_i^{-\frac{H}{2}}$. Let
$\Abf_i=\Ubf_i\Sigmabf_i\Ubf_i^H$ be the eigen-decomposition of
$\Abf_i$, where $\Sigmabf_i =
\mbox{diag}(\lambda_1(\Abf_i),\cdots,\lambda_M(\Abf_i))$. Then,
for all $k$
{\small
\begin{eqnarray} \lambda_k(\Phibf_i^{-1}\Abf_i) &=&
~~~~
\underset{\substack{\pbf\neq 0,~ \pbf\in{\mathbb{C}}^M,\\
\pbf\bot\pbf_1,\cdots,\pbf_{k-1}}}{\max}
\frac{\pbf^H\Phibf_i^{-\frac{1}{2}}\Abf_i\Phibf_i^{-\frac{H}{2}}\pbf}{\pbf^H\pbf}, \nonumber \\
&=& ~~~~ \underset{\substack{\pbf\neq 0,~ \pbf\in{\mathbb{C}}^M,\\
\pbf\bot\pbf_1,\cdots,\pbf_{k-1}}}{\max} \frac{
\pbf^H\Phibf_i^{-\frac{1}{2}}\Ubf_i
       \Sigmabf_i
       \Ubf_i^H\Phibf_i^{-\frac{H}{2}}\pbf }
     {\pbf^H\pbf}, \nonumber \\
&=& \underset{\substack{\zbf\neq 0,~ \zbf\in{\mathbb{C}}^M,\\
{\tiny \Phibf_i^{H/2}}\Ubf_i\zbf\bot \pbf_1,\cdots,\pbf_{k-1}
}}{\max}
\frac{\zbf^H\Sigmabf_i\zbf}{\zbf^H\Ubf_i^H\Phibf_i\Ubf_i\zbf}, \nonumber \\
& & \qquad\quad (\zbf:= \Ubf_i^H\Phibf_i^{-\frac{H}{2}}\pbf), \nonumber \\
&\stackrel{(a)}{\ge}&
\underset{\substack{\zbf \ne 0,~ ||\zbf||=1,~ \zbf\in{\mathbb{C}}^M,\\
z_{k+1} = z_{k+2} = \cdots = z_M = 0, \\
{\tiny \Phibf_i^{H/2}}\Ubf_i\zbf\bot
\pbf_1,\cdots,\pbf_{k-1},}}{\max}
\frac{\zbf^H\Sigmabf_i\zbf}{\zbf^H\Ubf_i^H\Phibf_i\Ubf_i\zbf}, \nonumber \\
& & \qquad\quad (\zbf=[z_1,z_2,\cdots,z_M]^T)\nonumber \\
&=&
\underset{\substack{\zbf \ne 0,~ ||\zbf||=1,~ \zbf\in{\mathbb{C}}^M,\\
z_{k+1} = z_{k+2} = \cdots = z_M = 0, \\
{\tiny \Phibf_i^{H/2}}\Ubf_i\zbf\bot
\pbf_1,\cdots,\pbf_{k-1},}}{\max}
\frac{\sum_{j=1}^k\lambda_j(\Abf_i)|z_j|^2}
{\zbf^H\Ubf_i^H\Phibf_i\Ubf_i\zbf }, \nonumber \\
&\stackrel{(b)}{\ge}&
\frac{\lambda_k(\Abf_i)}{\lambda_1(\Phibf_i)},
\label{eq:MIMOeigenBound}
\end{eqnarray}
}
where (a) is satisfied since the feasible set for $\zbf$ is reduced and (b) is satisfied
since $||\zbf||^2=|z_1|^2+\cdots+|z_k|^2=1$, $\lambda_1(\Abf_i)\ge \cdots \ge
\lambda_k(\Abf_i)$, and $\zbf^H\Ubf_i^H\Phibf_i\Ubf_i\zbf \le
\lambda_1(\Phibf_i)$ by Rayleigh-Ritz theorem.
Based on \eqref{eq:MIMOeigenBound}, a lower bound on the
rate is given by
\begin{align}
\log|\Ibf+\Phibf_i^{-1}\Abf_i|
\ge
\left(1+\frac{\lambda_k(\Abf_i)}{\lambda_1(\Phibf_i)}\right).
\end{align}
Since $\Phibf_i = \sigma_i^2\Ibf + \sum_{j\neq i} \Hbf_{ij}\Vbf_j\Vbf_j^H\Hbf_{ij}^H$, we have
$\lambda_1(\Phibf_i) = \sigma_i^2 + \lambda_1\left(\sum_{j\neq i}
\Hbf_{ij}\Vbf_j\Vbf_j^H\Hbf_{ij}^H\right)$,  where the maximum
eigenvalue of the interference covariance matrix is upper bounded
by
$\lambda_1\Big(\sum_{j\neq i}
\Hbf_{ij}\Vbf_j\Vbf_j^H\Hbf_{ij}^H\Big)
 \le \mbox{tr}\Big(\sum_{j\neq i} \Hbf_{ij}\Vbf_j\Vbf_j^H\Hbf_{ij}^H\Big)
= \sum_{j\neq i}\alpha_{ij}\sigma_i^2=\epsilon_i\sigma_i^2$.
Thus,  a lower bound of rate at receiver $i$ is given by
\[
|\Ibf+\Phibf_i^{-1}\Abf_i|
\ge
\Big|\Ibf_M+\frac{1}{\sigma_i^2(1+\epsilon_{i})}\Abf_i\Big|.
\]
\end{proof}

\noindent Note that in \eqref{eq:MIMO_lower_sum_rate} the
inter-user dependency is removed and the beam design can be
performed at each transmitter in a distributed manner. Based on
the lower bound \eqref{eq:MIMO_lower_sum_rate}, the RZFCB problem
is now formulated as a distributed problem:
\begin{problem}[The MIMO RZFCB problem] \label{prob:MIMORZFCBProblem}
\begin{eqnarray} \label{eq:MIMOProblem}
&\underset{\Vbf_i}{\mbox{max}}& \phi_i(\Vbf_i) :=
\log\left|\Ibf_M+\frac{1}{\sigma_i^2(1+\epsilon_i)}\Hbf_{ii}\Vbf_{i}\Vbf_{i}^H\Hbf_{ii}^H
\right|, \nonumber \\
&\mbox{subject to}& {\mathrm{(C.1)}} \quad \|\Hbf_{ji} \Vbf_{i}\|_F^2 \le
\alpha_{ji}\sigma_j^2, \quad \forall j\neq i,      \nonumber \\
&               & {\mathrm{(C.2)}} \quad \|\Vbf_{i}\|_F^2\le P_i,
\end{eqnarray}
for each transmitter $i = 1, 2, \cdots, K$.
\end{problem}

\noindent Note that Problem \ref{prob:MIMORZFCBProblem} is now fully distributed.
One of several known algorithms for constrained optimization
can be used to solve Problem \ref{prob:MIMORZFCBProblem} for given
$\{\alpha_{ji}\}$. In particular, we choose to use the projected
gradient method (PGM) by Goldstein \cite{goldstein64}. The proposed PGM-based beam design algorithm for MIMO ICs is provided in
Table \ref{table:algorithm}. Detailed explanation of the beam design with PGM algorithm is
provided in \cite{Park&Lee&Sung&Yukawa:12Arxiv}.

\begin{figure*}[t]
\centerline{
\SetLabels
\L(0.16*-0.1) (a) \\
\L(0.49*-0.1) (b) \\
\L(0.82*-0.1) (c) \\
\endSetLabels
\leavevmode
\strut\AffixLabels{
\scalefig{0.33}\epsfbox{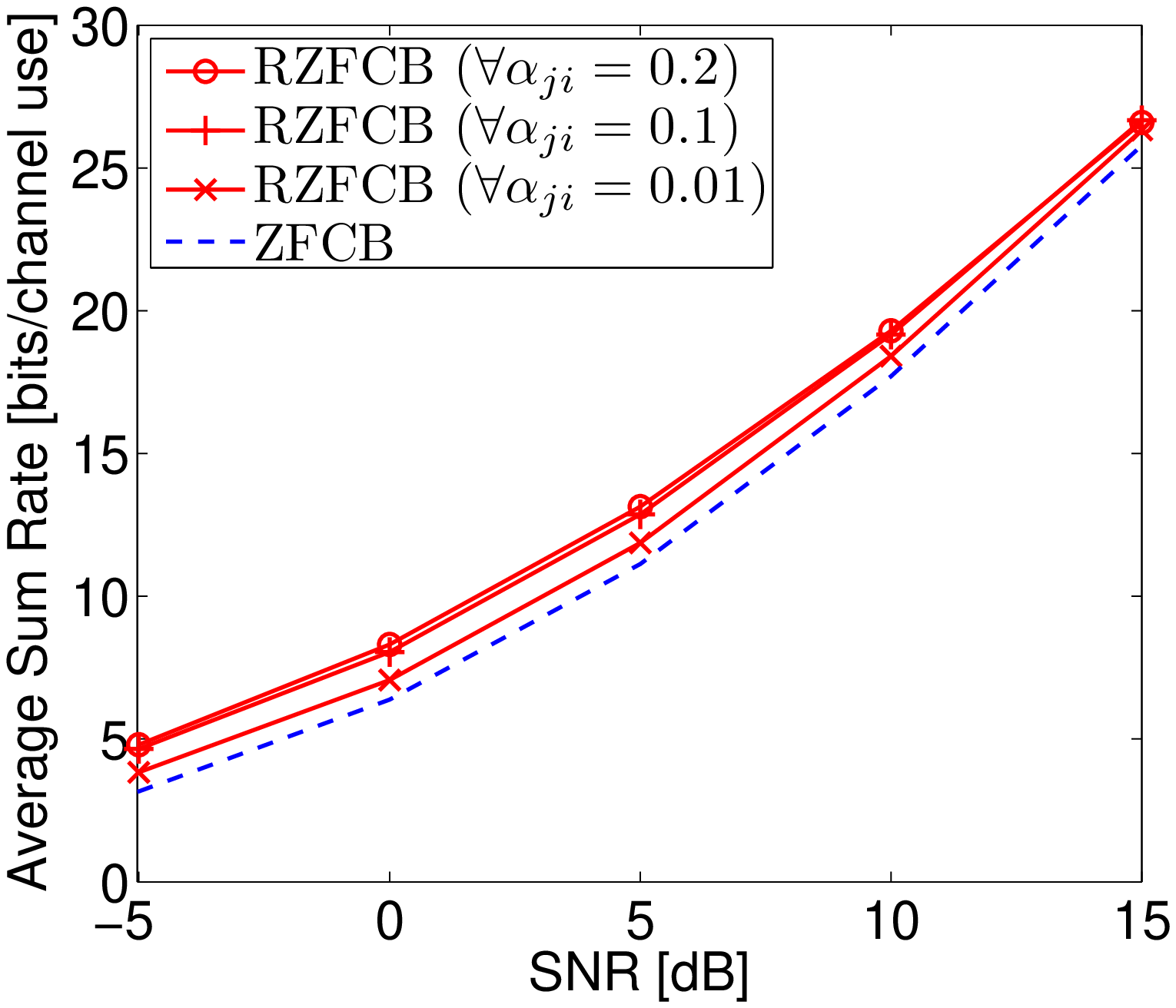}
\scalefig{0.33}\epsfbox{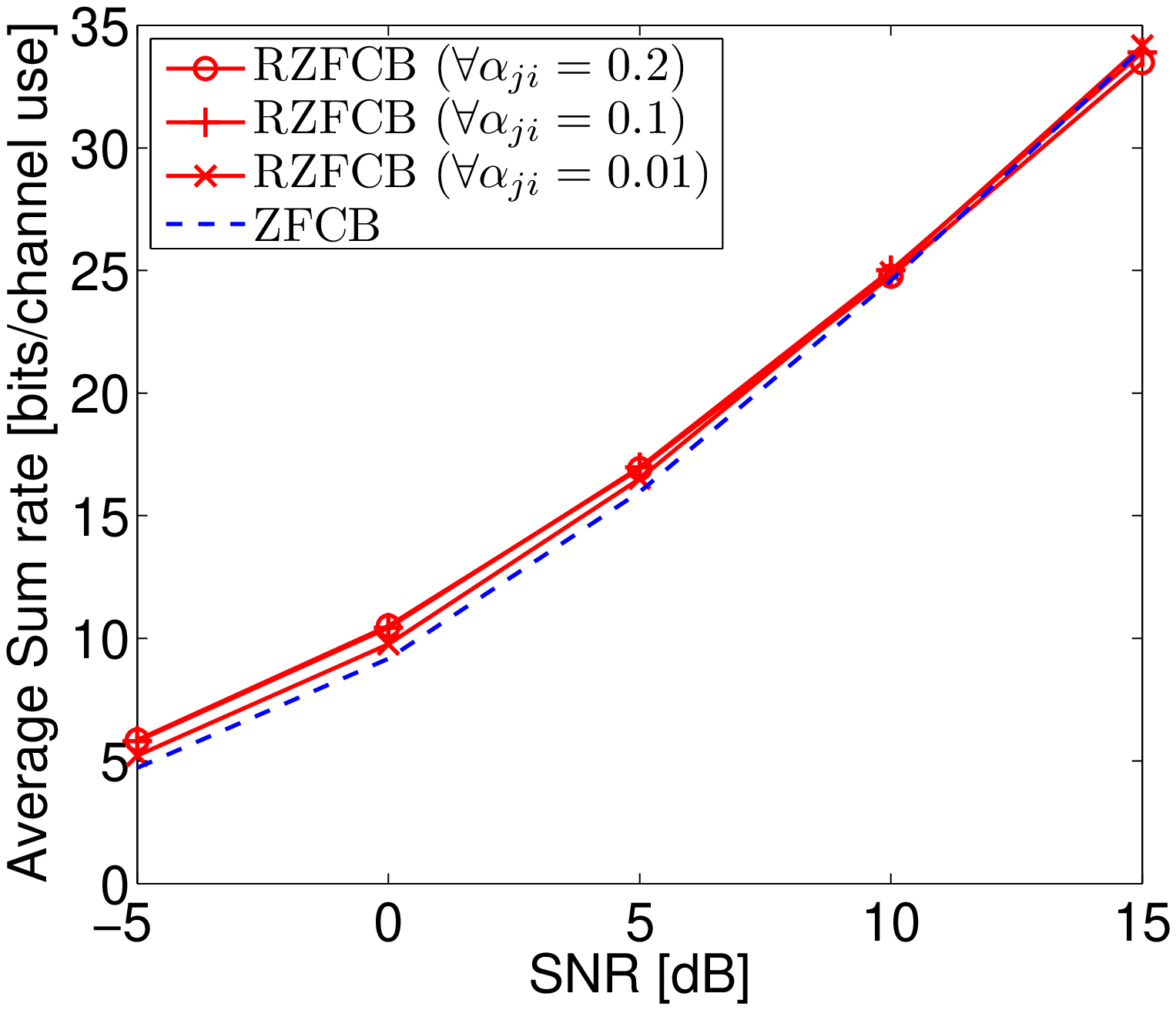}
\scalefig{0.33}\epsfbox{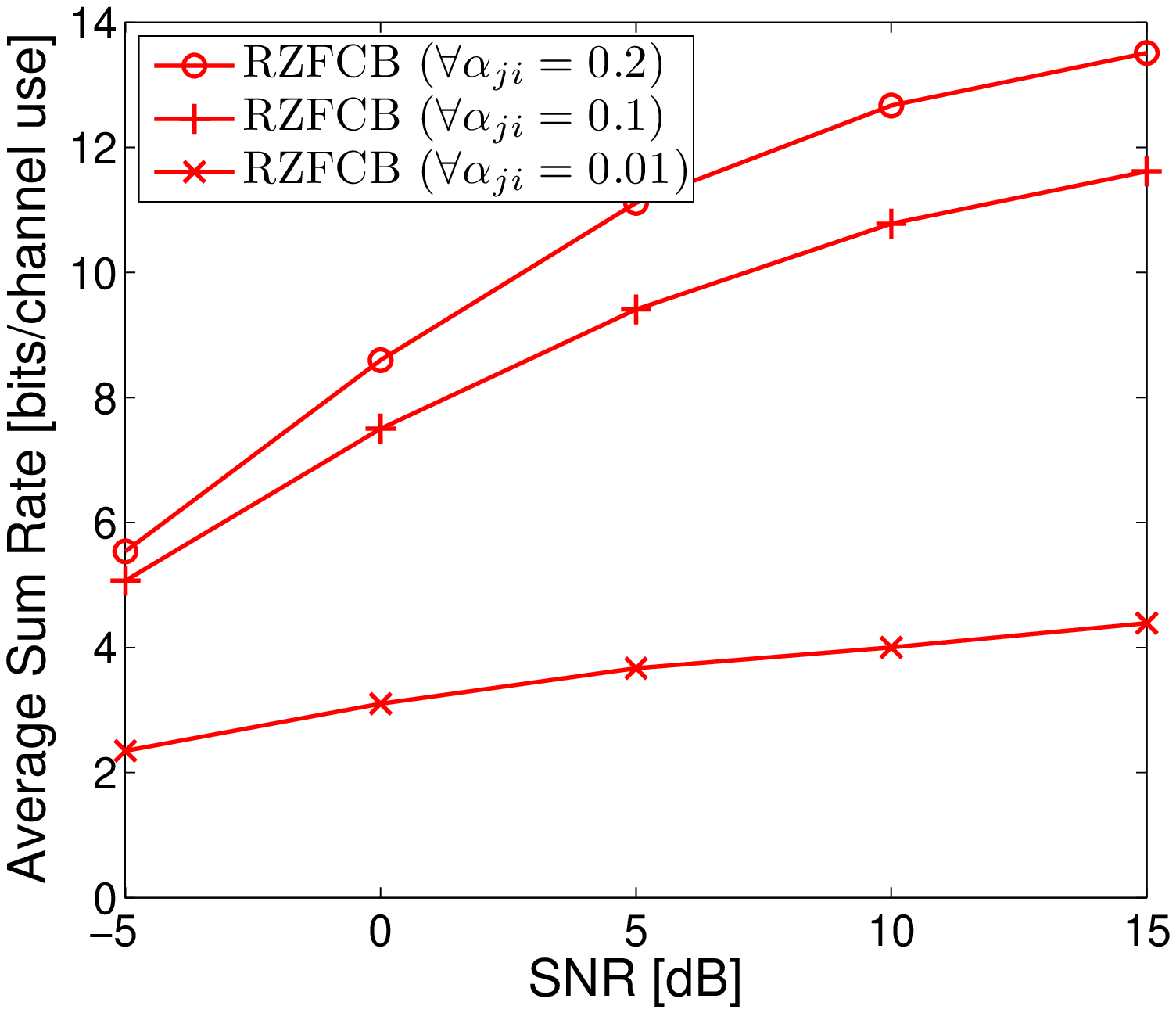} } }
\vspace{1.5em} \caption{Sum rate of RZFCB: (a) (K,M,N)=(3, 2, 6),
(b) (K,M,N)=(3, 2, 8), and (c) (K,M,N)=(4, 2, 6).} \label{fig:sumrate}
\end{figure*}

\begin{table}[ht]
\caption{Beam design algorithm for MIMO IC using PGM.}
\label{table:algorithm} \centering
\begin{tabular}{p{310pt}}
\hline \vspace{-0.8em}
\normalsize{  For each transmitter $i\in\{1,\cdots, K\}$,

\textit{0.} Initialize $\Vbf_i$ as the ZF beamforming matrix.

\textit{1.} Compute gradient of $\phi(\Vbf_i)$.

\textit{2.} Perform a steepest descent shift of $\Vbf_i$.

\textit{3.} Perform successive metric projections of $\Vbf_i$ onto
 constraint sets.

\textit{4.} Go to Step 1 and repeat until the relative difference
of $\phi_i(\Vbf_i)$ is

~~ less than a pre-determined threshold. }
\\
\hline
\end{tabular}
\end{table}

\vspace{-2em}

\subsection{Numerical Results}

In this section, we provide some numerical results for the
performance of RZFCB in the MIMO case.
We consider three MIMO
interference channels  with system parameters $(K,M,N)=(3, 2, 6)$,
$(3, 2, 8)$, and $(4, 2, 6)$. In each case, we set $\alpha_{ji}=0.01, 0.1$ and $0.2$ for all $i$ and $j$. The step size parameter
for the PGM is chosen to be $0.01$ for all iterations.
Figures \ref{fig:sumrate} (a), (b), and (c) show the sum rate
performance of the ZFCB and RZFCB averaged over 30 independent
channel realizations. In Fig. \ref{fig:sumrate} (a) it is seen that  the RZFCB
outperforms the ZFCB at all SNR and the gain of the RZFCB over the
ZFCB at low SNR is large when $N=KM$. This large gain at low SNR is especially
important because most cell-edge receivers operate in the low SNR
regime. In Fig. \ref{fig:sumrate} (b) it is seen that the ZF scheme performs well when the number of TX antenna is more than enough and the dimension of ZF beams is large, as expected.  In the case of $N < KM$ as in Fig. (c), the ZFCB is infeasible but the RZFCB still works well.

\section{Conclusion}
\label{sec:conclusion}

We have considered coordinated beamforming for MISO  and MIMO
interference channels under the RZF framework. In the MISO case, we
have shown that the SOPC strategy with a set of well chosen
interference relaxation levels is Pareto-optimal. We have provided (approximate) closed-form solutions for the
SOPC strategy in the cases of two and three users and the SOPC
algorithm in the general case for a given set of
interference relaxation levels. In the MIMO case, we have
formulated the RZFCB problem  as a distributed optimization
problem  based on a newly derived rate lower bound and have
provided an algorithm based on the PGM to solve the MIMO RZFCB
beam design problem. We have also considered the rate control
problem under the RZFCB framework and have provided a centralized
approach and a fully-distributed heuristic approach to control the
rate-tuple location roughly along the Pareto boundary of the achievable
rate region. Numerical results validate the RZFCB paradigm.

\vspace{-0.5em}


\end{document}
